%
\documentclass{llncs}
\usepackage{etex}

\usepackage{makeidx}  

\usepackage{verbatim}
\usepackage{graphicx}
\usepackage{epstopdf}
\usepackage{bussproofs}
\usepackage{color} 
\usepackage{dsfont}
\usepackage{stmaryrd}
\usepackage{mathtools}
\usepackage{amssymb}
\usepackage{hyperref}
\usepackage{enumerate}
\usepackage[all]{xy}
\usepackage{relsize}
\usepackage{longtable}
\usepackage[]{units}
\usepackage{tikz}
\usetikzlibrary{matrix}
\usepackage{subfig}
\usepackage{array}
\usetikzlibrary{calc,arrows,positioning}
\usepackage{color}  
\usepackage{framed}  


\newtheorem{teorema}{Teorema}[section]

\newtheorem{convention}[teorema]{Convention}


\definecolor{shadecolor}{gray}{0.9}

{
\begin{figure}[h!]
\fbox{%
           \begin{minipage}{0.460\textwidth}
#1
\end{minipage}}
\end{figure}
 }

\newcommand{\bydef}{ \stackrel{\mathrm{def}}{=} }

\newcommand{\freccia}[1]{\stackrel{#1}{\longrightarrow}}

\newcommand{\lfp}{\textnormal{lfp}}
\newcommand{\gfp}{\textnormal{gfp}}

\newcommand{\R}{\textbf{R}}
\newcommand{\K}{\textbf{K}}

\newcommand{\sem}[1] {  \llbracket #1 \rrbracket  }  
\newcommand{\gsem}[1]{\llparenthesis{\,#1\,}\rrparenthesis}

\newcommand\node[1]{*+[o]{#1}}

\newcommand\addLabelUL[1]{\ar@{}[]+UR|(1){~\makebox[0pt][l]{$\mathbf{#1}$}}}

\newcommand\addLabelUR[1]{\ar@{}[]+UR|(1){~\makebox[0pt][l]{$\mathbf{#1}$}}}

\newcommand\addLabelDL[1]{\ar@{}[]+UR|(1){~\makebox[0pt][l]{$\mathbf{#1}$}}}

\newcommand\addLabelDR[1]{\ar@{}[]+UR|(1){~\makebox[0pt][l]{$\mathbf{#1}$}}}

\newcommand\addDMD[2]{
	\ar@{-}[]+<#1pt,0pt>;[]+<0pt,#2pt>
	\ar@{-}[]+<0pt,#2pt>;[]+<-#1pt,0pt>
	\ar@{-}[]+<-#1pt,0pt>;[]+<0pt,-#2pt>
	\ar@{-}[]+<0pt,-#2pt>;[]+<#1pt,0pt>
}

\newcommand{\catset}{{\normalfont\textbf{Set}}}

\begin{document}
\frontmatter          
\mainmatter              

\title{Upper-Expectation Bisimilarity\\ and Real-valued Modal Logics}
%
%
\author{Matteo Mio \thanks{
The author would like to thank Alex Simpson, Jan Rutten, Marcello Bonsangue,  Helle Hvid Hansen, Henning Basold, Alexandra Silva and the anonymous reviewers for their helpful comments and suggestions.}
\thanks{
The author carried out this work during the tenure of an ERCIM ÒAlain BensoussanÓ Fellowship, supported by the Marie Curie Co-funding of Regional, National and International Programmes
 (COFUND) of the European Commission.}
}

\authorrunning{Matteo Mio} 
%
%
\institute{CWI-Amsterdam,\\
\email{miomatteo@gmail.com}}

\maketitle              

\begin{abstract}
Several notions of bisimulation relations for probabilistic nondeterministic transition systems 
have been considered in the literature. We consider a novel testing-based behavioral equivalence called \emph{upper-expectation bisimilarity} and develop its theory using standard results from linear algebra and functional analysis. We show that, for a wide class of systems, our new notion coincides with Segala's \emph{convex bisimilarity}. We develop logical characterizations in terms of expressive probabilistic modal $\mu$-calculi and a novel real-valued modal logic. We prove that upper-expectation bisimilarity is a congruence for the wide family of process algebras specified following the probabilistic GSOS rule format.

\keywords{Quantitative Modal Logics, Bisimulation, Convexity.}
\end{abstract}

\section{Introduction}\label{sec_introduction}
Directed-graph structures are sufficient for modeling nondeterministic programs and concurrent systems but can not be used to represent other important aspects of computation, such as \emph{probabilistic behavior}, \emph{timed transitions} and other \emph{quantitative} information one might need to express. To address this limitation, since the late 80's, a lot of research has focused on the identification of appropriate structures for expressing these quantitative aspects  (see, e.g., \cite{LS91,H94,BA1995}), and in particular for modeling probabilistic behavior. One of the most successful such models  is today known under several names: (simple) Segala systems \cite{BartelsThesis}, concurrent Markov chains \cite{H94}, probabilistic automata \cite{S95} or just \emph{probabilistic nondeterministic transition systems} (PNTS). Today PNTS's are the mathematical structures, generalizing standard nondeterministic transition systems (NTS), most often used to provide  operational semantics to probabilistic and nondeterministic languages \cite{HP2000,KNPV2009,BartelsThesis}. 

A central concept in the theory of programming languages and concurrent systems is the notion of \emph{behavioral equivalence}. An equivalence relation $\simeq\ \subseteq\! S\times S$ between states of a system is, informally speaking, a behavioral equivalence if $s\simeq t$ implies that $s$ and $t$ satisfy the same class of properties of interest. Of course different classes of properties induce different notions of equivalence. The paradigmatic example of behavioral equivalence for ordinary NTS's is Milner and Park's \emph{bisimilarity} \cite{M80}. Among its good properties, bisimilarity enjoys the following: $\textbf{B1)}$ two states are bisimilar if and only if they satisfy the same properties expressed, e.g., in the modal $\mu$-calculus \cite{Kozen83} or in other (weaker but useful in practice) branching-time logics such as CTL, $\textnormal{CTL}^{*}$ \cite{Stirling96}, and the basic modal logic $\K$ \cite{BdRVModal} or its labeled version, the Hennessy--Milner modal logic \cite{HM85}. %
Furthermore, $\textbf{B2)}$ bisimilarity is a congruence for a wide family of process algebras, all specified following one of the many rule formats: GSOS \cite{BIM95}, tyft/tyxt \cite{GV1989}, \emph{etc}. This means that if programs $p$ and $q$ are bisimilar then, for any program context $C[x]$, the two composite programs $C[p]$ and $C[q]$ are also bisimilar. This is the crucial property that authorizes the substitution of equals for equals in programs, a process of fundamental importance in, e.g., optimization and compositional program development and verification. $\textbf{B3)}$ Bisimilarity also enjoys a rich mathematical theory based on coinduction and plays a role of paramount importance in \emph{coalgebra} \cite{SanRut2011}. 
Lastly, but importantly, $\textbf{B4)}$ bisimilarity can also be explained in terms of Milner's standard metaphor of \emph{push-buttons experiments} on systems \cite{M80}. Such experiments provide an abstract, yet intuitive, testing semantics for bisimilarity.

In the context of PNTS's several notions of behavioral equivalence, based on the technical machinery of coinduction, have been considered in the literature \cite{S95,LS95,SZG2011,BdNM2012,deAlfaro2008,DvG2010}. 
The one which has attracted most attention so far was introduced by Segala in \cite{S95} and is referred to in this paper as \emph{standard bisimilarity}. This is a mathematically natural notion (cf. B3) and, indeed, it has been rediscovered using the methods of coalgebra theory (see., e.g., \cite{Sokalova2011}). Importantly, standard bisimilarity is a congruence relation (cf. B2) for the wide class of  PGSOS process algebras, which virtually includes all Milner's CCS-style (probabilistic) process operators of practical interest \cite{BartelsThesis}. However standard bisimilarity happens to be strictly finer than the equivalence induced by important temporal logics for expressing useful properties of PNTS, such as,  PCTL, $\textnormal{PCTL}^*$ \cite{BA1995,BaierKatoenBook} and the fixed-point modal logics ($\mu$-calculi)  of \cite{HM96,deAlfaro2008,MM07,MioThesis}. This is a rather unsatisfactory fact (cf. B1) because one does not want to distinguish between programs that satisfy the same properties of interest. Furthermore no (widely accepted) testing semantics (cf. B4) for standard bisimilarity exists.

In \cite{S95} another behavioral equivalence for PNTS's, which we refer to as \emph{convex bisimilarity}, is introduced by Segala. This notion, as we shall  discuss in Section \ref{convex_bisimilarity_section}, is strongly motivated by the concept of \emph{probabilistic scheduler}. Convex bisimilarity is coarser than standard bisimilarity. However it is known to be strictly finer than the equivalence induced by the logic $\textnormal{PCTL}^*$ which in turn admits a coinductive characterization \cite{SZG2011}. Other notions of bisimulation for PNTS's have been recently introduced in \cite{DvGHM2009},  \cite{CR2011} and \cite{BdNM2012}.


\paragraph{Contributions.}
In this paper we analyze yet another notion of behavioral equivalence which we refer to as \emph{upper expectation ($\textnormal{UE}$) bisimilarity}. By application of known results in linear algebra and functional analysis, we show that UE-bisimilarity and convex bisimilarity coincide under very mild assumptions. As it is often the case, having two complementary descriptions of the same concept turns out to be useful. As we discuss in Section \ref{section_UE_bisim_intro}, UE-bisimilarity arises naturally from a very abstract testing scenario based on $\mathbb{R}$-valued experiments (cf. property \textbf{B4} above). Unlike other similar works, our experiments are not given by, e.g., the formulas of a given logic (such as the $\mu$-calculus of \cite{deAlfaro2008}) nor by terms of some process algebra (see, e.g., \cite{DvGHM2009}) but, instead, are modeled by  functions $f\!:\!S\!\rightarrow\!\mathbb{R}$ from program states to real values. 

We argue in Section \ref{real_valued_section} that this abstraction, besides being mathematically convenient, sheds light on the mathematical foundations of real-valued logics (including the $\mu$-calculi mentioned above) which have been subject of increasing interest in the last decade \cite{MM07,PrakashBook,FGK2010,AM04,MIO2012b}. In Section \ref{logic_R} we develop the theory of a novel real-valued modal logic, which we call $\R$ (in honor of Frigyes Riesz), inspired by the powerful results from functional analysis presented in Section \ref{sec_representation_theorems}. We define a model-theoretic semantics for $\R$ based on PNTS's and an algebraic semantics. We prove that the two coincide. To the best of our knowledge, no algebraic (i.e., axiomatic) semantics of other $\mathbb{R}$-valued modal logics for PNTS's have appeared in the literature. 
In standard modal logic, particular classes of models (e.g., reflexive NTS's) can be obtained by adding axioms (e.g.,  $\Box \phi\rightarrow \phi$) to the base modal logic $\K$ \cite{BdRVModal}.  We show that Markov processes and ordinary NTS's, which can be seen as particular classes of PNTS's, can be captured by adding appropriate axioms to the basic logic $\R$. This confirms the naturalness of $\R$ which we see as one of the main contributions of this work.

The logic $\R$ is valuable for its mathematical simplicity but cannot express useful properties of systems such as: termination goals, liveness constraints, \emph{etc}. The $\mathbb{R}$-valued modal  $\mu$-calculi of \cite{MioThesis} are better suited to this task, as they (strictly) subsume the probabilistic logic PCTL of \cite{BA1995} which is the specification language adopted in, e.g., the PRISM verification framework \cite{PRISM4}. In Section \ref{logic_mu_calculi} we prove that UE-bisimilarity coincides with the equivalence relation induced by the $\mu$-calculi of \cite{MioThesis}. Thus the quantitative approach to probabilistic $\mu$-calculi may be considered equally suitable as a mechanism for  characterising process equivalence as other non-quantitative $\mu$-calculi
advocated for this purpose (see, e.g.,  \cite{DvG2010}). We also provide in Section \ref{metrics_section} a logical characterization of the Hausdorff behavioral metric for PNTS's, a concept of fundamental importance in the theory of approximation of probabilistic systems pioneered by Panangaden \cite{PrakashBook}.
Collectively, the results of Section \ref{real_valued_section} provide strong logical foundations for UE-bisimilarity (cf.\, property \textbf{B1})  comparable with those of Milner and Park's bisimilarity.

We prove in Section \ref{congruence_section} that UE-bisimilarity is a congruence relation with respect to the CCS-style (communicating) parallel composition operator. This result is known from \cite{LS95} where the property is proved for (the equivalent under mild conditions) convex bisimilarity and parallel composition is defined in terms of an equivalent automata-theoretic synchronized product operation. Our proof, however, is interesting because it uses process-algebra methods and readily extends to every process-operator definable in the PGSOS format of \cite{BartelsThesis} which virtually includes all CCS-style process operators of practical interest (cf.\, property \textbf{B2}).  

In Sections \ref{section_coalgebra} and \ref{section_bisimulations}, summarizing standard and well known results, we discuss how UE-bisimilarity can be understood coalgebraically in terms of cocongruence for an appropriate functor. This confirms the naturalness and mathematical robustness (cf.\, property \textbf{B3}) of this notion of behavioral equivalence.

It seems likely that our results will be of help towards the difficult task of designing and verifying probabilistic concurrent programs and protocols.  The importance of bisimulation in software development (e.g., refinement techniques) and verification (e.g., system minimization) is well known. The logical characterizations of bisimulation are of key importance (see, e.g., \cite{Simpson04,DG2000})  in the development of \emph{compositional} verification methods, which are crucial in the analysis of industrial-size software. Congruence results, such as the one we obtained for PGSOS process algebras (which can be thought of as abstract prototypes of concurrent programming languages) constitute the basis for the development of design and verification techniques based on equational reasoning.  We also suggest that our natural testing semantics for UE-bisimilarity might be related to information-theoretic notions of attacks (cf.\ Section \ref{metrics_section})  which have received much attention in the recent literature  \cite{KPP2008}.

\section{Elementary Notions of Coalgebra}\label{section_coalgebra}
We start by recalling the definition of Milner and Park's bisimulation in the context of ordinary nondeterministic transition systems.

\begin{definition}[\cite{M80}]
A \emph{nondeterministic transition system} (NTS) is a pair $(X,\rightarrow)$ where $X$ is a set of states and $\rightarrow\ \subseteq X\times X$ is a transition relation. An equivalence relation $E\subseteq X\times X$ is a \emph{bisimulation} if $(x,y)\!\in\! E$ implies that:
\begin{itemize}
\item if $x\rightarrow x^\prime$ then there exists $y\rightarrow y^\prime$ such that $(x^\prime,y^\prime)\!\in\! E$, and
\item if $y\rightarrow y^\prime$ then there exists $x\rightarrow x^\prime$ such that $(x^\prime,y^\prime)\!\in\! E$.
\end{itemize}
States $x$ and $y$ are \emph{bisimilar} if $(x,y)\!\in\!E $ for some bisimulation $E$.
\end{definition}
In applications one most often encounters  \emph{labeled} NTS's, where an $L$-indexed set $\{\freccia{a}\}_{a\in L}$ of relations is considered for some set of labels $L$, or \emph{Kripke structures}, where the NTS is endowed with a set of propositional letters  interpreted as predicates. 
For the sake of simplicity we just consider plain NTS's and their probabilistic generalizations. The results we develop extend straightforwardly to the labeled (see, e.g., Section \ref{congruence_section}) and propositional extensions. 

It is going to be convenient to employ the basic language of coalgebra \cite{SanRut2011} in the description of systems and behavioral equivalences. 
We refer to \cite{SanRut2011} for a gentle introduction to the subject. 

\begin{definition}\label{powerset_functor}
Let $\catset$ be the category of sets and functions between them. The endofuntor $\mathcal{P}$ (powerset) on $\catset$ is defined as:
\begin{itemize}
\item $\mathcal{P}(X)=\big\{ A \ | \ A\subseteq X\big\}$, 
\item $\big(\mathcal{P}(f)\big)(A) = f[A] = \{ f(x)\ | \ x\in A\}$ 
\end{itemize}
for all sets $X,Y$ and functions $f:X\rightarrow Y$. 
\end{definition}
We can now restate the definition of NTS's using coalgebraic terminology. The equivalence between the two definitions is well known (see, e.g., \cite{SanRut2011}) and straightforward to verify.
\begin{definition}
A \emph{nondeterministic transition system} (NTS) is a $\mathcal{P}$-coalgebra $(X,\alpha:X\rightarrow\mathcal{P}(X))$.
\end{definition}

\begin{definition}
Let $F,G$ be two endofunctors on $\catset$. A \emph{natural transformation} from $F$ to $G$, written $\eta\!:\!F\!\rightarrow \!G$, is a collection of functions $\eta_{X}\!:\! F(X)\!\rightarrow\! G(X)$, indexed by sets $X\!\in\! \catset$, such that for all functions $f\!:\!Y\!\rightarrow\! Z$ it holds that $\eta_Z\circ G(f)\! =\! F(f)\circ \eta_{Y}$. 
\end{definition}

Thus, a natural transformation is a method $\eta$ for transforming the action of a functor $F$ into that of the functor $G$ in a uniform way.  Following standard ideas (see, e.g., \cite{Sokalova2011}), we shall use this concept (cf. Propositions \ref{nat_transf_3} and \ref{NTS_as_PNTS}) to formalize the idea of transformation of systems of some type into systems of another type. 
Coalgebra provides an abstract notion of behavioral equivalence based on the categorical concept of cocongruence \cite{KurzPHD,DDLP2006}.

\begin{definition}\label{def_quotient_map}
Let $X\in \catset$ and $E\subseteq X\times X$ be an equivalence relation. The map $q_E:X\rightarrow X/E$, mapping elements $x\in X$ to their $E$-equivalence class $[x]_{E}$, is called the \emph{quotient map} of $E$.
\end{definition}

\begin{definition}\label{cocongruence}
Given a \catset-endofunctor $F$ and a $F$-coalgebra $(X,\alpha)$, an equivalence relation $E\!\subseteq\! X\times X$ is a called a \emph{cocongruence} if
for all $(x,y)\!\in\! E$ it holds that $\big(\alpha(x), \alpha(y)\big)\!\in\! \hat{E}$, where the \emph{lifted relation}  $\hat{E}\!\subseteq\! F(X)\times F(X)$ is defined as $\textnormal{ker}(F(q_E))=\{ (A,B) \ | \ F(q_E)(A) = F(q_E)(B) \}$.
\end{definition}

When a functor $F$ is fixed, the definition can be made explicit by expanding the definition of $\hat{E}$. Consider, e.g., the functor $\mathcal{P}$, an equivalence relation $E\subseteq X\times X$ and two sets $A,B\in\mathcal{P}(X)$ with $A=\{x_i\}_{i\in I}$ and  $B=\{y_j\}_{j\in J}$. Then $(A,B)\in \hat{E}$ holds iff the two sets $A/E=\{Ê[x_i]_E\}_{i\in I}$ and $B/E=\{Ê[y_j] \}_{j\in J}$ are \emph{equal} as sets. Thus the relation $E$ is a cocongruence if, for every $(x,y)\in E$, the two sets $\alpha(x)$ and $\alpha(y)$ of reachable states are \emph{equal modulo $E$}. Hence, cocogruences are those equivalence relations that preserve the transition (i.e., coalgebraic) structure of the system. Note that $(A,B)\!\in\! \hat{E}$ holds iff for all $x\!\in\! A$ there exists $y\!\in\! B$ such that $(x,y)\!\in\! E$ and \emph{viceversa}. Therefore the abstract notion of cocongruence  for the functor $\mathcal{P}$ coincides with ordinary Milner and Park's bisimulation for NTS's. 


\section{PNTS's and Bisimulations}\label{section_bisimulations}

\begin{definition}[\cite{Sokalova2011}]\label{functor_distribution}
The endofuntor $\mathcal{D}$ (discrete probability distributions) on $\catset$ is defined as:
\begin{itemize}
\item $\mathcal{D}(X)= \{Ê\mu\!:\!X\rightarrow [0,1] \ | \sum_{x} \mu(x)=1     \}$
\item $\big(\mathcal{D}(f)\big)(\mu)= f[\mu]$,  $ \ \ y \stackrel{f[\mu]}{\longmapsto} \sum \{Ê\mu(x) \ | \  x\in  f^{-1}(y)\} $
\end{itemize}
for all sets $X,Y$ and functions $f\!:\!X\!\rightarrow\! Y$. For $x\!\in\! X$ we denote with $\delta(x)$ (or $\delta_x$) the probability \emph{Dirac} distribution specified by $\delta_x(y)=1$ if $y=x$ and $0$ otherwise. For a set $A\subseteq X$, we write $\mu(A)$ for $\sum_{x\in A}\mu(x)$.
\end{definition}

Note that the composite functor $\mathcal{P}\mathcal{D}$ maps a set $X$ to the collection of all sets of discrete probability distributions on $X$.
\begin{definition}
A \emph{probabilistic nondeterministic transition system} (PNTS) is a $\mathcal{P}\mathcal{D}$-coalgebra $(X,\alpha\!:\!X\!\rightarrow \!\mathcal{P}\mathcal{D}(X))$. We write $x\rightarrow \mu$ to specify that $\mu\in \alpha(x)$.
\end{definition}
The intended interpretation is that the system, at some state $x\!\in\! X$, can evolve by nondeterministically choosing one of the \emph{accessible} probability distributions $\mu$, i.e., such that $\mu\!\in\! \alpha(x)$, and then continuing its execution from the state $y\!\in\! X$ with probability $\mu(y)$. PNTS's can be visualized, using  graphs labeled with probabilities. For example the PNTS $(X,\alpha)$ having set of states $X\!=\!\{x,y\}$ and transition map $\alpha(x)\!=\!\{ \mu_1,\mu_2\}$ and $\alpha(y)=\emptyset$, with $\mu_{1}(x)\!=\!\mu_{1}(y)\!=\!\frac{1}{2}$ and $\mu_{2}(y)\!=\!1$, can be depicted as:\begin{center}
\begin{tikzpicture}[scale=0.3]
\matrix(a)[matrix of math nodes,
row sep=1em, column sep=2.5em,
text height=1.5ex, text depth=0.25ex]
{x & \mu_1 &y\\
&\mu_2 \\ };
\path[->](a-1-1) edge (a-1-2);
\path[->](a-1-1) edge (a-2-2);
\path[dotted,->,bend right](a-1-2) edge node[above]{ {{\tiny $0.5$}}} (a-1-1);
\path[dotted,->, bend left](a-1-2) edge node[above]{ {{\tiny $0.5$}}} (a-1-3);
\path[dotted,->](a-2-2) edge  node[below]{ {{\tiny $1$}}} (a-1-3);
\end{tikzpicture}
\end{center}
\vspace{-8pt}

The combination of nondeterministic choices immediately followed by probabilistic ones, allows the modeling of  concurrent probabilistic programming languages in a natural way \cite{BartelsThesis,HP2000}. 
As remarked earlier, one often considers labeled or propositional generalizations of PNTS's. All our results extend to these (and other similar) generalizations in a straightforward way (cf. Section \ref{congruence_section}).

PNTS's with a definition equivalent to the coalgebraic one given above were introduced by Segala \cite{S95} who also defined two notions of bisimilarity for PNTS's. One, the stronger (i.e., finer) of the two, which we refer to as  \emph{standard bisimilarity} here, is a natural extension of Milner and Park's bisimilarity for NTS's based on the insights of Larsen and Skou \cite{LS91}.

\begin{definition}[Standard Bisimulation]\label{def_standard_bisim}
Given a PNTS $(X,\alpha)$, a \emph{standard bisimulation} is an equivalence relation $E\subseteq X\times X$ such that if $(x,y)\in E$ then 
\begin{itemize}\item if $x\rightarrow \mu$ then there exists $\nu$ such that $y\rightarrow \nu $ and $\mu=_{E}\nu$, and
\item if $y\rightarrow \nu$ then there exists $\mu$ such that $x\rightarrow \mu $ and $\mu=_{E}\nu$,
 \end{itemize}
 where $\mu=_{E}\nu$ holds if $\mu(A)=\nu(A)$ (see Definition \ref{functor_distribution}) for all sets $A\subseteq X$ which are unions of $E$-equivalence classes.
\end{definition}
\begin{remark}
The notion of standard bisimilarity coincides with that of cocongruence for the functor $\mathcal{PD}$. This is simple to verify by expanding the definition of $\hat{E}$ as in Definition \ref{cocongruence}. For two sets $A=\{\mu_i\}_{i\in I}$ and $B=\{\nu_j\}_{j\in J}$ we have $(A,B)\in \hat{E}$ if and only if $A/E\!=\!\{ [\mu_i]_E\}_{i\in I}$ and  $B/E\!=\! \{ [\nu_j]_E\}_{j\in J}$ are equal as sets where, for each $\mu\in\mathcal{D}(X)$, we denoted with $[\mu]_E$ the probability distribution $q_{E}[\mu]$ (see definitions  \ref{def_quotient_map} and \ref{functor_distribution}).
\end{remark}
The definition looks technical but has a simple interpretation. Two states $x,y$ of a PNTS $( X,\alpha)$ are standard-bisimilar if the two sets of reachable distributions $\alpha(x)$ and $\alpha(y)$ are equal modulo bisimulation, i.e., are equal once their elements ${\mu,\nu}\!\in\!{\mathcal{D}(X)}$ are considered modulo $E$. 
At the level of probability distributions, $\mu\!=_E\! \nu$ means that if one glues together (i.e., identifies) $E$-related states (i.e., applies the quotient map $x\mapsto[x]_{E}$), then $\mu$ and $\nu$ become \emph{equal}, i.e., they assign the same probabilities to all events $A \subseteq X/E$.  As an example, consider the two PNTS's rooted at $x$ and $y$ respectively, depicted in  Figure \ref{example_pic_1}, and assume the processes $x_1$ and $x_2$ to be observationally different\footnote{\label{obs_diff}By this we mean that $(x_1,x_2)\not\in E$ for all bisimulations $E$. Of course this can be implemented by adding structure to the system (e.g., a single edge from $x_1$ to $\mu_1$). Our assumption thus simply abstracts away from such additional details.}. The processes $x$ and $y$ are not standard bisimilar. This is because $y$ can lead to a probability distribution $\mu_3$ which can not be matched by $x$.
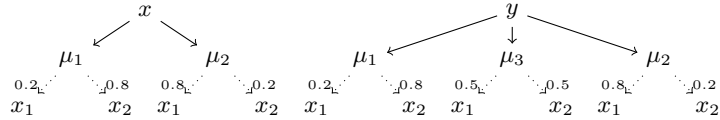
\begin{figure}
\begin{center} 
\begin{tikzpicture}[scale=0.65]
	\node (p) at (0,0) {$x$} ;
	\node (ppi1) at ($ (p) - (1.5,1) $) {$\mu_{1}$};
	\node (ppi2) at ($ (p) - (-1.5,1) $) {$\mu_{2}$};
	\node (p2) at ($ (ppi1) - (1,1) $) {$x_1$};
	\node (p3) at ($ (ppi1) - (-1,1) $) {$x_2$};
	\node (p4) at ($ (ppi2) - (1,1) $) {$x_1$};
	\node (p5) at ($ (ppi2) - (-1,1) $) {$x_2$};
	\path[->] (p) edge node [left] {{\tiny $\ $}} (ppi1);
	\path[->] (p) edge node [right] {{\tiny $\ $}} (ppi2);
	\path[->] (ppi1) [dotted]  edge node [left] {{\tiny $0.2$}} (p2);	
	\path[->] (ppi1) [dotted]  edge node [right] {{\tiny $0.8$}} (p3);	
	\path[->] (ppi2) [dotted]  edge node [left] {{\tiny $0.8$}} (p4);	
	\path[->] (ppi2) [dotted]  edge node [right] {{\tiny $0.2$}} (p5);	
\end{tikzpicture}
\begin{tikzpicture}[scale=0.65]
	\node (p) at (0,0) {$y$} ;
	\node (ppi1) at ($ (p) - (3,1) $) {$\mu_{1}$};
	\node (ppi2) at ($ (p) - (0,1) $) {$\mu_{3}$};
	\node (ppi3) at ($ (p) - (-3,1) $) {$\mu_{2}$};
	\node (p2) at ($ (ppi1) - (1,1) $) {$x_1$};
	\node (p3) at ($ (ppi1) - (-1,1) $) {$x_2$};
	\node (p4) at ($ (ppi2) - (1,1) $) {$x_1$};
	\node (p5) at ($ (ppi2) - (-1,1) $) {$x_2$};
	\node (p6) at ($ (ppi3) - (1,1) $) {$x_1$};
	\node (p7) at ($ (ppi3) - (-1,1) $) {$x_2$};

	\path[->] (p) edge node [left] {{\tiny $\ $}} (ppi1);
	\path[->] (p) edge node [right] {{\tiny $\ $}} (ppi2);
	\path[->] (p) edge node [right] {{\tiny $\ $}} (ppi3);
	\path[->] (ppi1) [dotted]  edge node [left] {{\tiny $0.2$}} (p2);	
	\path[->] (ppi1) [dotted]  edge node [right] {{\tiny $0.8$}} (p3);	
	\path[->] (ppi2) [dotted]  edge node [left] {{\tiny $0.5$}} (p4);	
	\path[->] (ppi2) [dotted]  edge node [right] {{\tiny $0.5$}} (p5);	
	\path[->] (ppi3) [dotted]  edge node [left] {{\tiny $0.8$}} (p6);	
	\path[->] (ppi3) [dotted]  edge node [right] {{\tiny $0.2$}} (p7);	
\end{tikzpicture}
\end{center}
\caption{Example of states $(x,y)$ not standard bisimilar.}
\label{example_pic_1}
\end{figure}

\subsection{Convex Bisimilarity}\label{convex_bisimilarity_section}
It has been argued by Segala (see, e.g.,  \cite{S95}) that standard bisimilarity is too strict a behavioral equivalence when PNTS's are used to model nondeterministic probabilistic programs/systems. In this setting, the nondeterminism in the system is supposed to model all the possible choices which can be made by, e.g., an external \emph{scheduler}. It is natural, however, to assume that schedulers can themselves use probabilistic methods to perform their choices. For instance the scheduling algorithm of an operating system might use probabilistic protocols to perform fair choices in an efficient way. Thus, a \emph{probabilistic scheduler} could choose to pick, from the state $x$, the successor distributions $\mu_1$ and $\mu_2$ with equal probability $\frac{1}{2}$, and consequentely reach states $x_1$ and $x_2$ with equal probabilities $\frac{1}{2}(0.2 + 0.8)=0.5$. Thus a scheduler, by choosing probabilistically between $\mu_1$ and $\mu_2$, can mimic the choice  of $\mu_3=\frac{1}{2}\mu_1 + \frac{1}{2}\mu_2$. 

\begin{definition}
Let $X$ be a set. A \emph{convex combination} of elements $\mu_i\in\mathcal{D}(X)$, for $i\in\{0,\dots, n\}$, is a probability distribution $\nu$ of the form $\nu(x)=\sum_i \lambda_i \mu_i(x)$, for $\lambda_i\in [0,1]$ and $\sum_{i}\lambda_i = 1$. 
The \emph{convex hull} of $A\!\in\!\mathcal{P}\mathcal{D}(X)$, denoted by $\textnormal{H}(A)$, is  the collection of all possible convex combinations of elements in $A$. The set $A$ is \emph{convex} if $A=\textnormal{H}(A)$. The entire set $\mathcal{D}(X)$, the emtpyset and the singletons $\{\mu\}$, for $\mu\in \mathcal{D}(X)$, are examples of convex sets.
\end{definition}




We are now ready to introduce 
the second notion of bisimilarity introduced by Segala \cite{S95,LS95}, which we refer to\footnote{The adjective \emph{probabilistic} is often adopted in the literature \cite{S95,SZG2011}. We prefer the adjective \emph{convex} which transparently reflects the mathematics behind the notion.} as \emph{convex bisimilarity},  based on the concept of probabilistic schedulers.

\begin{definition}[Convex Bisimulation]\label{def_convex_bisim}
Given a PNTS $(X,\alpha)$, a \emph{convex bisimulation} is an equivalence relation $E\subseteq X\times X$ such that if $(x,y)\in E$ then 
\begin{itemize}\item if $x\rightarrow_{C} \mu$ then there exists $\nu$ such that $y\rightarrow_{C} \nu $ and $\mu=_{E}\nu$,
\item if $y\rightarrow_{C} \nu$ then there exists $\mu$ such that $x\rightarrow_{C} \mu $ and $\mu=_{E}\nu$,
 \end{itemize}
where $x\rightarrow_{C}\mu$ if and only if $\mu\in \textnormal{H}(\alpha(x))$.
\end{definition}
 Thus convex bisimilarity is obtained by replacing $\rightarrow$ with $\rightarrow_{C}$ in Definition \ref{def_standard_bisim}. 
 Probabilistic schedulers take the place of ordinary schedulers and, by means of probabilistic choices, can simulate any convex combination of the reachable probability distributions. 
\begin{remark}\label{convex_algo}
Polynomial time algorithms for computing convex bisimilarity, based on linear programming, are studied in \cite{CS2002}.
\end{remark}
If probabilistic schedulers are assumed, Example \ref{example_pic_1} shows how, as observed by Segala in the first place, standard-bisimilarity is too strict because it distinguishes between states that, under the intended interpretation, ought to be identified. This fact does not mean that cocogruence (i.e., standard bisimilarity) is not the ``right'' notion of behavioral equivalence for $\mathcal{P}\mathcal{D}$-coalgebras. Rather, it suggests that $\mathcal{P}\mathcal{D}$-coalgebras do not precisely  model the class of systems we have in mind. We now briefly discuss how convex bisimilarity can be understood coalgebraically. 

\begin{definition}\label{functor_convex}
The endofuntor $\mathcal{P}_c\mathcal{D}$ on $\catset$ is defined as:
\begin{itemize}
\item $\mathcal{P}_c\mathcal{D}(X)\bydef\{ÊA \ | \ A\in \mathcal{P}\mathcal{D}(X) \textnormal{ and } A=\textnormal{H}(A)  \}$
\item $\big(\mathcal{D}(f)\big)(A)\bydef f[A] = \{ f[\mu] \ | \ \mu\in A \} $
\end{itemize}
where $f[\mu]$ is defined as in Definition \ref{functor_distribution}. 
 \end{definition}
 
 Hence $\mathcal{P}_c\mathcal{D}(X)$ is the collection of all convex  subsets of $\mathcal{D}(X)$. This means that $\mathcal{P}_c\mathcal{D}(X)\subseteq \mathcal{P}\mathcal{D}(X)$. Furthermore the action of $\mathcal{P}_c\mathcal{D}$ on morphisms $f$ is the same as that of $\mathcal{P}\mathcal{D}$ (i.e., $\mathcal{P}_c\mathcal{D}(f)$ is $\mathcal{P}\mathcal{D}(f)$ restricted to convex  sets) and indeed $f[A]$ is a convex set whenever $A$ is convex, as it is simple to verify.
 
 \begin{remark}
  The fact that $\mathcal{P}_c\mathcal{D}$ is a functor is well known. In fact more is true, and $\mathcal{P}_c\mathcal{D}$ carries a monad structure \cite{TKP2005}. This property has been extensively studied, especially in the field of domain theory \cite{TKP2005,GL2008,Mislove2000}, and recognized as important in the setting of probabilistic-nondeterminism. 
\end{remark}

It is clear that a $\mathcal{P}_c\mathcal{D}$-coalgebra $(X,\alpha)$ is just a particular kind of $\mathcal{P}\mathcal{D}$-coalgebra such that, for all $x\in X$, the set $\alpha(x)$ is convex. Furthermore the convex hull  operation gives us a natural way to convert a $\mathcal{P}\mathcal{D}$-coalgebra into a $\mathcal{P}_c\mathcal{D}$-coalgebra. 
\begin{proposition}\label{nat_transf_3}
The identity map $\textnormal{id}_{X}\!:\! \mathcal{P}_c\mathcal{D}(X)\!\rightarrow\!\mathcal{P}\mathcal{D}$, defined as $\textnormal{id}_{X}(A)=A$, is an injective natural transformation from $\mathcal{P}_c\mathcal{D}$ to $\mathcal{P}\mathcal{D}$. The  map $\textnormal{H}_{X}\!:\! \mathcal{P}\mathcal{D}(X)\!\rightarrow\!\mathcal{P}_c\mathcal{D}$, defined as $\textnormal{H}_{X}(A)=\textnormal{H}(A)$, is a surjective natural transformation from $\mathcal{P}\mathcal{D}$ to $\mathcal{P}_c\mathcal{D}$.
\end{proposition}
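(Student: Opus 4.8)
The plan is to verify the two claims of Proposition \ref{nat_transf_3} separately. For each map I must check two things: (i) that it is a well-defined natural transformation, i.e. that the naturality square commutes, and (ii) that it has the asserted injectivity/surjectivity property. Throughout I would unfold the definition of a natural transformation from the excerpt: for $\eta\colon F\to G$ I must show $\eta_Z\circ F(f) = G(f)\circ\eta_Y$ for every $f\colon Y\to Z$ (note the paper's convention places $F$ on the source side of this equation).

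For the identity transformation $\textnormal{id}_X\colon \mathcal{P}_c\mathcal{D}(X)\to\mathcal{P}\mathcal{D}(X)$, the key observation already recorded in the text is that $\mathcal{P}_c\mathcal{D}(f)$ is just the restriction of $\mathcal{P}\mathcal{D}(f)$ to convex sets. Hence for a convex $A$ we have $\mathcal{P}\mathcal{D}(f)(\textnormal{id}_X(A)) = \mathcal{P}\mathcal{D}(f)(A) = f[A] = \mathcal{P}_c\mathcal{D}(f)(A) = \textnormal{id}_Y(\mathcal{P}_c\mathcal{D}(f)(A))$, which is exactly the naturality square; this relies on the fact, noted earlier, that $f[A]$ is convex when $A$ is. Injectivity of each $\textnormal{id}_X$ is immediate since it is a literal identity on the subset $\mathcal{P}_c\mathcal{D}(X)\subseteq\mathcal{P}\mathcal{D}(X)$.

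For the convex-hull transformation $\textnormal{H}_X\colon\mathcal{P}\mathcal{D}(X)\to\mathcal{P}_c\mathcal{D}(X)$, surjectivity of each component is clear because every convex set $A$ satisfies $A = \textnormal{H}(A) = \textnormal{H}_X(A)$, so it is in the image. The substantive step is naturality, which amounts to the identity $f[\textnormal{H}(A)] = \textnormal{H}(f[A])$ for every $f\colon Y\to Z$ and every $A\in\mathcal{P}\mathcal{D}(Y)$: that taking the convex hull commutes with pushing distributions forward along $f$. I would prove this by double inclusion. The inclusion $\textnormal{H}(f[A])\subseteq f[\textnormal{H}(A)]$ uses that $f[\,\cdot\,]=\mathcal{D}(f)$ is linear in the sense that $f[\sum_i\lambda_i\mu_i]=\sum_i\lambda_i f[\mu_i]$, so any convex combination of images is the image of the corresponding convex combination. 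The reverse inclusion $f[\textnormal{H}(A)]\subseteq\textnormal{H}(f[A])$ uses the same linearity read in the other direction.

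The main obstacle, and the only point requiring genuine care, is this linearity identity $\mathcal{D}(f)(\sum_i\lambda_i\mu_i)=\sum_i\lambda_i\,\mathcal{D}(f)(\mu_i)$ underlying the hull-naturality step. Unwinding Definition \ref{functor_distribution}, the left side assigns to $z\in Z$ the value $\sum\{(\sum_i\lambda_i\mu_i)(y)\mid y\in f^{-1}(z)\}$, and one must interchange the two finite sums and recognize the inner sum as $\mathcal{D}(f)(\mu_i)(z)$; this is a routine reindexing of a double sum, but it is the computational heart of the argument. Everything else is definitional bookkeeping, so once this identity is in hand both naturality squares and the injectivity/surjectivity claims follow directly.
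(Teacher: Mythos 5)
Your proposal is correct. The paper itself gives no proof of Proposition~\ref{nat_transf_3} (it is stated as routine, with the only relevant remark being that $f[A]$ is convex whenever $A$ is), and your argument fills in exactly the expected details: naturality of $\textnormal{id}$ from the fact that $\mathcal{P}_c\mathcal{D}(f)$ is the restriction of $\mathcal{P}\mathcal{D}(f)$, and naturality of $\textnormal{H}$ via the identity $f[\textnormal{H}(A)]=\textnormal{H}(f[A])$, which follows from the (finite-combination, nonnegative-term) linearity of the pushforward $\mathcal{D}(f)$ -- so there is nothing to add beyond noting that you also correctly repaired the typo in the paper's statement of the naturality square.
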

Transforming a $\mathcal{P}\mathcal{D}$-coalgebra (i.e., a PNTS) $(X,\alpha)$ into the $\mathcal{P}_c\mathcal{D}$-coalgebra $(X,\textnormal{H}_X\!\circ\! \alpha)$ precisely corresponds to the substitution of the arrow relation $\rightarrow$ in Definition \ref{def_standard_bisim} with the relation $\rightarrow_{C}$ in Definition \ref{def_convex_bisim}. It is now straightforward to verify that an equivalence relation $E\subseteq X\times X$ is  a convex bisimulation  in the PNTS $(X,\alpha)$ if and only if $E$ is a cocongruence (i.e., standard bisimilarity) in $(X,\textnormal{H}_X\!\circ\! \alpha)$. 
The process of transforming a $\mathcal{P}\mathcal{D}$-coalgebra $(X,\alpha)$ into a $\mathcal{P}_c\mathcal{D}$-coalgebra generally causes a loss of information. For instance, the two systems of Figure \ref{example_pic_1} are both mapped to the same $\mathcal{P}_c\mathcal{D}$-coalgebra. This transformation process is really a quotient operation ($\textnormal{H}_X$ is surjective) induced by the \emph{behavioral equation} $A\!\approx\! B$ whenever $\textnormal{H}(\textnormal{A})\!=\!\textnormal{H}(B)$, for all $A,B\in\mathcal{P}\mathcal{D}(X)$,  capturing the the behavior of probabilistic schedulers. 
\begin{remark}\label{working_with_convex}
Although $\mathcal{P}_c\mathcal{D}$-coalgebras are the models naturally corresponding to convex bisimilarity, we can always work, concretely, with ordinary PNTS's (i.e., $\mathcal{P}\mathcal{D}$-coalgebras) $(X,\alpha)$, perhaps represented as finite graphs, and tacitly replace $\alpha$ with $\textnormal{H}_X\circ\alpha$ (i.e., $\rightarrow$ with $\rightarrow_C$). This is convenient since, generally, the convex hull of a finite set is uncountable.
\end{remark}

The discussio carried out in this subsection serves to clarify that no \emph{a priori} categorical or coalgebraic argument exists supporting a notion of behavioral equivalence
in favor of another, when modeling computing systems. Coalgebra provides, e.g., the mathematically deep notion of cocongruence for a functor, but the choice of an appropriate functor is part of the modeling process.

\section{Upper Expectation Bisimilarity}\label{section_UE_bisim_intro}
We saw how convex bisimilarity naturally arises from the observation that schedulers may make probabilistic choices. We also discussed how it can be understood coalgebraically in terms of cocongruences on $\mathcal{P}_c\mathcal{D}$-coalgebras. However it is not possible to claim, on the sole basis of these facts, that convex bisimilarity is a convenient notion of behavioral equivalence for PNTS's. Probabilistic schedulers constitute a good reason to consider two convex bisimilar states as behaviorally equivalent. But it is not clear why one might want to distinguish between two states that are not convex bisimilar. We illustrate the problem by means of the simple example of Figure \ref{example_pic_2}. As usual we assume the three states $x_1$, $x_2$ and $x_3$ to be observationally distinct (cf. Footnote \ref{obs_diff}).
The two states $x$ and $y$ are not convex bisimilar because $\mu_3$ is not a convex combination of $\mu_1$ and $\mu_2$.
\begin{figure}
\begin{center}
\begin{tikzpicture}[scale=0.65]
	\node (p) at (0,0) {$x$} ;
	\node (d1) at ($ (p) - (1.5,1) $) {$\mu_{1}$};
	\node (d2) at ($ (p) - (-1.5,1) $) {$\mu_{2}$};
	\node (p1) at ($ (d1) - (1,1) $) {$x_1$};
	\node (p2) at ($ (d1) - (0,1) $) {$x_2$};
	\node (p3) at ($ (d1) - (-1,1) $) {$x_3$};
	\node (p4) at ($ (d2) - (1,1) $) {$x_1$};
	\node (p5) at ($ (d2) - (0,1) $) {$x_2$};
	\node (p6) at ($ (d2) - (-1,1) $) {$x_3$};
	\path[->] (p) edge node [left] {{\tiny $\ $}} (d1);
	\path[->] (p) edge node [right] {{\tiny $\ $}} (d2);
	\path[->] (d1) [dotted]  edge node [left] {{\tiny $0.3$}} (p1);	
	\path[->] (d1) [dotted]  edge node  {{\tiny $0.3$}} (p2);	
	\path[->] (d1) [dotted]  edge node [right] {{\tiny $0.4$}} (p3);	
	\path[->] (d2) [dotted]  edge node [left] {{\tiny $0.5$}} (p4);
	\path[->] (d2) [dotted]  edge node  {{\tiny $0.4$}} (p5);		
	\path[->] (d2) [dotted]  edge node [right] {{\tiny $0.1$}} (p6);	
\end{tikzpicture}
\begin{tikzpicture}[scale=0.65]
	\node (p) at (0,0) {$y$} ;
	\node (d1) at ($ (p) - (2.5,1) $) {$\mu_{1}$};
	\node (d2) at ($ (p) - (-2.5,1) $) {$\mu_{2}$};
	\node (d3) at ($ (p) - (0,2) $) {$\mu_{3}$};
	\node (p1) at ($ (d1) - (1,1) $) {$x_1$};
	\node (p2) at ($ (d1) - (0,1) $) {$x_2$};
	\node (p3) at ($ (d1) - (-1,1) $) {$x_3$};
	\node (p4) at ($ (d2) - (1,1) $) {$x_1$};
	\node (p5) at ($ (d2) - (0,1) $) {$x_2$};
	\node (p6) at ($ (d2) - (-1,1) $) {$x_3$};
	\node (q1) at ($ (d3) - (1,1) $) {$x_1$};
	\node (q2) at ($ (d3) - (0,1) $) {$x_2$};
	\node (q3) at ($ (d3) - (-1,1) $) {$x_3$};
	
	\path[->] (p) edge node [left] {{\tiny $\ $}} (d1);
	\path[->] (p) edge node [right] {{\tiny $\ $}} (d2);
	\path[->] (p) edge node [right] {{\tiny $\ $}} (d3);
	\path[->] (d1) [dotted]  edge node [left] {{\tiny $0.3$}} (p1);	
	\path[->] (d1) [dotted]  edge node  {{\tiny $0.3$}} (p2);	
	\path[->] (d1) [dotted]  edge node [right] {{\tiny $0.4$}} (p3);	
	\path[->] (d2) [dotted]  edge node [left] {{\tiny $0.5$}} (p4);
	\path[->] (d2) [dotted]  edge node  {{\tiny $0.4$}} (p5);		
	\path[->] (d2) [dotted]  edge node [right] {{\tiny $0.1$}} (p6);	
	\path[->] (d3) [dotted]  edge node [left] {{\tiny $0.4$}} (q1);
	\path[->] (d3) [dotted]  edge node  {{\tiny $0.3$}} (q2);		
	\path[->] (d3) [dotted]  edge node [right] {{\tiny $0.3$}} (q3);	
	
	\end{tikzpicture}
\end{center}
\caption{Example of states $(x,y)$ not convex bisimilar.}
\label{example_pic_2}
\end{figure}
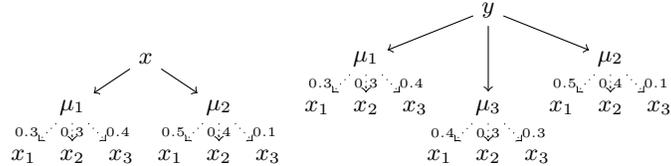
It is not simple, however, to find a \emph{concrete}\footnote{This is deliberately a vague adjective which could be seconded by, \emph{experimental}, \emph{operational}, \emph{testing-based}, \dots} reason to distinguish between the two states. As a matter of fact, it has been proved in \cite{SZG2011} that the states $x$ and $y$ satisfy the same properties formulated in the expressive logic $\textnormal{PCTL}^*$ of \cite{BA1995}.
\begin{remark}
While modal logics, carefully crafted to capture convex bisimilarity (and  even standard bisimilarity) can be defined \cite{LS95,S95,DvG2010,DvGHM2009},  it is certainly interesting to look at the distinguishing power of popular temporal logics for PNTS's, (of which $\textnormal{PCTL}^*$ is a main example)  capable of expressing branching properties of probabilistic concurrent systems useful in practice (see, e.g., \cite{PRISM4}).
\end{remark}

To explain why the two states $x$ and $y$ have indeed similar behavior, we now introduce a simple experimental scenario. Suppose we are allowed to make repeated experiments (in the sense of Milner's push-buttons metaphor \cite{M80}) on the PNTS's of Figure \ref{example_pic_1}. After $n\!\to\!\infty$ experiments (that is, for  $n$ as big as desired) we observe that an event $A$ (e.g., $A=\{x_1 \}$, representing the occurrence of terminal state $x_{1}$) happened $m$ times. We can then make the following reasonable\footnote{This is of course based on the common frequentist interpretation of probabilities as limits of relative frequencies in a large number of trials.} assessment:  the PNTS \emph{may} exhibit behavior $A$ (i.e., end up in $x_1$) with probability $\frac{m}{n}$.  
It seems then natural to stipulate that two states $x$ and $y$ are equivalent if, for each event $A$, the state $x$ can exhibit behavior (event) $A$ with probability (frequency) $\lambda$ if and only if $y$ can. This leads to the following definition.

\begin{definition}
Let $X$ be a set and $S\in\mathcal{P}\mathcal{D}(X)$ a set of probability distributions on $X$. The \emph{upper probability functional} $up_S\!:\!\mathcal{P}(X)\!\rightarrow\! [0,1]$, mapping subsets $A\subseteq X$ to real numbers in $[0,1]$, is defined as  the supremum value $up_{S}(A)=\bigsqcup_{\mu\in S}\mu(A)$.
\end{definition}
The adjective \emph{functional} is motivated by the fact that $\mathcal{P}(X)$ can be seen as the space of characteristic functions $X\rightarrow\{0,1\}$. Thus $up_S$ maps functions (predicates) to real numbers.
\begin{definition}
Given a PNTS $(X,\alpha)$, an \emph{upper probability (UP) bisimulation} is an equivalence relation $E\subseteq X\times X$ such that if $(x,y)\in E$ then the equality 
$up_{\alpha(x)}(A) = up_{\alpha(y)}(A)$ holds for all sets $A\subseteq X$ which are unions of $E$-equivalence classes.
\end{definition}

\begin{remark}
The choice of considering upper probabilities seems one-sided, but one could equally well choose lower probabilities ($lp_S$) observing that $lp_S(A)\!=\!1-up_S(X\setminus A)$. 
\end{remark}

The idea behind the definition is simple. If $x$ and $y$ are UP-bisimilar, than if $x$ can exhibit event $A$ with probability $\lambda$ then also $y$ can, and \emph{viceversa}. At the same time, if $x$ and $y$ are not UP-bisimilar, there exists an event $A$ such that an agent, by performing on $x$ and $y$ a sufficiently large number of repeated experiments looking after the occurrences of event $A$, \emph{may} observe a probability of $A$ on $x$ that \emph{can not} be achieved in $y$.
The restriction on $A$ being union of $E$-equivalence classes is natural, since we wish to identify (i.e., apply $x\mapsto [x]_E$)  $E$-related (i.e., UP-bisimilar) states.

We introduced UP-bisimilarity to understand in what sense the behavior of the states $x$ and $y$ of Figure \ref{example_pic_2} are similar. It is now simple to verify that $x$ and $y$ are UP-bisimilar. Thus $x$ and $y$ can be not distinguished by means of experiments.  As it turns out, UP-bisimilarity has been recently derived (with different names) in  \cite{BdNM2012} and \cite{SZG2011}, albeit following a different conceptual path. It is shown in \cite{SZG2011} that UP-bisimilarity is strictly coarser than the equivalences induced by the logic $\textnormal{PCTL}^*$ and  its weaker fragment $\textnormal{PCTL}$. This is a significant drawback of UP-bisimilarity as one definitely wants to distinguish between states that can be separated by useful temporal logics such as  $\textnormal{PCTL}$ and $\textnormal{PCTL}^*$. However we derived UP-bisimilarity following a straightforward testing metaphor, which we now refine to get a better behaved alternative.

Suppose indeed that in the experiments carried on PNTS's, one is not just allowed to observe the occurrences of events and thus, after enough ($n\to \infty$) experiments, estimate their probabilities by means of relative frequencies. Rather, one can associate a real valued information $r_i\in\mathbb{R}$ to the outcome of each experiment $i$ and, after $n$ experiments, observe the average value $\frac{1}{n}\sum_{i=0}^n r_i$. This new scenario is better explained by a simple example. Consider again the states $x$ and $y$ of Figure \ref{example_pic_2} and consider the function $g:\{x_1,x_2,x_3\}\rightarrow\mathbb{R}$ defined as $g(x_1)=60$, $g(x_2)=0$ and $g(x_3)=50$. The function $g$ represents the experiment in the sense that if, after letting the scheduler choose a transition (i.e., pushing the button, in Milner's metaphor), the state $x_i$ is reached,  then the real number $g(x_i)$ is registered as result. Thus, for instance, if $\{ x_1,x_2,x_3,x_2,x_1\}$ was the outcome of five experiments, the numerical sequence $\{60,0,50,0,60\}$ and its average $34=\frac{170}{5}$ would be our observation. 
\begin{definition}
Let $X$ be a set, $\mu\in\mathcal{D}(X)$ and $f:X\rightarrow\mathbb{R}$. The \emph{expected value} of $f$ under $\mu$, written $E_\mu(f)$, is defined as $E_\mu(f)=\sum_x \mu(x)f(x)$.
\end{definition}

Note that the expected values of $g$ under $\mu_1$, $\mu_2$ and $\mu_3$ are $38$, $35$ and $39$, respectively. This readily means that, for $n\to\infty$, the average resulting from experiments $g$ on state $x$ is \emph{necessarily} smaller or equal than $38$, while in state $y$ it \emph{can be} strictly greater than $38$ (and at most $39$). Thus, it is \emph{possible} that an agent, by means of a sufficiently large number of repeated experiments $g$, \emph{may} be able to distinguish between the two states $x$ and $y$. This discussion leads to the following definitions.
\begin{definition}\label{ue_functional}
Let $X$ be a set and $S\in\mathcal{P}\mathcal{D}(X)$ a set of probability distributions on $X$. The \emph{upper expectation functional} $ue_S:(X\rightarrow\mathbb{R})\rightarrow \mathbb{R}$, mapping functions $X\rightarrow \mathbb{R}$ to real numbers, is defined as $ue_{S}(f)=\bigsqcup_{\mu\in S}E_\mu(f)$.
\end{definition}
\begin{definition}\label{def_UE_bisimilarity}
Given a PNTS $(X,\alpha)$, an \emph{upper expectation (UE) bisimulation} is an equivalence relation $E\subseteq X\times X$ such that if $(x,y)\in E$ then the equality 
$ue_{\alpha(x)}(f) = ue_{\alpha(y)}(f)$ holds for all \emph{$E$-invariant} functions $f:X\rightarrow \mathbb{R}$, i.e., such that if $(z,w)\in E$ then $f(z)=f(w)$.
\end{definition}
\begin{remark}\label{remark_dealfaro1}
Up to minor modifications, the notion of UE bisimilarity already appeared in the literature \cite{deAlfaro2008} under the name of \emph{game bisimilarity}, in the (significantly) different setting of two-player stochastic (concurrent) games. See Remark \ref{remark_dealfaro2} below for further discussion.
\end{remark}
We restrict to $E$-invariant experiments $f$ following the usual idea that $E$-related (i.e., UE-bisimilar) states ought to be identified.  Note how UP-bisimilarity is obtained by restricting the set of $\mathbb{R}$-valued functions $f$ to $\{0,1\}$-valued functions $g\!:\!X\rightarrow\!\{0,1\}$, in the definition of UE-bisimulation. Indeed note that  $g$ is $E$-invariant if and only if  $g$ is the characteristic function of some set $Y\subseteq X$ which is union of $E$-equivalence classes. This observation, together with the fact that the two states $x$ and $y$ of Figure \ref{example_pic_2} are UP-bisimilar but not UE-bisimilar,  reveals that $\mathbb{R}$-valued experiments (modeled by $f\!:\!X\rightarrow\!\mathbb{R}$) generally provide more information than just observations of events (modeled by $g\!:\!X\rightarrow\!\{0,1\}$).

The definition of UE-bisimulation gives reasons for distinguishing states based on the existence of a witnessing experiment $f$ which \emph{may} have an expected outcome in one state which \emph{cannot} be matched by the other state. The dual \emph{modalities} may/must (can/cannot) considered in our testing scenario are, of course, well known key concepts in  classical bisimulation theory, modal logics and concurrency theory \cite{M80,BdRVModal}.

\subsection{Relation between Convex and UE-bisimilarity}
UE-bisimilarity arises naturally from the simple testing scenario discussed above and, as we shall discuss in Section \ref{real_valued_section}, enjoys a remarkable natural connection with real-valued modal logics. It is thus worth to develop its theory and compare it with that of convex bisimilarity. In this section we show that the two notions coincide on a wide class of systems by means of an alternative characterization of UE-bisimilarity.
Recall that $\mathcal{P}_c\mathcal{D}$-coalgebras (cf. Remark \ref{working_with_convex}) can be thought as $\mathcal{P}\mathcal{D}$-coalgebras modulo the behavioral equation $A\!\approx \!B$ if $\textnormal{H}(A)\!=\!\textnormal{H}(B)$, for $A,B\in\mathcal{P}\mathcal{D}(X)$. Following the same idea, to understand UE-bisimilarity coalgebraically one needs to consider the behavioral equation $A\!\approx \!B$ if $ue_A\!=\! ue_B$ (pointwise equality). As it turns out, the two equations coincide under very mild conditions. In rest of the paper we restrict attention to a fairly simple (yet important) class of PNTS's, as this greatly simplifies the technical development.
\begin{convention}\label{conv_1}
We restrict attention to PNTS's having a finite state space, i.e., $\mathcal{P}\mathcal{D}$-coalgebras $(X,\alpha\!:\!X\!\rightarrow\! \mathcal{P}\mathcal{D}(X))$ having a finite carrier set $X\!=\!\{x_1,\dots,x_n\}$. This allows one to view the space of functions $X\!\rightarrow\!\mathbb{R}$ as the Euclidean space $\mathbb{R}^n$ and each $\mu\!\in\!\mathcal{D}(X)$ as the $n$-dimensional vector $\mu=[\mu(x_1),\dots,\mu(x_n)]$. Note that $\mathcal{D}(X)$ is a \emph{closed} and \emph{bounded} (hence \emph{compact}) subset of $\mathbb{R}^n$ and that $\alpha(x)$, for $x\in X$, can be infinite.
\end{convention}
In Section \ref{sec_infinity} we briefly discuss how our results can be generalized to infinite systems at the cost of mathematical complications since infinite dimensional topological spaces need to be considered.

The following is a known result in optimization theory and applied statistics (see, e.g., \cite[\S 10.2]{Huber}, \cite{Walley1991} and \cite{HP2007}).

\begin{theorem}[\cite{Huber}]\label{theorem_UE=convex}
Let $X$ be a finite set and $A,B\in\mathcal{P}\mathcal{D}(X)$. Then 
 $ue_A\!=\!ue_B$ if and only if $\textnormal{cl}(\textnormal{H(A)})\!=\!\textnormal{cl}(\textnormal{H(B)})$,
where $\textnormal{cl}(C)$ denotes the \emph{topological closure} of the set $C$.
\end{theorem}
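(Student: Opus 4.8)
The plan is to recognize the upper expectation functional $ue_S$ as the \emph{support function} of the set $S$, and then to invoke the two basic facts of finite-dimensional convex analysis that (i) the support function is insensitive to passing to the closed convex hull, and (ii) a closed convex set is completely determined by its support function. Writing $\langle \mu, f\rangle = \sum_x \mu(x)f(x)$ for the standard inner product on $\mathbb{R}^n$ (recall Convention \ref{conv_1}), the definition of expected value gives $E_\mu(f) = \langle \mu, f\rangle$, so that
\[
ue_S(f) = \bigsqcup_{\mu\in S}\langle \mu, f\rangle,
\]
which is precisely the support function of $S$ evaluated at the direction $f$.

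The first step I would carry out is to show $ue_S = ue_{\textnormal{cl}(\textnormal{H}(S))}$ for every $S\in\mathcal{P}\mathcal{D}(X)$. For the convex-hull part, each map $\mu\mapsto\langle \mu, f\rangle$ is linear, so for any convex combination $\sum_i\lambda_i\mu_i$ of elements of $S$ one has $\langle\sum_i\lambda_i\mu_i,f\rangle=\sum_i\lambda_i\langle\mu_i,f\rangle\leq \bigsqcup_{\mu\in S}\langle\mu,f\rangle$; combined with $S\subseteq\textnormal{H}(S)$ this yields $ue_S=ue_{\textnormal{H}(S)}$. For the closure part, $\mu\mapsto\langle\mu,f\rangle$ is continuous, so its supremum over a set equals its supremum over the closure; hence $ue_{\textnormal{H}(S)}=ue_{\textnormal{cl}(\textnormal{H}(S))}$. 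This already settles the \emph{if} direction: if $\textnormal{cl}(\textnormal{H}(A))=\textnormal{cl}(\textnormal{H}(B))$ then $ue_A=ue_{\textnormal{cl}(\textnormal{H}(A))}=ue_{\textnormal{cl}(\textnormal{H}(B))}=ue_B$.

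For the substantive \emph{only if} direction I would set $K_A=\textnormal{cl}(\textnormal{H}(A))$ and $K_B=\textnormal{cl}(\textnormal{H}(B))$. Since $A,B\subseteq\mathcal{D}(X)$ and $\mathcal{D}(X)$ is compact (Convention \ref{conv_1}), both $K_A$ and $K_B$ are compact convex subsets of $\mathbb{R}^n$, and by the first step $ue_A=ue_B$ means they share the same support function. Arguing by contradiction, suppose $K_A\neq K_B$, say there is a point $p\in K_A\setminus K_B$ (the other case being symmetric). As $K_B$ is closed and convex and $p\notin K_B$, the separating hyperplane theorem supplies $f\in\mathbb{R}^n$ and a scalar $c$ with $\langle\nu,f\rangle\leq c<\langle p,f\rangle$ for all $\nu\in K_B$. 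Then
\[
ue_B(f)=\bigsqcup_{\nu\in K_B}\langle\nu,f\rangle\leq c<\langle p,f\rangle\leq\bigsqcup_{\mu\in K_A}\langle\mu,f\rangle=ue_A(f),
\]
contradicting $ue_A=ue_B$. Hence $K_A=K_B$, as required.

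The only genuine obstacle is the separation step, and it dictates the exact shape of the statement: it is essential to compare the \emph{closed} convex hulls rather than $A,B$ or even $\textnormal{H}(A),\textnormal{H}(B)$ directly, since the separating hyperplane theorem requires a closed convex target set. It is precisely the closedness — guaranteed here by compactness of $\mathcal{D}(X)$, so that $K_A,K_B$ are compact — that makes the support function a \emph{faithful} invariant of the set. Everything else reduces to the routine manipulation of suprema of linear functionals carried out in the first step.
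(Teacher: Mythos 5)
Your proof is correct. There is, in fact, no internal proof to compare against: the paper imports this statement as a known result, citing Huber and Walley, and never proves it. Your argument---identifying $ue_S$ with the support function of $S$, checking that the support function is invariant under passing to the convex hull (by linearity of $\mu \mapsto \langle \mu, f\rangle$) and under closure (by continuity), and then recovering injectivity on closed convex sets via strict separation of a point from a compact convex set---is precisely the standard convex-analysis proof of this fact, and is essentially the argument in the cited sources. Two small remarks. First, your separation step tacitly assumes $K_B \neq \emptyset$; the degenerate case is trivial (if $B = \emptyset \neq A$ then $ue_A(\underline{1}) = 1$ while $ue_B(\underline{1})$ is $0$ under the paper's convention for $ue_\emptyset$, so the functionals already differ), but it should be flagged since $\mathcal{P}\mathcal{D}(X)$ contains the empty set. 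Second, your closing observation---that closedness of the hulls, guaranteed here by compactness of $\mathcal{D}(X) \subseteq \mathbb{R}^n$ under Convention \ref{conv_1}, is exactly what makes the support function a faithful invariant---correctly identifies why the theorem is stated for closed convex hulls rather than for $\textnormal{H}(A)$ and $\textnormal{H}(B)$ themselves, and is the same point the paper exploits later when it restricts attention to PNTS's with closed $\alpha(x)$ (Proposition \ref{closedness_condition}).
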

It is a standard result in linear algebra that the closure of a convex set is itself convex (see, e.g., \cite[\S 8.4]{LAX}). For this reason the set $\textnormal{cl}(\textnormal{H(A)})$ is called the \emph{closed convex hull} of $A$. In what follows we denote with $\overline{\textnormal{H}}$ the operation $A\mapsto \textnormal{cl}(\textnormal{H(A)})$.

We can use the result of Theorem \ref{theorem_UE=convex} to prove the following alternative characterization of UE-bisimulation (cf. Definition \ref{def_convex_bisim}).

\begin{theorem}\label{def_convexclosed_bisim}
Given a PNTS $(X,\alpha)$, an equivalence relation $E\subseteq X\times X$ is a UE-bisimulation iff for all $(x,y)\in E$, 
\begin{itemize}\item if $x\!\rightarrow_{CC}\! \mu$ then there exists $\nu$ such that $y\!\rightarrow_{CC}\! \nu $ and $\mu\!=_{E}\!\nu$,
\item if $y\!\rightarrow_{CC}\! \nu$ then there exists $\mu$ such that $x\!\rightarrow_{CC}\! \mu $ and $\mu\!=_{E}\!\nu$,
 \end{itemize}
where $x\!\rightarrow_{CC}\!\mu$ if and only if $\mu\in \overline{\textnormal{H}}(\alpha(x))$.
\end{theorem}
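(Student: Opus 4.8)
The plan is to reduce everything to the quotient set $X/E$ and then invoke Theorem~\ref{theorem_UE=convex} there. The starting observation is that $E$-invariant functions $f\colon X\to\mathbb{R}$ are in bijection with arbitrary functions $\bar f\colon X/E\to\mathbb{R}$ via $f=\bar f\circ q_E$, where $q_E$ is the quotient map of Definition~\ref{def_quotient_map}. A one-line computation using Definition~\ref{functor_distribution} shows that for any such $f$ and any $\mu\in\mathcal{D}(X)$ one has $E_\mu(f)=E_{q_E[\mu]}(\bar f)$, simply by grouping the sum $\sum_x\mu(x)f(x)$ according to $E$-classes and using $E$-invariance of $f$. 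Taking suprema over $\mu\in\alpha(x)$ gives $ue_{\alpha(x)}(f)=ue_{q_E[\alpha(x)]}(\bar f)$, and likewise for $y$. Hence the defining condition of UE-bisimulation (Definition~\ref{def_UE_bisimilarity}), namely $ue_{\alpha(x)}(f)=ue_{\alpha(y)}(f)$ for all $E$-invariant $f$, is equivalent to $ue_{q_E[\alpha(x)]}=ue_{q_E[\alpha(y)]}$ as functionals on $(X/E)\to\mathbb{R}$.

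Since $X$ is finite (Convention~\ref{conv_1}), so is $X/E$, and $q_E[\alpha(x)],q_E[\alpha(y)]\in\mathcal{P}\mathcal{D}(X/E)$. I would therefore apply Theorem~\ref{theorem_UE=convex} on the finite set $X/E$ to conclude that the last equality holds iff $\overline{\textnormal{H}}(q_E[\alpha(x)])=\overline{\textnormal{H}}(q_E[\alpha(y)])$.

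The remaining---and main---step is to show that the pushforward $q_E[\,\cdot\,]$ commutes with the closed-convex-hull operator, i.e.\ $\overline{\textnormal{H}}(q_E[A])=q_E[\overline{\textnormal{H}}(A)]$ for every $A\in\mathcal{P}\mathcal{D}(X)$. The map $\mu\mapsto q_E[\mu]$ is the restriction of a linear map $\mathbb{R}^{X}\to\mathbb{R}^{X/E}$, hence affine and continuous, so it sends convex combinations to convex combinations, giving $q_E[\textnormal{H}(A)]=\textnormal{H}(q_E[A])$. For the closures one uses that $\overline{\textnormal{H}}(A)=\textnormal{cl}(\textnormal{H}(A))$ is compact (closed and bounded in $\mathbb{R}^{n}$ by Convention~\ref{conv_1}): continuity gives $q_E[\textnormal{cl}(\textnormal{H}(A))]\subseteq\textnormal{cl}(q_E[\textnormal{H}(A)])=\overline{\textnormal{H}}(q_E[A])$, while compactness makes $q_E[\overline{\textnormal{H}}(A)]$ itself closed; since it already contains $q_E[\textnormal{H}(A)]=\textnormal{H}(q_E[A])$, it contains $\overline{\textnormal{H}}(q_E[A])$ as well, and the two inclusions yield equality. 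This identity is the only place where topology genuinely enters, and it is precisely what forces the use of the closed convex hull $\overline{\textnormal{H}}$ rather than the plain hull $\textnormal{H}$; I expect it to be the main technical obstacle, the rest being either routine linear algebra or the unwinding of definitions.

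Finally I would unwind the two surviving conditions into the statement. Combining the previous steps, $E$ is a UE-bisimulation iff $q_E[\overline{\textnormal{H}}(\alpha(x))]=q_E[\overline{\textnormal{H}}(\alpha(y))]$ for all $(x,y)\in E$. Recalling that $\mu=_E\nu$ is equivalent to $q_E[\mu]=q_E[\nu]$ (both say $\mu$ and $\nu$ assign equal mass to every union of $E$-classes), this set equality says exactly that $\overline{\textnormal{H}}(\alpha(x))$ and $\overline{\textnormal{H}}(\alpha(y))$ are equal modulo $E$: every $\mu\in\overline{\textnormal{H}}(\alpha(x))$ admits some $\nu\in\overline{\textnormal{H}}(\alpha(y))$ with $\mu=_E\nu$, and symmetrically. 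Since $x\rightarrow_{CC}\mu$ means precisely $\mu\in\overline{\textnormal{H}}(\alpha(x))$, this is verbatim the two-clause condition in the statement, which completes the equivalence.
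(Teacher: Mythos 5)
Your proof is correct and follows essentially the same route as the paper's: both reduce to the quotient $X/E$ via the bijection between $E$-invariant functions and functions on $X/E$, apply Theorem~\ref{theorem_UE=convex} over the finite set $X/E$, and use linearity/continuity of the pushforward $\mu\mapsto q_E[\mu]$ to identify closed convex hulls before and after quotienting. The only difference is one of detail: you spell out the commutation identity $\overline{\textnormal{H}}(q_E[A])=q_E[\overline{\textnormal{H}}(A)]$ (including the compactness argument that makes $q_E[\overline{\textnormal{H}}(A)]$ closed), which the paper compresses into the single remark that $q_E[\cdot]$ is linear, hence continuous, so that $\overline{\textnormal{H}}(A)/E$ is convex closed.
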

\begin{proof}
For $A\in\mathcal{P}\mathcal{D}(X)$, denote with $A/E\!\in\!\mathcal{P}\mathcal{D}(X/E)$ the set $A/E=\{  q[\mu] \ | \ \mu\in A\}$ (cf. discussion after Definition \ref{def_standard_bisim}), where $q$ is the quotient map of $E$. We can now rephrase the assertion of the theorem as follows. The relation $E$ is a UE-bisimulation iff for every $(x,y)\!\in\! E$ it holds that $\overline{\textnormal{H}}(\alpha(x))/E \!=\!\overline{\textnormal{H}}(\alpha(y))/E$.

Observe that the map $\mu\mapsto q[\mu]$ is linear, hence continuous since $\mathbb{R}^n$ is finite dimensional. This implies that, for every convex closed set $A\!\in\!\mathcal{P}\mathcal{D}(X)$, the set 
 $\overline{\textnormal{H}}(A)/E$ is a convex closed set in $\mathcal{P}\mathcal{D}(X/E)$. 
Note that there exists a one-to-one correspondence between $E$-invariant functions $f\!:\!X\!\rightarrow\! \mathbb{R}$ and arbitrary functions $g\!:\!X/E\!\rightarrow\!\mathbb{R}$. The definition of  UE-bisimilarity as in Definition \ref{def_UE_bisimilarity} can then be rephrased as: $(x,y)\in E$ implies $ue_{\alpha(x)/E}=ue_{\alpha(y)/E}$. The result then follows by Theorem \ref{theorem_UE=convex}.
\end{proof}

Thus UE-bisimilarity can be obtained by replacing $\rightarrow_{C}$ with $\rightarrow_{CC}$ in Definition \ref{def_convex_bisim} of convex bisimilarity.   As a consequence UE-bisimilarity can be strictly coarser than convex bisimilarity. The following lemma, however, reveals that the two notions coincide for a wide class of PNTS's.
\begin{proposition}\label{closedness_condition}
Let $(X,\alpha)$ be a PNTS such that $\alpha(x)$ is closed for all $x\in X$. Then $E\subseteq X\times X$ is a convex bisimulation if and only if it is a UE-bisimulation.
\end{proposition}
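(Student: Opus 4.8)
The plan is to reduce the proposition to a single topological fact: that under the closedness hypothesis the convex hull $\textnormal{H}(\alpha(x))$ already coincides with its closure $\overline{\textnormal{H}}(\alpha(x))$ for every state $x$. Granting this, the relations $\rightarrow_{C}$ (defined via $\textnormal{H}$, cf.\ Definition \ref{def_convex_bisim}) and $\rightarrow_{CC}$ (defined via $\overline{\textnormal{H}}$, cf.\ Theorem \ref{def_convexclosed_bisim}) are literally the same relation on $X\times\mathcal{D}(X)$. Consequently the defining clauses of convex bisimulation and the characterization of UE-bisimulation proved in Theorem \ref{def_convexclosed_bisim} become word-for-word identical, so an equivalence relation $E$ satisfies one set of clauses if and only if it satisfies the other, which is exactly the assertion of the proposition.

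It therefore remains to show $\textnormal{H}(\alpha(x))=\overline{\textnormal{H}}(\alpha(x))$, i.e.\ that $\textnormal{H}(\alpha(x))$ is closed. By Convention \ref{conv_1} we may regard $\mathcal{D}(X)$ as a compact subset of $\mathbb{R}^n$; since $\alpha(x)$ is assumed closed and is contained in the compact set $\mathcal{D}(X)$, it is itself compact. The argument now rests on the classical fact that in a finite-dimensional space the convex hull of a compact set is again compact (hence closed). The cleanest route is Carath\'eodory's theorem: every point of $\textnormal{H}(\alpha(x))$ is a convex combination of at most $n+1$ points of $\alpha(x)$, so $\textnormal{H}(\alpha(x))$ is the image of the compact set $\alpha(x)^{n+1}\times\Delta$, where $\Delta=\{\lambda\in[0,1]^{n+1} \mid \sum_i\lambda_i=1\}$ is the standard simplex, under the continuous map $(\mu_0,\dots,\mu_n,\lambda_0,\dots,\lambda_n)\mapsto\sum_i\lambda_i\mu_i$. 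A continuous image of a compact set is compact, so $\textnormal{H}(\alpha(x))$ is compact and in particular closed, as required.

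The only genuinely nontrivial ingredient is this last compactness argument, and it is precisely here that finite-dimensionality (Convention \ref{conv_1}) is used in an essential way: Carath\'eodory's bound $n+1$ depends on the dimension, and in infinite-dimensional settings the convex hull of a compact set need not be closed. This is the reason the restriction to finite state spaces is needed, and it signals the additional care required for the infinite case alluded to in Section \ref{sec_infinity}. Everything else is bookkeeping: once $\rightarrow_{C}$ and $\rightarrow_{CC}$ have been identified, no further work is needed to match the two bisimulation conditions.
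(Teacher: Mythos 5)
Your proposal is correct and follows essentially the same route as the paper's own proof: closedness plus boundedness of $\mathcal{D}(X)$ gives compactness of $\alpha(x)$, the convex hull of a compact set in $\mathbb{R}^n$ is compact hence closed, so $\textnormal{H}(\alpha(x))=\overline{\textnormal{H}}(\alpha(x))$ and the two bisimulation notions coincide via Theorem \ref{def_convexclosed_bisim}. The only difference is that the paper cites the compactness of the convex hull as a known fact, whereas you supply its proof via Carath\'eodory's theorem --- a welcome bit of extra rigor, but not a different argument.
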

\begin{proof}
If $\alpha(x)\!\subseteq\! \mathcal{D}(X)$ is closed then it is compact since $\mathcal{D}(X)$ is bounded. The convex hull of a compact set in $\mathbb{R}^n$ is itself closed and compact. Thus $\textnormal{H}(\alpha(x))\!=\!\overline{\textnormal{H}}(\alpha(x)))$ and the result follows.
\end{proof}
Restricting to PNTS's of this kind can hardly be seen as a limitation in concrete applications. First, every finite set is closed, thus every PNTS representable as a finite graph satisfies Convention \ref{conv_1} and the closedness condition of Proposition \ref{closedness_condition}. Furthermore every NTS and every Markov process (see \cite{Sokalova2011} and definition below) can be seen as PNTS's satisfying the closedness condition.
\begin{definition}
A (discrete) \emph{Markov process} (or Markov \emph{chain}) is a $\textnormal{MP}$-coalgebra of the functor $\textnormal{MP}(X)=\{*\}+\mathcal{D}(X)$.
\end{definition} 
 
\begin{proposition}\label{NTS_as_PNTS}
The set-indexed collection of maps $\eta_X\!:\!\textnormal{MP}(X)\rightarrow\mathcal{P}\mathcal{D}(X)$ and
$\gamma_X\!:\!\mathcal{P}(X)\rightarrow\mathcal{P}\mathcal{D}(X)$
defined as:
\begin{itemize}
\item $\eta_{X}(*)=\emptyset\ $ and $\ \eta_{X}(\mu)=\{\mu\}$
\item $\gamma_{X}( \{ x_i\}_{i\in I}) = \overline{\textnormal{H}}( \{Ê\delta(x_i)\}_{i\in I})$, with $\delta(x_i)$ as in Definition \ref{functor_distribution}
\end{itemize}
are injective natural transformations.
\end{proposition}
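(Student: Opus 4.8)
The plan is to verify four things: that $\eta$ and $\gamma$ each satisfy the naturality condition, and that each component $\eta_X$ and $\gamma_X$ is injective. Throughout I work under Convention~\ref{conv_1}, so every $X$ is finite and $\mathcal{D}(X)$ sits inside the Euclidean space $\mathbb{R}^{|X|}$; this is what makes the closure in $\overline{\textnormal{H}}$ harmless. I write the naturality squares in the form $\mathcal{P}\mathcal{D}(f)\circ\eta_X=\eta_Y\circ\textnormal{MP}(f)$ and $\mathcal{P}\mathcal{D}(f)\circ\gamma_X=\gamma_Y\circ\mathcal{P}(f)$ for every $f\colon X\to Y$.

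For $\eta$, I would fix $f\colon X\to Y$ and check naturality by a case split on the disjoint-union argument. On the summand $\{*\}$ both sides return $\emptyset$, using $\mathcal{P}\mathcal{D}(f)(\emptyset)=\emptyset$ and $\textnormal{MP}(f)(*)=*$. On a distribution $\mu\in\mathcal{D}(X)$ the left-hand side gives $\mathcal{P}\mathcal{D}(f)(\{\mu\})=\{f[\mu]\}$, while the right-hand side gives $\eta_Y(f[\mu])=\{f[\mu]\}$, since $\textnormal{MP}(f)$ acts as $\mathcal{D}(f)$ on distributions. Injectivity of $\eta_X$ is immediate: $\emptyset$ and the singletons $\{\mu\}$ are pairwise distinct, and $\{\mu\}=\{\nu\}$ forces $\mu=\nu$.

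The substantive part is $\gamma$. Writing $D_A\bydef\{\delta(x)\mid x\in A\}$, I would first record the two facts that drive everything: (i) pushforward sends a Dirac to a Dirac, $f[\delta(x)]=\delta(f(x))$, which is a one-line unfolding of Definition~\ref{functor_distribution}; and (ii) the map $\mu\mapsto f[\mu]$ is linear, hence commutes with the formation of (finite) convex combinations, so $f[\textnormal{H}(D_A)]=\textnormal{H}(f[D_A])$. Combining (i) and (ii) gives $f[D_A]=D_{f[A]}$ and therefore $f[\textnormal{H}(D_A)]=\textnormal{H}(D_{f[A]})$. Because $A$ is finite, $D_A$ is a finite point set and $\textnormal{H}(D_A)$ is a polytope, already closed (the finite instance of the compact-hull fact used in Proposition~\ref{closedness_condition}), so $\overline{\textnormal{H}}(D_A)=\textnormal{H}(D_A)$ and the closures may be dropped on both sides. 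Unwinding the two composites, the naturality square for $\gamma$ becomes precisely the identity $f[\overline{\textnormal{H}}(D_A)]=\overline{\textnormal{H}}(D_{f[A]})$ just established.

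For injectivity of $\gamma_X$, the key observation is that a Dirac $\delta(x)$ lies in $\gamma_X(A)=\textnormal{H}(D_A)$ if and only if $x\in A$: every convex combination of Diracs indexed by $A$ is supported on $A$, so it can equal $\delta(x)$ only when $x\in A$, while conversely $\delta(x)\in D_A\subseteq\textnormal{H}(D_A)$ when $x\in A$. Hence $A$ is recovered from $\gamma_X(A)$ as $\{x\mid \delta(x)\in\gamma_X(A)\}$, and $\gamma_X(A)=\gamma_X(B)$ forces $A=B$. I expect no genuine obstacle; the only point needing care is the interaction of pushforward with the closure operation, but under Convention~\ref{conv_1} finiteness collapses $\overline{\textnormal{H}}$ to $\textnormal{H}$ and removes it, so the whole argument reduces to the elementary linearity of $\mu\mapsto f[\mu]$ together with $f[\delta(x)]=\delta(f(x))$.
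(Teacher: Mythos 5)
Your proof is correct. The paper states Proposition \ref{NTS_as_PNTS} without any proof, treating it as a routine verification; your argument --- naturality via $f[\delta(x)]=\delta(f(x))$ together with linearity of the pushforward $\mu\mapsto f[\mu]$, the collapse of $\overline{\textnormal{H}}$ to $\textnormal{H}$ on finite sets (hulls of finitely many points are closed polytopes), and the recovery of $A$ as $\{x \mid \delta(x)\in\gamma_X(A)\}$ for injectivity --- is exactly the argument the paper leaves implicit, so there is nothing to diverge from.
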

Thus we shall consider every NTS and Markov Process as the (unique) corresponding PNTS with the closedness property.

\begin{remark}
Due to the lack of space, we just mention in this remark that it possible to define the functor $\mathcal{P}_{cc}\mathcal{D}$ mapping $X$ to the space of convex compact closed subsets of $\mathcal{D}(X)$ and consider the expected natural transformations $\textnormal{id}\!:\! \mathcal{P}_{cc}\mathcal{D}\rightarrow \mathcal{P}_c\mathcal{D}$ and $\textnormal{cl}\!:\!\mathcal{P}_c\mathcal{D}\!\rightarrow\! \mathcal{P}_{cc}\mathcal{D}$. It is then simple to verify that the notion of UE-bisimilarity coincides with that of cocongruence on $\mathcal{P}_{cc}\mathcal{D}$-coalgebras (cf. Remark \ref{working_with_convex} and associated discussions). Note that the natural transformations of Proposition \ref{NTS_as_PNTS} restrict to type $\textnormal{MP}\rightarrow\mathcal{P}_{cc}\mathcal{D}$ and $\mathcal{P}\rightarrow\mathcal{P}_{cc}\mathcal{D}$, respectively.
\end{remark}

As discussed in Section \ref{def_convex_bisim},  the notion of probabilistic scheduler gives good reasons for considering two convex bisimilar states as behaviorally equivalent. On the other hand UE-bisimilarity provides witnesses (experiments) $f\!:\!X\!\rightarrow\!\mathbb{R}$ which can be used to distinguish states that are not UE-bisimilar (cf. example of Figure \ref{example_pic_2}). Thus the two \emph{a posteriori} equivalent (under the mild  closedness assumption) viewpoints complement each other in a nice way. 


\begin{remark}\label{remark_dealfaro2}
As already mentioned in Remark \ref{remark_dealfaro1}, in the important work of \cite{deAlfaro2008}, game bisimilarity is discovered as the kernel 
of a behavioral (pseudo)metric $d$ induced by the quantitative $[0,1]$-valued logic qL$\mu^\ominus$ (which, as we will discuss in Section \ref{logic_mu_calculi}, is a weak logic not capable of encoding, e.g., the  logic PCTL of \cite{BA1995}). The authors of \cite{deAlfaro2008} argue in favor of game bisimilarity on the basis of the naturalness of the logic qL$\mu^\ominus$. Our explanation in terms of real-valued experiments is perhaps useful in clarifying and further motivating this notion using logic-free arguments based on the simple (and, as we will discuss in Section \ref{logic_mu_calculi}, more general) metaphor of $\mathbb{R}$-valued experiments. No connection between UE-bisimilarity and convex bisimilarity is discussed in \cite{deAlfaro2008}.
\end{remark}

\section{Real Valued Modal Logics}\label{real_valued_section}
The result of Theorem \ref{theorem_UE=convex} states that the closed convex hull operation $A\mapsto \overline{\textnormal{H}}(A)$ and the upper expectation
$A\mapsto ue_A$ are essentially the same operation.
\begin{proposition}[\cite{Huber}]
Let $X$ be a finite set and $A\in\mathcal{P}\mathcal{D}(X)$. Then $\{ \mu \in \mathcal{D}(X)\ | \ \forall f\!:\!X\!\rightarrow\!\mathbb{R}. \big( E_\mu(f)\!\leq\! ue_A(f) \big)\}\!=\! \overline{\textnormal{H}}(A)$.
\end{proposition}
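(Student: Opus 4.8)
The plan is to prove the two set-inclusions separately, with the easy direction first. Write $C = \{\mu \in \mathcal{D}(X) \mid \forall f\!:\!X\!\rightarrow\!\mathbb{R}.\ E_\mu(f) \leq ue_A(f)\}$ for the left-hand side. For the inclusion $\overline{\textnormal{H}}(A) \subseteq C$, first observe that every $\mu \in A$ trivially satisfies $E_\mu(f) \leq \bigsqcup_{\nu\in A} E_\nu(f) = ue_A(f)$ by definition of the supremum, so $A \subseteq C$. Then note that $C$ is convex: if $\mu_1,\mu_2 \in C$ and $\lambda \in [0,1]$, then since $f \mapsto E_\mu(f)$ is linear in $\mu$, we get $E_{\lambda\mu_1 + (1-\lambda)\mu_2}(f) = \lambda E_{\mu_1}(f) + (1-\lambda)E_{\mu_2}(f) \leq \lambda\, ue_A(f) + (1-\lambda)\,ue_A(f) = ue_A(f)$. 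Finally $C$ is closed, being an intersection of the closed half-spaces $\{\mu \mid E_\mu(f) \leq ue_A(f)\}$ (each $\mu \mapsto E_\mu(f)$ is continuous). Since $\overline{\textnormal{H}}(A)$ is the smallest closed convex set containing $A$ and $C$ is a closed convex set containing $A$, we conclude $\overline{\textnormal{H}}(A) \subseteq C$.

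For the reverse inclusion $C \subseteq \overline{\textnormal{H}}(A)$, I would argue by contraposition using a separating-hyperplane argument, which is the heart of the matter. Suppose $\mu \notin \overline{\textnormal{H}}(A)$. Since $\overline{\textnormal{H}}(A)$ is a closed convex subset of the finite-dimensional space $\mathbb{R}^n$ and $\{\mu\}$ is compact and disjoint from it, the Hahn--Banach (or finite-dimensional geometric) separation theorem yields a linear functional $\ell\!:\!\mathbb{R}^n \to \mathbb{R}$ and a constant $c$ with $\ell(\mu) > c \geq \ell(\nu)$ for all $\nu \in \overline{\textnormal{H}}(A)$. Any linear functional on $\mathbb{R}^n$ is of the form $\ell(\nu) = \sum_x \nu(x) f(x) = E_\nu(f)$ for a suitable $f\!:\!X\!\rightarrow\!\mathbb{R}$, identifying the dual of $\mathbb{R}^n$ with functions $X \to \mathbb{R}$ under Convention~\ref{conv_1}. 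Then $E_\mu(f) > c \geq \sup_{\nu \in \overline{\textnormal{H}}(A)} E_\nu(f) \geq \sup_{\nu \in A} E_\nu(f) = ue_A(f)$, so $E_\mu(f) > ue_A(f)$, witnessing $\mu \notin C$.

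The main obstacle is making the separation step fully rigorous: one must confirm that $\overline{\textnormal{H}}(A)$ is genuinely closed and convex (convexity of the closure of a convex set is cited from linear algebra in the remark following Theorem~\ref{theorem_UE=convex}, and $\textnormal{H}(A)$ is convex by construction) so that strict separation of a point from a closed convex set applies. A minor subtlety is that the separating functional produced by Hahn--Banach could a priori involve a strict versus non-strict inequality, but since $\{\mu\}$ is compact and $\overline{\textnormal{H}}(A)$ is closed, strict separation $\ell(\mu) > c \geq \ell(\nu)$ is available, which is exactly what is needed. Combining the two inclusions gives $C = \overline{\textnormal{H}}(A)$ as claimed. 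I would also remark that this proposition is essentially the content already extracted in Theorem~\ref{theorem_UE=convex}, since $ue_A = ue_B$ iff the two sets $C$ associated to $A$ and $B$ coincide, which by this proposition happens iff $\overline{\textnormal{H}}(A) = \overline{\textnormal{H}}(B)$.
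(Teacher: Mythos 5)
Your proof is correct, and the main thing to note is that the paper contains no proof of this proposition to compare against: it is imported verbatim from \cite{Huber}, just like Theorem~\ref{theorem_UE=convex}, so your argument supplies what the paper delegates to the literature. What you give is the standard duality argument behind that citation, and both halves are sound: the inclusion $\overline{\textnormal{H}}(A)\subseteq C$ follows because $C$ is a closed convex set containing $A$ while $\overline{\textnormal{H}}(A)$ is the smallest such set, and the reverse inclusion is finite-dimensional strict separation of a point from a closed convex set, with the separating functional rewritten as $\nu\mapsto E_\nu(f)$ for a suitable $f\colon X\to\mathbb{R}$ via the identification of linear functionals on $\mathbb{R}^n$ with vectors, licensed by Convention~\ref{conv_1}; strict separation is indeed available since $\{\mu\}$ is compact and $\overline{\textnormal{H}}(A)$ is closed and convex. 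The one loose end is the degenerate case $A=\emptyset$, where there is nothing to separate against: there $\overline{\textnormal{H}}(A)=\emptyset$, and $C$ is also empty (taking $f=\underline{1}$ already excludes every $\mu$, whether one reads $ue_\emptyset$ as the supremum over the empty set or, as the paper later does, as the constant-zero functional), so the claim holds vacuously; one sentence covering this would make the proof airtight. Your closing observation is also accurate: the two inclusions give $ue_A=ue_{\overline{\textnormal{H}}(A)}$, from which Theorem~\ref{theorem_UE=convex} follows, so the proposition and that theorem are two faces of the same separation argument.
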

\noindent
Thus from the functional $ue_A$ it is possible to construct $ \overline{\textnormal{H}}(A)$ and, by Theorem \ref{theorem_UE=convex}, from $\overline{\textnormal{H}}(A)$ one obtains $ue_A\!=\!ue_{\overline{\textnormal{H}}(A)}$. Hence, we can look at the transition map $\alpha$ of a PNTS $(X,\alpha)$, with $\alpha(x)$ convex closed for all $x\in X$ (i.e., a $\mathcal{P}_{cc}\mathcal{D}$-coalgebra, cf. Remark \ref{working_with_convex}), both as a function  $(x\mapsto \alpha(x))$ of type $X\rightarrow \mathcal{P}_{cc}\mathcal{D}(X)$ and as a function  ($x\mapsto ue_{\alpha(x)}$) of type $X\rightarrow \big((X\rightarrow\mathbb{R})\rightarrow\mathbb{R}\big)$. Equivalently (by currying) the transition map $\alpha$ can seen as the function transformer $\Diamond_\alpha: (X\rightarrow\mathbb{R})\rightarrow(X\rightarrow\mathbb{R})$ defined as:
\begin{equation}\label{definition_diamond}
\big( \Diamond_{\alpha} f\big)  (x)= ue_{\alpha(x)}(f) \stackrel{\textnormal{Def \ref{ue_functional}}}{=} \displaystyle \bigsqcup_{x\rightarrow \mu}E_\mu(f)
\end{equation}
 It is remarkable here that the function transformer $\Diamond_{\alpha}$ happens to coincide with the interpretation (i.e., semantics) of the \emph{diamond modality} in all quantitative logics for PNTS's in the literature \cite{HM96,MM07,AM04,MIO2012b,DGJP2002}. While it is obvious that the PNTS $(X,\alpha)$ induces $\Diamond_\alpha$ (just as in the definition), the fact that from $\Diamond_\alpha$ one can reconstruct the $\mathcal{P}_{cc}\mathcal{D}$-coalgebra $(X,\alpha)$ is far from clear. Some of the consequences of this observation are explored in Section \ref{logic_R}.

$\mathbb{R}$-valued  modal logics are based on the fundamental idea of replacing ordinary Boolean predicates (i.e., $\{0,1\}$-valued functions) with quantitative ones. Thus, the semantics of a formula $\phi$, interpreted on a PNTS $(X,\alpha)$, is a function $\sem{\phi}\!:\!X\!\rightarrow\!\mathbb{R}$ and, in particular, $\sem{\Diamond\phi}\!=\!\Diamond_\alpha(\sem{\phi})$. Following our discussions, \emph{formulas} can then be thought of as \emph{experiments} and the value $\sem{\Diamond\phi}(x)$ as the maximal (limit) expected value of experiment $\phi$ performed after ``pushing the button'' (in Milner's terminology) at $x$. 
\begin{example}\label{example_formulas_as_experiments}
E.g., the experiment distinguishing the states $x$ and $y$ of Figure \ref{example_pic_2} corresponds to the formula $\Diamond \phi$, for some $\phi$ crafted in such a way that $\sem{\phi}(x_1)=60$, $\sem{\phi}(x_2)=0$ and $\sem{\phi}(x_3)=50$.
\end{example}
The many concrete $\mathbb{R}$-valued modal logics appearing in the literature are obtained by considering other connectives to be used in combination with $\Diamond$. For example the constant $\underline{1}$ ($\sem{\underline{1}}(x)\!=\!1$) and the  connective $\sqcup$ ($\sem{\phi\sqcup \psi}(x)\!=\!\max\{Ê\sem{\phi}(x), \sem{\psi}(x)\}$) are considered in all the logics we are aware of. Different choices of connectives have distinct advantages over each other in terms of, e.g., model checking complexity, expressivity, game semantics, compositional reasoning methods, \emph{etc}. All $\mathbb{R}$-valued modal logics we are aware of are \emph{sound} with respect to UE-bisimilarity:
\begin{definition}[Soundness]\label{soundness_def}
An $\mathbb{R}$-valued modal logic is \emph{sound} if, for every PNTS $(X,\alpha)$, UE-bisimulation $E$ and formula $\phi$ it holds that if $(x,y)\!\in\! E$ then $\sem{\phi}(x)=\sem{\phi}(y)$, i.e., $\sem{\phi}$ is $E$-invariant.
\end{definition}
The following is a natural \emph{desideratum} for a modal logic:
\begin{definition}[Strong Completeness]\label{strong_completeness_def}
A $\mathbb{R}$-valued modal logic is \emph{strongly complete} if, for every $(X,\alpha)$, the  set of functions $\sem{\phi}:X\rightarrow\mathbb{R}$, with $\phi$ ranging over formulas, is dense\footnote{We consider the uniform metric on $X\rightarrow\mathbb{R}$, i.e., the one induced by the sup norm. Thus $\mathcal{F}$ is dense in  $X\rightarrow\mathbb{R}$ if for every $f:X\rightarrow\mathbb{R}$ and $\epsilon >0$, there exists $g\in\mathcal{F}$ such that $\max_{x\in X}{ | f(x)-g(x)| <\epsilon}.$} in the set $X\rightarrow\mathbb{R}$ of functions that respect UE-bisimilarity, i.e., that are $E$-invariant for every UE-bisimulation $E$.
\end{definition}
Thus a logic is strongly complete if every experiment $X\rightarrow\mathbb{R}$ (which can not separate UE-bisimilar states) can be approximated, with an arbitrary level of precision, by (the denotation of) some formula $\phi$. Note that we consider approximations since a logic typically consists of a countable set of formulas, while $X\rightarrow\mathbb{R}$ is uncountable. Strong completeness is a natural notion and, as we will discuss in the next sections, a mathematically well behaved one. Note that, another weaker form of completeness could be considered, simply requiring that, whenever $\sem{\phi}(x)=\sem{\phi}(y)$ for all $\phi$, then $x$ and $y$ are UE-bisimilar. As mentioned earlier (cf. Remark \ref{remark_dealfaro2}) the logic qL$\mu^\ominus$ of \cite{deAlfaro2008} is weakly complete. 

We will show in Section \ref{logic_mu_calculi} that the probabilistic $\mu$-calculi of \cite{MioThesis} are sound and strongly complete. Before proving this, however, we develop the theory of a novel modal logic in Section \ref{logic_R} designed on the basis of first principles coming from deep results in linear algebra and functional analysis discussed in Section \ref{sec_representation_theorems}.

\begin{remark}
It is important to appreciate how we derived the core of a $\mathbb{R}$-valued logic for PNTS's (i.e., the interpretation of the basic modality $\Diamond$) from the very elementary and concrete observation (motivating UE-bisimilarity) that $\mathbb{R}$-valued experiments on PNTS's are useful and, due to their greater observational power, should replace ordinary $\{0,1\}$-observations (cf. example of Figure \ref{example_pic_2}).
\end{remark}

\subsection{Representation Theorems}\label{sec_representation_theorems}
The function space $X\!\rightarrow\!\mathbb{R}$ is an object of supreme importance in linear algebra and functional analysis \cite{LAX}. It is the real vector space (with sums and scalar multiplication defined pointwise), often denoted by $C(X)$, of all $\mathbb{R}$-valued continuous\footnote{Every function $X\rightarrow\mathbb{R}$ is continuous since $X$ is assumed to be finite and endowed with the discrete topology.} functions on $X$ normed\footnote{\label{remark_unique_topology}Recall that every norm of a finite dimensional space induces the same topology. In other words, the notion of closedness is norm-invariant in $\mathbb{R}^n$.} by the \emph{sup norm} (also known as $L^\infty$ norm \cite{LAX}), defined as $\| f \|_\infty \!=\! \max_{x}\! |f(x)|$. The subspace of $C(X)\!\rightarrow\!\mathbb{R}$ consisting of \emph{linear functionals} (i.e., such that  $F(\lambda_1 f_1+\lambda_2 f_2)=\lambda_1F(f_1)+\lambda_2F(f_2)$) is known as the \emph{continuous linear dual of $C(X)$}, and denoted by $C(X)^*$. The following is the finite-dimentional variant of the Riesz Representation Theorem, a result of fundamental importance in analysis and measure theory (see, e.g., \cite[\S A.1]{LAX}).
\begin{theorem}[Riesz Representation Theorem]\label{riesz_representation}
Let $X$ be a finite set and $\mu\!\in\!\mathcal{D}(X)$. Then $E_\mu$ is a monotone (i.e., if $f\sqsubseteq g$ then $E_\mu(f)\leq E_\mu(g) $) linear functional. Dually, every monotone linear functional $F\!\in\! C(X)^*$ such that $F(\underline{1})\!=\!1$ (where $\underline{1}(x)=1)$ is of the form $E_\mu$ for some $\mu\!\in\!\mathcal{D}(X)$
\end{theorem}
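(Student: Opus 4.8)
The plan is to treat the two directions separately, exploiting Convention \ref{conv_1}, which identifies $C(X)$ with $\mathbb{R}^n$ via $f\mapsto (f(x_1),\dots,f(x_n))$. The forward direction is a routine computation: linearity of $E_\mu$ follows by distributing the finite sum $\sum_x \mu(x)\big(\lambda_1 f_1(x)+\lambda_2 f_2(x)\big)$, and monotonicity follows because each weight $\mu(x)$ is nonnegative, so $f\sqsubseteq g$ forces $\sum_x \mu(x)f(x)\leq \sum_x \mu(x)g(x)$, i.e.\ $E_\mu(f)\leq E_\mu(g)$.

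The substance lies in the converse. First I would fix the standard basis of $C(X)=\mathbb{R}^n$ consisting of the functions $\delta_{x_i}$ (the Dirac distributions viewed as $\{0,1\}$-valued functions, i.e.\ the $i$-th coordinate vectors), where $\delta_{x_i}(x_j)=1$ if $i=j$ and $0$ otherwise. Every $f\in C(X)$ decomposes as $f=\sum_i f(x_i)\,\delta_{x_i}$, so linearity of $F$ immediately yields $F(f)=\sum_i f(x_i)\,F(\delta_{x_i})$. This already shows $F=E_\mu$ for the candidate weights $\mu(x_i):=F(\delta_{x_i})$; what remains is to check that these weights form a genuine probability distribution, and here $\mu$ is uniquely pinned down by $\mu(x_i)=F(\delta_{x_i})$.

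The two defining conditions of $\mathcal{D}(X)$ are exactly where the two hypotheses on $F$ are consumed. For nonnegativity, observe that $\delta_{x_i}\sqsupseteq \underline{0}$ pointwise and that $F(\underline{0})=0$ by linearity; monotonicity then gives $\mu(x_i)=F(\delta_{x_i})\geq 0$. For normalization, note that $\sum_i \delta_{x_i}=\underline{1}$, so by linearity $\sum_i \mu(x_i)=F(\underline{1})=1$. Hence $\mu\in\mathcal{D}(X)$ and $F=E_\mu$, as desired.

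I do not expect a genuine obstacle in this finite-dimensional setting: finiteness of $X$ collapses the analytic content of the classical Riesz theorem (which over a general compact Hausdorff space requires Hahn--Banach and measure-theoretic machinery) into elementary linear algebra, since every linear functional on a finite-dimensional normed space is automatically continuous and is determined by its values on a basis. The only point demanding care is recognizing that monotonicity of $F$ together with $F(\underline{1})=1$ is \emph{precisely} strong enough to force the representing vector into the probability simplex, and that neither condition in isolation would suffice.
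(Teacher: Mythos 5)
Your proof is correct. Note, however, that the paper does not actually prove Theorem \ref{riesz_representation}: it is quoted as a known classical result with a citation to the literature (Lax, \emph{Functional Analysis}), so there is no in-paper argument to compare against. Your argument is the standard elementary one for the finite-dimensional case: decompose $f=\sum_i f(x_i)\,\delta_{x_i}$ in the coordinate basis, read off the candidate weights $\mu(x_i)=F(\delta_{x_i})$ from linearity, and then spend the two hypotheses exactly where they are needed --- monotonicity (together with $F(\underline{0})=0$) for nonnegativity of the weights, and $F(\underline{1})=1$ for normalization, using $\sum_i \delta_{x_i}=\underline{1}$. You are also right that membership in $C(X)^*$ (continuity) carries no content here, since every linear functional on $\mathbb{R}^n$ is continuous; this is precisely why the paper can state the theorem in this finite setting without analytic machinery, and your proof makes that collapse explicit. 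One purely notational caution: the paper uses $\delta_x$ for Dirac \emph{distributions} in $\mathcal{D}(X)$ and $1_x$ for indicator \emph{functions} (see the proof of Theorem \ref{sets_representation}); since you use $\delta_{x_i}$ in the latter role, you should either switch to $1_{x_i}$ or keep your parenthetical clarification to avoid conflating an element of $\mathcal{D}(X)$ with an element of $C(X)$.
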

Since we are interested in convex closed sets of probability distributions, rather then probabity distributions, we now consider the following variant of  Riesz's theorem.
\begin{theorem}[\cite{Huber}]\label{convex_representation}
Let $X$ be a finite set and $A\!\in\!\mathcal{P}\mathcal{D}(X)$ be a convex closed and nonempty set. Then $ue_A$ is a monotone functional $F\!:\!C(X)\!\rightarrow\!\mathbb{R}$ which is subadditive (i.e., $F(f+g)\leq F (f) + F(g)$), positively affinely homogeneous (i.e., for all $\lambda_1\geq 0$ and $\lambda_2\in\mathbb{R}$, $F( \lambda_1 f + \lambda_2\underline{1})=\lambda_1 F(f)+\lambda_2 F(\underline{1})$) and $F(\underline{1})=1$. Dually, every functional $F:C(X)\rightarrow\mathbb{R}$ with these properties is of the form $ue_A$ for some convex closed nonempty set $A\in\mathcal{P}\mathcal{D}(X)$.
\end{theorem}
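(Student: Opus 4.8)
The plan is to prove the two directions separately, using Theorem~\ref{riesz_representation} (the finite-dimensional Riesz theorem) as the bridge between linear functionals and distributions.

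For the forward direction I would verify the four properties directly from the definition $ue_A(f)=\sup_{\mu\in A}E_\mu(f)$ (Definition~\ref{ue_functional}). Since $A$ is a closed subset of the compact set $\mathcal{D}(X)$ it is itself compact, and $\mu\mapsto E_\mu(f)$ is continuous (indeed linear in $\mu$), so the supremum is attained and $ue_A$ is a well-defined finite functional. Monotonicity, and $ue_A(\underline{1})=1$ (using that $A$ is nonempty), follow pointwise from the corresponding properties of each $E_\mu$ granted by Theorem~\ref{riesz_representation}. Subadditivity is the elementary inequality $\sup_\mu\bigl(E_\mu(f)+E_\mu(g)\bigr)\le \sup_\mu E_\mu(f)+\sup_\mu E_\mu(g)$ together with linearity of $E_\mu$. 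Positive affine homogeneity follows because, for $\lambda_1\ge 0$, one may pull the nonnegative scalar through the supremum and pull the constant $\lambda_2 E_\mu(\underline{1})=\lambda_2$ out of it. None of these steps is delicate.

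The content lies in the dual direction. Given $F$ with the stated properties, define the candidate set $A_F=\{\mu\in\mathcal{D}(X)\mid E_\mu(f)\le F(f)\text{ for all }f\in C(X)\}$, mirroring the recovery formula for the closed convex hull stated earlier in this section. Each constraint $\{\mu\mid E_\mu(f)\le F(f)\}$ is a closed half-space (as $\mu\mapsto E_\mu(f)$ is affine and continuous), so $A_F$ is a closed convex subset of the compact set $\mathcal{D}(X)$. From the defining inequality one obtains $ue_{A_F}(f)=\sup_{\mu\in A_F}E_\mu(f)\le F(f)$ for every $f$, for free.

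The crux, and the step I expect to be the main obstacle, is the reverse inequality $ue_{A_F}\ge F$, which simultaneously forces $A_F\neq\emptyset$. Here I would use that $F$ is a \emph{sublinear} functional: positive homogeneity is the $\lambda_2=0$ instance of affine homogeneity, and this together with subadditivity is precisely sublinearity. Fix an arbitrary $f_0\in C(X)$. By the Hahn--Banach theorem (elementary in the finite-dimensional space $C(X)=\mathbb{R}^n$, provable by extending one coordinate at a time) there is a linear functional $G$ on $C(X)$ with $G(f)\le F(f)$ for all $f$ and $G(f_0)=F(f_0)$. The real work is then to show $G$ has the form $E_\mu$ with $\mu\in A_F$: from $G\le F$, the monotonicity of $F$, and $F(\underline{0})=0$ one deduces (for $\underline{0}\sqsubseteq h$, so $-h\sqsubseteq\underline{0}$) that $G(h)=-G(-h)\ge -F(-h)\ge -F(\underline{0})=0$, i.e.\ $G$ is monotone; and from $G(\underline{1})\le F(\underline{1})=1$ together with $G(-\underline{1})\le F(-\underline{1})=-1$ (the latter value read off from affine homogeneity) one gets $G(\underline{1})=1$. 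Theorem~\ref{riesz_representation} now yields $\mu\in\mathcal{D}(X)$ with $G=E_\mu$, while $G\le F$ places $\mu\in A_F$, whence $ue_{A_F}(f_0)\ge E_\mu(f_0)=F(f_0)$. As $f_0$ is arbitrary this gives $ue_{A_F}=F$, and taking for instance $f_0=\underline{1}$ exhibits a point of $A_F$, so $A_F$ is nonempty. The delicate point throughout is exactly this bookkeeping that upgrades a merely dominated linear functional into a genuine probability expectation via the normalization $G(\underline{1})=1$ and the positivity inherited from monotonicity of $F$.
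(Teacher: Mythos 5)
Your proof is correct. Note, however, that the paper itself offers no proof of this statement: it is imported wholesale from the literature (Huber, and in spirit Walley), so there is no internal argument to compare yours against. What you have written is a sound, self-contained reconstruction of the standard duality argument: the forward direction by direct computation with the supremum (compactness of $A$ guaranteeing attainment, though mere finiteness of the supremum already suffices there), and the converse by (i) defining the candidate set $A_F$ of all expectation functionals dominated by $F$ --- which is exactly the recovery formula the paper states separately as a Proposition at the start of Section 5 --- and (ii) using Hahn--Banach with the sublinear majorant $F$ to produce, for each $f_0$, a dominated linear functional attaining $F(f_0)$, then upgrading it to a probability expectation via monotonicity of $F$ (positivity of $G$) and the values $F(\underline{1})=1$, $F(-\underline{1})=-1$ (normalization $G(\underline{1})=1$), so that Theorem~\ref{riesz_representation} applies. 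The bookkeeping you flag as delicate is handled correctly: $F(\underline{0})=0$ and $F(-\underline{1})=-1$ do follow from positive affine homogeneity, and the two-sided bound forces $G(\underline{1})=1$. In short, your argument supplies precisely the proof the paper outsources, and it is essentially the one found in the cited sources; an alternative route would invoke a separating-hyperplane theorem directly on $\overline{\textnormal{H}}(A_F)$ rather than Hahn--Banach extension, but in finite dimension these are interchangeable.
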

Note how Theorem \ref{riesz_representation} can be seen as specializing Theorem \ref{convex_representation} for singletons $A\!=\!\{\mu\}$. The following is another useful specialization of Theorem \ref{convex_representation}.
\begin{theorem}\label{sets_representation}
Let $X$ be a finite set, $A\in\mathcal{P}(X)$ a nonempty subset of $X$ and $B=\overline{\textnormal{H}}(\{\delta(x) \ | \ x\in A\})$ the convex hull of Dirac's over $A$, which is closed since $A$ is finite. The functional $ue_B$  preserve sups, i.e., $ue_B(f\sqcup g) \!=\! ue_B(f) \sqcup ue_B(g)$ where $f\sqcup g$ denotes the pointwise l.u.b. defined as $(f\sqcup g) (x) = f(x)\sqcup g(x)$. Dually, every $F$ as in Theorem \ref{convex_representation} which preserves  sups is of the form $ue_B$. 
\end{theorem}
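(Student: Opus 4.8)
The plan is to prove the two directions separately, the forward direction being essentially a one-line computation while the converse (the ``dually'' clause) carries the real content.

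For the forward direction I would first observe that, since $X$ is finite, $B=\overline{\textnormal{H}}(\{\delta(x)\mid x\in A\})$ is exactly the simplex of all distributions whose support is contained in $A$, so that each $E_\mu(f)$ with $\mu\in B$ is a convex combination of the values $\{f(x)\}_{x\in A}$. Its supremum is attained at a vertex $\delta(x)$, giving the clean formula $ue_B(f)=\max_{x\in A}f(x)$. Preservation of sups is then immediate from associativity of $\max$: $ue_B(f\sqcup g)=\max_{x\in A}\max\{f(x),g(x)\}=\max\{\max_{x\in A}f(x),\max_{x\in A}g(x)\}=ue_B(f)\sqcup ue_B(g)$.

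For the converse, let $F$ be a functional with the properties of Theorem~\ref{convex_representation} that also preserves sups, and set $p_x=F(e_x)$, where $e_x$ is the indicator of the singleton $\{x\}$ (equivalently the $x$-th standard basis vector); monotonicity together with $F(\underline{1})=1$ give $0\leq p_x\leq 1$. The crucial device is to decompose an arbitrary $f\!:\!X\!\rightarrow\!\mathbb{R}$ as a finite supremum of ``spikes'': for a parameter $M\geq-\min_x f(x)$, let $g_x$ agree with $f$ at $x$ and equal $-M$ elsewhere, so that $f=\bigsqcup_{x\in X}g_x$. Writing $g_x=(f(x)+M)\,e_x+(-M)\underline{1}$ and applying positive affine homogeneity (legitimate since $f(x)+M\geq 0$) yields $F(g_x)=(f(x)+M)p_x-M$. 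Sup-preservation, extended to finite sups by induction, then gives the master identity
\[
F(f)=\max_{x\in X}\big[\,f(x)\,p_x-M(1-p_x)\,\big],
\]
valid for every sufficiently large $M$.

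The final step exploits that the left-hand side is independent of $M$. Setting $A'=\{x\mid p_x=1\}$, the summands with $p_x<1$ tend to $-\infty$ as $M\to\infty$, so for large $M$ the maximum is realised on $A'$ and equals $\max_{x\in A'}f(x)$; specialising to $f=\underline{1}$ forces $\max_x p_x=1$, hence $A'\neq\emptyset$. Therefore $F(f)=\max_{x\in A'}f(x)$ for every $f$, which by the computation in the forward direction is precisely $ue_B(f)$ for $B=\overline{\textnormal{H}}(\{\delta(x)\mid x\in A'\})$, as required. The main obstacle is recognising that sup-preservation must be leveraged on \emph{all} real-valued functions, not merely on $\{0,1\}$-valued indicators: the spike decomposition combined with the $M\to\infty$ argument is exactly what upgrades the maxitivity of $F$ on events to the much stronger statement that the convex closed set representing $F$ collapses to the simplex over $A'$.
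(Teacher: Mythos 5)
Your proof is correct, but it takes a genuinely different route from the paper's. The paper proves the ``dually'' direction by first invoking Theorem \ref{convex_representation} to represent $F$ as $ue_C$ for some nonempty convex closed $C\subseteq\mathcal{D}(X)$, and then arguing by contradiction that $C$ must be the simplex over its support $Y=\{x \mid \exists\mu\in C.\,\mu(x)>0\}$: if some $x\in Y$ had $ue_C(1_x)=\lambda<1$, then choosing $\mu\in C$ attaining this supremum (compactness of $C$), a second point $y$ with $\mu(y)>0$, and $g=\lambda 1_y$, one gets $ue_C(1_x\sqcup g)\geq \lambda+\lambda\mu(y)>\lambda\geq ue_C(1_x)\sqcup ue_C(g)$, contradicting sup-preservation. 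You bypass the representation theorem entirely: the spike decomposition $f=\bigsqcup_{x}g_x$ with $g_x=(f(x)+M)e_x-M\underline{1}$, combined with positive affine homogeneity and finite sup-preservation, gives the explicit identity $F(f)=\max_x\bigl[f(x)p_x-M(1-p_x)\bigr]$ for all large $M$, and the $M\to\infty$ argument collapses it to $F(f)=\max_{x\in A'}f(x)$ with $A'=\{x\mid F(1_x)=1\}$, nonempty by testing $f=\underline{1}$. Your route buys several things: it is direct rather than by contradiction, it is self-contained (no appeal to the representation theorem and hence no compactness argument), it constructs the witnessing set $A'$ explicitly from $F$, and it makes visible that subadditivity is never actually needed --- monotonicity, positive affine homogeneity, $F(\underline{1})=1$ and sup-preservation already force $F$ to be a pointwise maximum over a subset. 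The paper's route buys brevity: with Theorem \ref{convex_representation} already established, the argument stays within the geometric picture of convex sets of distributions and amounts to identifying which such sets have sup-preserving upper expectations, namely the simplices over subsets of $X$.
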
 
\begin{proof}
The first part of the Theorem is simple to verify. For the other direction, let $C\subseteq\mathcal{D}(X)$ be the convex closed set, given by Theorem \ref{convex_representation}, such that $F\!=\!ue_C$. Suppose towards a contradiction that $C$ is not of the form $B$. 
Let $Y\!=\!\{Êx \ | \ \exists \mu\!\in\! C. \mu(x)\!>\!0\}\!\subseteq\! X$ so that every probability distribution in $C$ is a convex combination of Dirac's $\delta_x$ for $x\in Y$. For each $x\!\in\! Y$ denote with $1_x\!:\!X\rightarrow\!\{0,1\}$ the indicator function of $x$, so that $E_\mu(1_x)\!=\!\mu(x)$.
Since $C$ is not of the form $B$, there exists some $x\!\in\! Y$ such that $ue_C(1_x)\!<\!1$. Let $\mu\!\in\! C$ maximizing $1_x$ in $C$, i.e., such that $ue_C(1_x)\!=\!E_\mu(1_x)$. The existence of $\mu$ follows from the compactness of $C$. Let $\lambda\!=\! \mu(x)$. Pick some other $y\!\in\! Y$ such that $\mu(y)\!>\!0$ and define $g\!=\!\lambda 1_y$.
It follows that $E_\mu(1_x \sqcup g) \!>\! \lambda$ and this implies $ue_C(1_x\sqcup g)\!>\!\lambda$. But, by construction, $ue_C(1_x)=\lambda$ and $ue_C(g)\leq \lambda$. A contradiction with the assumption that $F\!=\!ue_C$ preserves sups.
\end{proof}

Note that these representation results deal with nonempty sets. For our purposes it is useful to generalize them to include also the representation of $ue_\emptyset$ $(f\mapsto 0)$. Note that all functionals $F$ considered so far are monotone and positively affinely homogeneous (p.a.h.).
\begin{lemma}
Let $X$ be a finite set and $F\!:\!C(X)\!\rightarrow\! \mathbb{R}$ be p.a.h. and monotone. If $F(\underline{1})\!=\!0$ then $F(f)\!=\!0$, for all $f\!\in\! C(X)$.
\end{lemma}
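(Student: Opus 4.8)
The plan is to combine monotonicity with the behaviour of $F$ on constant functions in a simple squeeze (sandwich) argument. First I would use the finiteness of $X$ to observe that every $f\in C(X)$ is bounded: setting $m=\min_{x}f(x)$ and $M=\max_{x}f(x)$, we obtain the two pointwise inequalities $m\underline{1}\sqsubseteq f\sqsubseteq M\underline{1}$, where $c\underline{1}$ denotes the constant function with value $c$.

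Next I would evaluate $F$ on an arbitrary constant function $c\underline{1}$. Writing $c\underline{1}=0\cdot f+c\cdot\underline{1}$ and invoking positive affine homogeneity with $\lambda_1=0\geq 0$ and $\lambda_2=c$, I get $F(c\underline{1})=0\cdot F(f)+c\,F(\underline{1})=c\,F(\underline{1})$. Since $F(\underline{1})=0$ by hypothesis, this yields $F(c\underline{1})=0$ for every $c\in\mathbb{R}$; in particular $F(m\underline{1})=F(M\underline{1})=0$.

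Finally, applying monotonicity to the inequalities from the first step gives $0=F(m\underline{1})\leq F(f)\leq F(M\underline{1})=0$, hence $F(f)=0$. As $f$ was arbitrary, this establishes the claim. There is essentially no obstacle here, since the statement is a direct consequence of the two hypotheses; the only point that genuinely requires positive affine homogeneity (rather than mere positive homogeneity) is the evaluation of $F$ on constants, where the case $\lambda_1=0$ is explicitly permitted and lets the additive term $\lambda_2\underline{1}$ be handled on its own.
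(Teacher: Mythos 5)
Your proof is correct and takes essentially the same route as the paper: the paper likewise bounds $f$ between constant multiples of $\underline{1}$ (using $-\lambda\underline{1}\sqsubseteq f\sqsubseteq\lambda\underline{1}$ rather than your $m\underline{1}\sqsubseteq f\sqsubseteq M\underline{1}$), kills $F$ on those constants via positive affine homogeneity, and squeezes with monotonicity. Your explicit remark that the $\lambda_1=0$ case of p.a.h.\ is what handles the constant functions (in particular the negative ones) is exactly the point left implicit in the paper's one-line proof.
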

\begin{proof}
Find $\lambda\!\geq\! 0$ such that $-\lambda \underline{1}\sqsubseteq f \sqsubseteq \lambda \underline{1}$. By positive homogeneity, $F(-\lambda \underline{1})=F(\lambda\underline{1})=0$. Apply monotonicity.
\end{proof}
Thus, by relaxing the requirement $F(\underline{1})\!=\!1$ in the three representation theorems to  $F(\underline{1})\!\in\!\{0,1\}$, we obtain functionals which can also represent the empty set.

\subsection{The Riesz Modal Logic}\label{logic_R}
In this section we introduce a novel $\mathbb{R}$-valued modal logic which we name $\R$ in honor of Frigyes Riesz who introduced what today are known as \emph{lattice ordered vector spaces} or \emph{Riesz spaces} \cite{Riesz1928}. We refer to \cite{JVR1977,Luxemburgh} as standard references to the theory of Riesz spaces. Here we limit ourselves to the basic definitions and results.
\begin{definition}
A \emph{Riesz space} is a real (or rational) vector space $R$ endowed with an order relation $\sqsubseteq$ which is a lattice and compatible with the vector space structure, i.e., if $f,g\in R$ then $f\sqsubseteq g$ implies  $f+h\sqsubseteq g+h$ for all $h\in R$ and $f\sqsupseteq \underline{0}$ implies $\lambda f\sqsupseteq \underline{0}$, for all $\lambda\geq 0$, where $\underline{0}$ denotes the zero element of the vector space $R$. As usual $f\sqcup g$ denotes the least upper bound (lub) of $f$ and $g$ and similarly for $f \sqcap g$ (glb). It is always the case that $f\sqcap g = -(-f\sqcup -g)$. Note that $f\sqsubseteq g$ if and only if $f\sqcup g=g$.
\end{definition}
The vector space $X\!\rightarrow\!\mathbb{R}$, with functions ordered pointwise, is an example of Riesz space. The language of Riesz spaces, combining linear algebra and order theory, is sufficiently rich to express the representation theorems of Section \ref{sec_representation_theorems}. 
\begin{definition}
Let $R$ be a Riesz space. The \emph{positive part} of $f\!\in\! R$ is defined as $f^+\!=\!f\sqcup \underline{0}$.
We denote with $R^+$ the set $\{ f^+ \ | \ f\in R\}$.
We say $R$ is \emph{Archimedean} if, for all $f\!\in\! R^+$, the greatest lower bound of $\{ (\frac{1}{n+1}) f \ | \ n\in\mathbb{N} \}$ exists and equals $\underline{0}$. An element $f\!\in\! R^+$ is a \emph{strong unit} if for all $g\!\in\! R^+$ there exists some $n\in\mathbb{N}$ such that $n f \sqsupseteq g$. We say $R$ is \emph{unitary} if it contains a strong unit. 
\end{definition}
Note that $X\!\rightarrow\!\mathbb{R}$ is Archimedean and unitary with the constant function $\underline{1}$ $(x\mapsto 1)$ as  strong unit. 

\begin{definition}[Boolean Elements]\label{boolean_elements}
Let $R$ be a Riesz space with strong unit $u$. Define the binary operation $\oplus$ of \emph{truncated sum} as $f\oplus g=(f+g)\sqcap u$. An element $f\in R$ is \emph{Boolean} if $f\oplus f=f$. Denote with $B$ the set of Boolean elements in $R$. It is known (see, e.g., \cite{MundiciBook}) that $(B,\sqcup, \neg, \underline{u})$, with $\neg b= u-b$ is a Boolean algebra. If $R$ is of the form $X\rightarrow\mathbb{R}$, with the pointwise order and strong unit $\underline{1}$, then $B$ is the Boolean algebra of functions $X\rightarrow\{0,1\}$.
\end{definition}

The following result, known as (finite dimensional) Yosida representation theorem,  is fundamental in the theory of Riesz spaces. 
\begin{theorem}[p. 152, \cite{Luxemburgh}]\label{yosida_theorem}
Every Riesz space $(R,\sqsubseteq)$ which is $n$-dimensional as a vector space, Archimedean and unitary is of the form $X\rightarrow\mathbb{R}$ with $|X|=n$, pointwise ordering and strong unit $\underline{1}$.
\end{theorem}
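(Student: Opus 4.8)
The plan is to take the index set $X$ to be the set of atoms of the Boolean algebra $B$ of Boolean elements (Definition~\ref{boolean_elements}) and to show that these atoms form a basis of $R$ carrying exactly the coordinatewise lattice and vector-space structure. Fix a strong unit $u$; it is itself Boolean, since $u\oplus u = (u+u)\sqcap u = u$, and it is the top element of $B$ because every Boolean element is $\sqsubseteq u$.

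First I would prove that $B$ is finite and atomic. If $b_1,\dots,b_k\in B$ are pairwise disjoint and nonzero, they are linearly independent: for disjoint positive elements one has $|\sum_i\lambda_i b_i| = \sum_i|\lambda_i|\,b_i$, and a vanishing sum of disjoint positive elements forces each summand, hence each $\lambda_i$, to be $\underline{0}$. Thus $R$ contains at most $n=\dim R$ pairwise disjoint nonzero Boolean elements, so $B$ has bounded antichain width and is therefore finite and atomic. Let $b_1,\dots,b_m$ be its atoms. They are pairwise disjoint and, $u$ being the top of $B$, satisfy $u=\bigsqcup_i b_i=\sum_i b_i$, the join of disjoint positive elements being their sum.

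Next I would decompose $R$ along the atoms. Finite-dimensional Archimedean Riesz spaces are Dedekind complete, so every band is a projection band and $R=\bigoplus_i R_i$, where $R_i$ is the band generated by $b_i$. Each $R_i$ is itself an Archimedean Riesz space with strong unit $b_i$, and its only components are $\underline{0}$ and $b_i$: a component of $b_i$ inside $R_i$ is a Boolean element of $R$ lying below the atom $b_i$, hence $\underline{0}$ or $b_i$. The whole theorem therefore reduces to the claim that an Archimedean Riesz space with strong unit whose only components are $\underline{0}$ and the unit is one-dimensional. This is the crux. I would first show such a space is totally ordered: if $f,g$ were incomparable then $h=(f-g)^+$ and $k=(g-f)^+$ would be nonzero and disjoint, and their carriers $\bigsqcup_n(n h\sqcap b_i)$ and $\bigsqcup_n(n k\sqcap b_i)$ would be two disjoint nonzero components, contradicting triviality (here the Archimedean law is what makes these carriers nonzero components). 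A totally ordered Archimedean space with strong unit then embeds into $\mathbb{R}$ by $f\mapsto\sup\{q\in\mathbb{Q}\mid q\,b_i\sqsubseteq f\}$, the Archimedean property guaranteeing injectivity, so $R_i\cong\mathbb{R}$.

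Assembling the pieces, $R=\bigoplus_i R_i$ with each $R_i\cong\mathbb{R}$ gives $m=n$, and choosing for each $i$ the generator $b_i$ yields a linear bijection $\Phi\colon R\to\mathbb{R}^n$, $\Phi(f)=(\lambda_1,\dots,\lambda_n)$ where $f=\sum_i\lambda_i b_i$. By disjointness of the atoms, $f\sqsubseteq g$ holds iff $\lambda_i(f)\le\lambda_i(g)$ for every $i$, so $\Phi$ preserves the lattice operations and sends $u=\sum_i b_i$ to $\underline{1}$; taking $X=\{1,\dots,n\}$ identifies $R$ with $(X\to\mathbb{R})$ under the pointwise order with strong unit $\underline{1}$. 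The main obstacle is the one-dimensionality claim of the third paragraph: it is the only place the Archimedean hypothesis is genuinely needed, and that it is needed is shown by the lexicographically ordered plane, a two-dimensional unital vector lattice with only trivial components that fails to be Archimedean.
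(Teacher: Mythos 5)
The paper itself contains no proof of Theorem~\ref{yosida_theorem}: it is quoted directly from Luxemburg--Zaanen \cite{Luxemburgh}, so your argument can only be measured against the classical one (essentially Judin's theorem, proved there by induction on dimension using atoms of the Riesz space and H\"older's theorem for the totally ordered case). Your organization through the Boolean algebra $B$ of Definition~\ref{boolean_elements} is a legitimate and essentially correct alternative: Boolean elements are automatically positive (from $2f \sqsupseteq (2f)\sqcap u = f$) and are exactly the components of $u$; pairwise disjoint nonzero ones are linearly independent, so $B$ is finite and atomic with at most $n$ atoms; a component of a component is a component, so each atom $b_i$ admits only trivial components in its band; and the crux --- trivial components plus Archimedean plus strong unit implies one-dimensional, via total orderedness and the H\"older-type embedding $f\mapsto\sup\{q\in\mathbb{Q} \mid q\,b_i\sqsubseteq f\}$ --- is the same crux the classical proof turns on. Your closing remark that the lexicographic plane shows the Archimedean hypothesis is genuinely needed is exactly right.

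The one point that needs repair is your appeal to ``finite-dimensional Archimedean Riesz spaces are Dedekind complete,'' used twice: to make bands projection bands, and to form the carriers $\bigsqcup_n (nh\sqcap b_i)$ as components. In the textbooks that statement is a \emph{corollary} of the representation theorem you are proving, so as written the argument risks circularity. The risk is removable, but you must say how. First, the decomposition $R=\bigoplus_i R_i$ needs no completeness at all: for $\underline{0}\sqsubseteq f\sqsubseteq Nu$, disjointness of the $b_i$ and distributivity give $f=\sum_i \big(f\sqcap Nb_i\big)$, so $R$ is already the direct sum of the principal \emph{ideals} generated by the atoms. Second, where completeness is genuinely used --- the carrier construction in the total-order step --- one can prove it independently of any representation: Archimedeanness makes the unit seminorm $\|f\|_u=\inf\{\lambda>0 \mid |f|\sqsubseteq\lambda u\}$ a norm; in finite dimensions the positive cone is then closed and order intervals are norm-compact, and a bounded increasing net in a compact order interval norm-converges to its supremum, which is Dedekind completeness. (Also, your parenthetical attributes the carriers being nonzero components to the Archimedean law; nonzeroness actually comes from the strong-unit inequality $h\sqsubseteq Nb_i\Rightarrow h\sqsubseteq N(h\sqcap b_i)$, while componenthood is what uses the projection property, i.e.\ completeness.) With these points made explicit, your proof stands.
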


We now use this result, together with the representation theorems of Section \ref{sec_representation_theorems}, to develop a uniform algebraic account of PNTS's, Markov processes and NTS's (cf. Proposition \ref{NTS_as_PNTS}). The following definition is similar to that of Boolean algebras with operators, the algebraic counterpart of standard modal logic \cite{BdRVModal}.

\begin{definition}\label{definition_modal_riesz_space}
A \emph{modal Riesz space}  is a triple  $(R,\sqsubseteq,\Diamond)$ where $(R,\sqsubseteq)$ is a Riesz space  with strong unit (denoted by $\underline{1}$) and $\Diamond$ is an unary operation on $R$. We say that $(R,\sqsubseteq,\Diamond)$ is of \emph{type-PNTS} if: 
\begin{enumerate}
\item[] $\! \! \!\!\! $(Monotone) if $f\sqsubseteq g$ then $\Diamond f \sqsubseteq \Diamond g$
\item[] $\! \! \!\!\! $(Sublinear) $\Diamond(f+g)\sqsubseteq \Diamond(f) + \Diamond (g)$
\item[] $\! \! \!\!\! $(Pos. Affine Homogenous) $\Diamond(\lambda_1 f + \lambda_2 \underline{1})= \lambda_1\Diamond(
 f) + \lambda_2\Diamond( \underline{1})$\\
 for all $\lambda_1\geq 0$ and $\lambda_2\in\mathbb{R}$
\item[] $\! \! \!\!\! $($\Diamond\underline{1}$ is Boolean)  $\ \Diamond\underline{1}\oplus\Diamond\underline{1}=\Diamond\underline{1}$
\end{enumerate}
It is of \emph{type-MP} if it is of type-PNTS and further satisfies
\begin{enumerate}
\item[] $\! \! \!\!\! $(Linear) $\Diamond(\lambda_1 f+ \lambda_2 g)=\lambda_1 \Diamond(f) + \lambda_2 \Diamond(g)$,  for all $\lambda_1,\lambda_2\in\!\mathbb{R}$
\end{enumerate}
and of type-NTS  if it is of \emph{type-PNTS} and further satisfies
\begin{enumerate}
\item[] $\! \! \!\!\! $(Sup-preserving) $\Diamond(f\sqcup g)=\Diamond(f) \sqcup \Diamond(g)$.
\end{enumerate}
\end{definition}

The  following results should now be expected.

\begin{proposition}\label{model_to_algebra}
Let $X$ be a finite set and $(X,\alpha)$ a PNTS such that $\alpha(x)$ is convex closed, for all $x\in X$. Let $R\!=\!(C(X),\sqsubseteq,\Diamond_\alpha)$, with $\sqsubseteq$ the pointwise order and $\Diamond_\alpha$ as in Equation \ref{definition_diamond}. Then $R$ is a finitely dimensional Archimedean 
modal Riesz space of type-PNTS. If $(X,\alpha)$ is a Markov process then $R$ is of type-MP. If $(X,\alpha)$ is a NTS then  $R$ is of type-NTS.
\end{proposition}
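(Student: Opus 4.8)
The plan is to reduce every claim to the representation theorems of Section~\ref{sec_representation_theorems}, applied separately at each state $x\in X$. The Riesz-space structure itself comes essentially for free: by Convention~\ref{conv_1} the set $X$ is finite, so $C(X)=X\to\mathbb{R}$ is a finite-dimensional vector space, and it is already recorded in the text that, under the pointwise order, $C(X)$ is an Archimedean unitary Riesz space with strong unit $\underline{1}$. Hence the entire content of the proposition lies in verifying that the operator $\Diamond_\alpha$ of Equation~\ref{definition_diamond} satisfies the axioms corresponding to each type.

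First I would isolate the key observation that, for each fixed $x$, the map $f\mapsto(\Diamond_\alpha f)(x)=ue_{\alpha(x)}(f)$ is \emph{pointwise} exactly a functional of the form treated in Theorem~\ref{convex_representation}: when $\alpha(x)\neq\emptyset$ it is monotone, subadditive, positively affinely homogeneous with value $1$ at $\underline{1}$, and when $\alpha(x)=\emptyset$ it is the zero functional obtained from the relaxation $F(\underline{1})\in\{0,1\}$ discussed at the end of Section~\ref{sec_representation_theorems}. Since the order and the vector operations of $C(X)$ are computed pointwise, each of the four type-PNTS axioms is nothing but the corresponding property of $ue_{\alpha(x)}$ asserted at every coordinate $x$ simultaneously. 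Monotonicity, sublinearity and positive affine homogeneity transfer verbatim. For the axiom that $\Diamond\underline{1}$ be Boolean, I would note that $(\Diamond_\alpha\underline{1})(x)=ue_{\alpha(x)}(\underline{1})\in\{0,1\}$ for every $x$, so $\Diamond_\alpha\underline{1}$ is a $\{0,1\}$-valued function and is therefore a Boolean element of $C(X)$ by the description in Definition~\ref{boolean_elements}; in particular $\Diamond_\alpha\underline{1}\oplus\Diamond_\alpha\underline{1}=\Diamond_\alpha\underline{1}$.

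The two specialised cases follow the same pointwise pattern. If $(X,\alpha)$ is a Markov process then, via the natural transformation $\eta_X$ of Proposition~\ref{NTS_as_PNTS}, each $\alpha(x)$ is either $\emptyset$ or a singleton $\{\mu\}$; in the singleton case $ue_{\{\mu\}}=E_\mu$ is linear by the Riesz Representation Theorem~\ref{riesz_representation}, and in the empty case the zero functional is trivially linear, so the Linear axiom holds coordinatewise and hence globally, giving type-MP. If $(X,\alpha)$ is an NTS then, via $\gamma_X$ of the same proposition, each $\alpha(x)$ equals $\overline{\textnormal{H}}(\{\delta(z)\ |\ z\in A\})$ for some $A\subseteq X$, so Theorem~\ref{sets_representation} yields $ue_{\alpha(x)}(f\sqcup g)=ue_{\alpha(x)}(f)\sqcup ue_{\alpha(x)}(g)$; reading this at every $x$ and using that $\sqcup$ is pointwise in $C(X)$ gives the Sup-preserving axiom, hence type-NTS.

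I do not expect a genuine obstacle here: the whole argument is a transfer of the representation theorems along the pointwise structure of $C(X)$. The only point requiring care is the uniform treatment of states with $\alpha(x)=\emptyset$, since Theorems~\ref{convex_representation} and~\ref{sets_representation} were stated for nonempty convex closed sets; this is exactly what the relaxation to functionals with $F(\underline{1})\in\{0,1\}$ at the close of Section~\ref{sec_representation_theorems} was designed to handle, and invoking it makes the coordinatewise verification go through without case distinctions beyond the trivial empty one.
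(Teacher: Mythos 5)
Your proposal is correct and matches the paper's (implicit) argument: the paper states this proposition without proof, immediately after the representation theorems, precisely because it is the pointwise transfer you describe — each $(\Diamond_\alpha f)(x)=ue_{\alpha(x)}(f)$ inherits monotonicity, subadditivity, positive affine homogeneity (Theorem \ref{convex_representation}), linearity (Theorem \ref{riesz_representation}) or sup-preservation (Theorem \ref{sets_representation}) coordinatewise, with the $F(\underline{1})\in\{0,1\}$ relaxation covering $\alpha(x)=\emptyset$. Your handling of the Boolean axiom for $\Diamond\underline{1}$ and of the empty-set case is exactly the intended verification.
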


\begin{theorem}\label{algebra_to_model}
Let $(R,\sqsubseteq,\Diamond)$ be a finite dimensional Archimedean modal Riesz space of type-PNTS. Then there exists a unique (up to isomorphism) PNTS $(X,\alpha)$, with $\alpha(x)$ convex closed for all $x\in X$, such that $(R,\sqsubseteq,\Diamond)$ is of the form $(C(X),\sqsubseteq,\Diamond_\alpha)$. If $R$ is of type-MP, then $(X,\alpha)$ is a Markov process. If $R$ is of type-NTS, then $(X,\alpha)$ is a NTS.
\end{theorem}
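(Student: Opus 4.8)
The plan is to invoke the Yosida representation theorem (Theorem \ref{yosida_theorem}) to fix the carrier set $X$, and then to recover the transition map $\alpha$ from the modality $\Diamond$ one state at a time, using the representation theorems of Section \ref{sec_representation_theorems}. First I would apply Theorem \ref{yosida_theorem} to the underlying Riesz space $(R,\sqsubseteq)$: being $n$-dimensional, Archimedean and unitary, it is isomorphic to $C(X)=(X\rightarrow\mathbb{R})$ for a finite set $X$ with $|X|=n$, pointwise order, and strong unit $\underline{1}$. This determines $X$ and lets me regard $\Diamond$ as a function transformer $\Diamond:C(X)\rightarrow C(X)$. For each $x\in X$ I then define the functional $F_x:C(X)\rightarrow\mathbb{R}$ by $F_x(f)=(\Diamond f)(x)$, i.e.\ evaluation at $x$ after applying $\Diamond$.

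Next I would verify that each $F_x$ meets the hypotheses of Theorem \ref{convex_representation}. Monotonicity, subadditivity and positive affine homogeneity of $F_x$ are immediate from the corresponding type-PNTS axioms by evaluating the relevant (in)equalities at the point $x$. The only delicate point is the value $F_x(\underline{1})=(\Diamond\underline{1})(x)$: the axiom that $\Diamond\underline{1}$ is Boolean, together with Definition \ref{boolean_elements} (the Boolean elements of $C(X)$ are precisely the $\{0,1\}$-valued functions), forces $F_x(\underline{1})\in\{0,1\}$. When $F_x(\underline{1})=1$, Theorem \ref{convex_representation} yields a nonempty convex closed set $A_x\in\mathcal{P}\mathcal{D}(X)$ with $F_x=ue_{A_x}$; when $F_x(\underline{1})=0$, the Lemma following Theorem \ref{sets_representation} gives $F_x\equiv 0=ue_\emptyset$, so I set $A_x=\emptyset$. (This is exactly the relaxation of the hypothesis $F(\underline{1})=1$ to $F(\underline{1})\in\{0,1\}$ remarked upon there.) Setting $\alpha(x)=A_x$ then gives, by construction, $(\Diamond_\alpha f)(x)=ue_{\alpha(x)}(f)=F_x(f)=(\Diamond f)(x)$ for all $f$ and $x$, so $\Diamond_\alpha=\Diamond$ and $(R,\sqsubseteq,\Diamond)\cong(C(X),\sqsubseteq,\Diamond_\alpha)$, establishing existence.

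For uniqueness I would argue that both components are forced. The Yosida isomorphism pins down $X$ up to a bijection (a Riesz-space isomorphism between two copies $C(X)$ and $C(X')$ of the representation is induced by a bijection $X\rightarrow X'$), and each $\alpha(x)$ is uniquely determined \emph{among convex closed sets} by the equation $ue_{\alpha(x)}=F_x$, since by Theorem \ref{theorem_UE=convex} a convex closed set is recoverable from its upper expectation functional. The type-MP and type-NTS refinements follow from the very same construction using the sharper representation theorems: under the (Linear) axiom each $F_x$ is a monotone linear functional, so Theorem \ref{riesz_representation} makes $A_x$ a singleton $\{\mu\}$ (or $\emptyset$), i.e.\ $(X,\alpha)$ is a Markov process via $\eta$ of Proposition \ref{NTS_as_PNTS}; under the (Sup-preserving) axiom each $F_x$ preserves sups, so Theorem \ref{sets_representation} makes $A_x=\overline{\textnormal{H}}(\{\delta(z)\mid z\in A\})$ for some $A\subseteq X$ (or $\emptyset$), i.e.\ $(X,\alpha)$ is an NTS via $\gamma$.

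The main obstacle I anticipate is the bookkeeping around the Boolean-but-not-necessarily-unit value of $\Diamond\underline{1}$: one must handle the states with $\alpha(x)=\emptyset$ separately and ensure that the relaxed representation theorems cover them uniformly, rather than tacitly assuming that every $F_x$ represents a nonempty set of distributions. Once this case distinction is made cleanly, the remaining verifications (axioms $\Rightarrow$ functional properties, and functional properties $\Rightarrow$ convex-closed witness) are direct consequences of the results of Section \ref{sec_representation_theorems}.
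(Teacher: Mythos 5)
Your proposal is correct and follows essentially the same route as the paper's proof: apply the Yosida theorem to identify $R$ with $C(X)$, curry $\Diamond$ into pointwise functionals $F_x$, check the hypotheses of the representation theorems of Section \ref{sec_representation_theorems} (with the Boolean axiom forcing $F_x(\underline{1})\in\{0,1\}$), and recover $\alpha(x)$ as the represented convex closed set, specializing to Theorem \ref{riesz_representation} and Theorem \ref{sets_representation} for the MP and NTS cases. Your treatment is in fact slightly more careful than the paper's, since you handle the $\alpha(x)=\emptyset$ case and the uniqueness of each $\alpha(x)$ explicitly.
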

\begin{proof}
By Theorem \ref{yosida_theorem} $R$ is of the form $(X\rightarrow\mathbb{R})$ for some finite $X$. Of course the set $X$ is unique up to isomorphism. Thus the $\Diamond$ operation has type $(X\rightarrow\mathbb{R})\rightarrow (X\rightarrow\mathbb{R})$. By currying, consider the equivalent map $\diamond:X\rightarrow(X\rightarrow\mathbb{R})\rightarrow\mathbb{R}$. For every $x\in X$, the map $\diamond(x)$ is monotone, subadditive, p.a.h. and $\diamond(\underline{1})(x)\in\{1,0\}$. By Theorem \ref{convex_representation} there exists a unique convex closed set $C_x$ represented by $\diamond(x)$. The desired PNTS is then $(X,\alpha)$ with $\alpha(x)=C_x$. The cases for MP and NTS's are dealt in the same way by using Theorem \ref{convex_representation} and Theorem \ref{sets_representation}.
\end{proof}

The Riesz modal logic $\R$ which we now introduce is a syntactic counterpart of modal Riesz spaces.
\begin{definition}[Logic \R]
The formulas of the logic $\R$ are generated by the grammar: $$\phi::= \underline{1} \ | \ \phi + \phi \ \ | \  \ q\phi \ | \ \phi\sqcup \phi  \ | \ \Diamond \phi$$ where $q\!\in\!\mathbb{Q}$. We use the expected derived formulas $-\phi = (-1)\phi$, $\underline{0}\!=\! \underline{1} -\underline{1} $, $\phi\sqcap \psi\! =\! -(-\phi \sqcup -\psi)$, $\phi\oplus \psi\!=\!(\phi+\psi)\sqcap\underline{1}$, $\phi^+\!=\!\phi\sqcup\underline{0}$. We write $\phi\!\in\!\R$ if  $\phi$ is a formula of the logic $\R$. 
\end{definition}

\begin{definition}[Model $\mathbb{R}$-valued Semantics]\label{model_semantics}
Given a PNTS $(X,\alpha)$, we define the semantics $\sem{\phi}_\alpha\! :\!X\!\rightarrow\!\mathbb{R}$ of $\phi\in\R$  as: 
\begin{itemize}
\item $ \sem{\underline{1}}_{\alpha}(x)\!=\!1 $
\item  $\sem{\phi+\psi}_\alpha(x)\!=\!\sem{\phi}_\alpha(x)+\sem{\psi}_\alpha(x)$
\item $\sem{\phi\sqcup\psi}_\alpha(x)\!=\!\max\{ \sem{\phi}_\alpha(x), \sem{\psi}_\alpha(x)\}$
\item $\sem{q\phi}_\alpha(x)\!=\!q\sem{\phi}_\alpha(x)$
\item $\sem{\Diamond\phi}_{\alpha}\!=\!\Diamond_{\alpha}(\sem{\phi}_{\alpha})$, with $\Diamond_{\alpha}$ as in Equation \ref{definition_diamond}.
\end{itemize}
\end{definition}

\begin{proposition}[Soundness]\label{soundness_E}
Let $(X,\alpha)$ be a PNTS. For every formula $\phi$ and UE-bisimulation $E$ the map $\sem{\phi}_\alpha$ is $E$-invariant.
\end{proposition}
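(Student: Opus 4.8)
The plan is to proceed by structural induction on the formula $\phi\in\R$, showing that for every UE-bisimulation $E$ and every pair $(x,y)\in E$ the equality $\sem{\phi}_\alpha(x)=\sem{\phi}_\alpha(y)$ holds; equivalently, that $\sem{\phi}_\alpha$ is $E$-invariant in the sense of Definition \ref{def_UE_bisimilarity}. The base case $\phi=\underline{1}$ is immediate, since $\sem{\underline{1}}_\alpha$ is the constant function $\underline{1}$ and hence trivially $E$-invariant.

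For the propositional connectives the argument is routine and rests on the observation that the class of $E$-invariant functions $X\!\rightarrow\!\mathbb{R}$ is closed under the pointwise operations appearing in the semantics of Definition \ref{model_semantics}. Concretely, if $\sem{\psi_1}_\alpha$ and $\sem{\psi_2}_\alpha$ are $E$-invariant (inductive hypothesis), then so are $\sem{\psi_1+\psi_2}_\alpha=\sem{\psi_1}_\alpha+\sem{\psi_2}_\alpha$, the scalar multiple $\sem{q\psi_1}_\alpha=q\,\sem{\psi_1}_\alpha$, and the pointwise maximum $\sem{\psi_1\sqcup\psi_2}_\alpha$, because the value at $x$ is computed from the values of the subformulas at $x$ alone, and these agree at $E$-related states.

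The only case requiring the hypothesis that $E$ is a UE-bisimulation is the modal case $\phi=\Diamond\psi$. By the inductive hypothesis the function $f\!:=\!\sem{\psi}_\alpha$ is $E$-invariant. Unfolding the semantics and Equation \ref{definition_diamond}, we have $\sem{\Diamond\psi}_\alpha(x)=\big(\Diamond_\alpha f\big)(x)=ue_{\alpha(x)}(f)$ and likewise $\sem{\Diamond\psi}_\alpha(y)=ue_{\alpha(y)}(f)$. Now for $(x,y)\in E$, Definition \ref{def_UE_bisimilarity} asserts precisely that $ue_{\alpha(x)}(g)=ue_{\alpha(y)}(g)$ for every $E$-invariant $g$; applying this to $g=f$ gives $\sem{\Diamond\psi}_\alpha(x)=\sem{\Diamond\psi}_\alpha(y)$, completing the induction.

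The step I expect to be the crux is this modal case, but the difficulty is conceptual rather than technical: the whole point is that the inductive hypothesis delivers exactly the $E$-invariance of $\sem{\psi}_\alpha$ that the definition of UE-bisimulation requires as its input, so the modality $\Diamond_\alpha=ue_{\alpha(\cdot)}$ respects $E$ by fiat. No appeal to the representation theorems of Section \ref{sec_representation_theorems} is needed for soundness; they will only be relevant later for completeness. I would remark that the argument uses nothing specific to the finite-dimensional setting beyond the well-definedness of $ue_{\alpha(x)}$, so the same induction applies verbatim to any extension of the propositional signature whose connectives are interpreted pointwise.
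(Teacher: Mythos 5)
Your proof is correct and follows essentially the same route as the paper: structural induction on $\phi$, with the propositional cases handled by pointwise closure of $E$-invariant functions and the modal case $\Diamond\psi$ resolved by feeding the inductively obtained $E$-invariance of $\sem{\psi}_\alpha$ directly into Definition \ref{def_UE_bisimilarity}. The paper's proof is merely a terser statement of exactly this argument, so there is nothing to add.
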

\begin{proof}
The proof goes by induction on the structure of $\phi$. The interesting case is $\sem{\Diamond \psi}_\alpha$. By definition of UE-bisimilarity, $(x,y)\in E$ holds if for all $E$-invariant $f:X\rightarrow\mathbb{R}$, the equality $\bigsqcup_{x\rightarrow \mu}E_\mu(f)= \bigsqcup_{y\rightarrow \nu}E_\nu(f)$ holds. By induction hypothesis $\sem{\psi}_\alpha$ is $E$-invariant. Apply Definition \ref{model_semantics}.
\end{proof}
\begin{remark}
Note that Proposition \ref{soundness_E} can be rephrased by saying that, given a PNTS $(X,\alpha)$, every UE-bisimulation $E$ is contained in the equivalence relation $\textnormal{ker}(\R)$ defined as follows:  $\textnormal{ker}(\R)= \{ (x,y)\ | \ \forall \phi. \big( \sem{\phi}_\alpha(x) = \sem{\phi}_\alpha(y)\big)\}$. 
\end{remark}
We now prove that $\textnormal{ker}(\R)$ is itself a UE-bisimulation and thus, by Proposition \ref{soundness_E}, the greatest one. The proof will also show that $\R$ is strongly complete in the sense of Definition \ref{strong_completeness_def}. In the proof we make use of the following important result which, for the sake of simplicity, we specialize to finite sets (see, e.g., \cite[\S 13]{JVR1977}).
\begin{proposition}[Stone-Weierstrass Theorem for Riesz spaces]\label{stone_weierstrass}
Let $X$ be a finite set and $R$ a Riesz subspace of $X\rightarrow\mathbb{R}$ that \emph{separates points}, i.e., such that for every $x\neq y\in X$ there exists $f\in R$ such that $f(x)\neq f(y)$. Then $R$ is dense in $X\rightarrow\mathbb{R}$.
\end{proposition}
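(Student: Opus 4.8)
The plan is to prove the statement in the sharp form that finite dimension makes available. Since $X$ is finite, $X\rightarrow\mathbb{R}$ is isomorphic to $\mathbb{R}^n$ with $n=|X|$, and every linear subspace of a finite-dimensional normed space is closed; as $R$ is in particular a linear subspace, it is closed, so proving that $R$ is \emph{dense} is equivalent to proving the exact identity $R=(X\rightarrow\mathbb{R})$. I would therefore aim to show that an \emph{arbitrary} $h\colon X\rightarrow\mathbb{R}$ already belongs to $R$, using only that $R$ is a linear sublattice (closed under finite $\sqcup$ and $\sqcap$) that separates points. This is the finite, lattice-theoretic core of the Kakutani--Krein form of the Stone--Weierstrass theorem, and the finiteness of $X$ lets me replace approximation by exact interpolation.

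The first key step is a \textbf{two-point interpolation lemma}: for any two distinct points $x\neq y$ and any prescribed values $a,b\in\mathbb{R}$, there exists $f\in R$ with $f(x)=a$ and $f(y)=b$. To prove it I would examine the linear evaluation map $\mathrm{ev}_{x,y}\colon R\rightarrow\mathbb{R}^2$, $f\mapsto(f(x),f(y))$, and argue that its image is all of $\mathbb{R}^2$. Point separation supplies some $f_0\in R$ with $f_0(x)\neq f_0(y)$, i.e. a vector in the image lying off the diagonal $\{(t,t)\}$; combined with the order unit $\underline{1}\in R$, whose image is $(1,1)$, these two vectors are linearly independent and hence span $\mathbb{R}^2$, giving surjectivity and thus the required $f$.

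The second step builds $h$ explicitly from interpolants by two layers of lattice operations. Fixing the target $h$, for every ordered pair $(x,y)$ choose $f_{x,y}\in R$ with $f_{x,y}(x)=h(x)$ and $f_{x,y}(y)=h(y)$. For each fixed $x$ set $g_x=\bigsqcup_{y\in X} f_{x,y}\in R$; then $g_x(x)=\max_y f_{x,y}(x)=h(x)$, while $g_x(z)\geq f_{x,z}(z)=h(z)$ for every $z$, so $g_x\geq h$ pointwise and agrees with $h$ at $x$. Then set $g=\bigsqcap_{x\in X} g_x\in R$; at any $z$ we have $g(z)\geq h(z)$ since each $g_x\geq h$, and $g(z)\leq g_z(z)=h(z)$, so $g=h$. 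Hence $h\in R$ and $R=(X\rightarrow\mathbb{R})$, which is in particular dense.

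The step I expect to be the genuine obstacle is the interpolation lemma, precisely the surjectivity of $\mathrm{ev}_{x,y}$: point separation alone only guarantees a non-diagonal vector in the image, which need not span $\mathbb{R}^2$. Indeed, for $X=\{x,y\}$ the set $R=\{\,(t,2t)\mid t\in\mathbb{R}\,\}$ is a point-separating Riesz subspace that is one-dimensional and hence not dense, so separation by itself is insufficient. What rescues the argument is the order unit $\underline{1}\in R$, whose image $(1,1)$ together with the off-diagonal vector forces the image to be two-dimensional; the remaining sup/inf manipulation is then routine and exact because $X$ is finite.
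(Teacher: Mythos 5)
Your proof is correct in substance, but note first that the paper itself contains no proof of this proposition: it is stated as a known result and delegated to the literature (\cite[\S 13]{JVR1977}). So there is no internal argument to compare against; what you have written is essentially the standard Kakutani--Krein lattice proof of that cited theorem, specialized to finite $X$, where finiteness lets you upgrade uniform approximation to exact equality: two-point interpolation via surjectivity of the evaluation map $f\mapsto(f(x),f(y))$, then the double lattice construction $g_x=\bigsqcup_{y}f_{x,y}$, $g=\bigsqcap_{x}g_x$, together with the observation that a linear subspace of a finite-dimensional normed space is closed, so dense means equal. All of these steps are sound.

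The point that deserves to be made explicit, rather than relegated to your final paragraph, is that your argument uses $\underline{1}\in R$, which is \emph{not} among the hypotheses of the proposition as printed, and your counterexample shows this is not a removable convenience. Indeed $R=\{(t,2t)\mid t\in\mathbb{R}\}$ is a linear subspace of $\mathbb{R}^{\{x,y\}}$ closed under pointwise $\sqcup$ and $\sqcap$ (since $\max(2t,2s)=2\max(t,s)$), it separates the two points, and it is a proper closed subspace, hence not dense. So the proposition as literally stated is false; the correct statement --- the one proved in the cited reference and the one you actually prove --- additionally requires $R$ to contain the constant functions (or some comparable strengthening of separation). The slip is harmless for the paper: in its only application, the proof of Theorem \ref{completeness_th_1}, the Riesz subspace in question is $\{\sem{\phi}_\alpha\mid\phi\in\R\}$, which contains $\sem{\underline{1}}_\alpha=\underline{1}$. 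So your proof establishes exactly the statement the paper needs, and in doing so it exposes a missing hypothesis in the statement the paper records; you should state that corrected hypothesis up front rather than smuggling it in at the interpolation step.
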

\begin{theorem}\label{completeness_th_1}
Let $(X,\alpha)$ be a PNTS. Then $\textnormal{ker}(\R)$ is a UE-bisimulation. 
\begin{proof}
Denote with $E$ the equivalence relation $\textnormal{ker}(\R)$. Note that the map $\sem{\phi}_\alpha$, for every $\phi\!\in\!\R$, can be seen as a map of type $C(X/E)= X/E\rightarrow\mathbb{R}$ since, by definition, $\sem{\phi}_\alpha$ is $E$-invariant. For every $x,y\!\in\! X$ such that $(x,y)\!\not\in\! E$ there exists some $\phi\!\in\! \R$ such that $\sem{\phi}_\alpha(x) \!\neq\! \sem{\phi}_\alpha(y)$. 
Thus $\R$ separates points in $X/E$. It is clear that $\R$ carries a Riesz space structure and that $\{ \sem{\phi}_\alpha \ | \ \phi\in \R\}$ is a Riesz subspace of $C(X/E)$. Thus, by Theorem \ref{stone_weierstrass}, the (denotations of) formulas of $\R$ are dense $C(X/E)$.

We need to prove that if $(x,y)\!\in\! E$
then, for every $E$-invariant $f\!:\!X\rightarrow\mathbb{R}$, the equality $ \bigsqcup_{x\rightarrow \mu}E_{\mu}(f)\!=\!\bigsqcup_{y\rightarrow \nu}E_{\nu}(f)$ holds. This equality can equivalently be expressed as $\big(\Diamond_{\alpha}(f)\big)(x) \!=\! \big(\Diamond_{\alpha}(f)\big)(y)$. Assume towards a contradiction that 
$\big(\Diamond_{\alpha}(f)\big)(x)\!>\!\big(\Diamond_{\alpha}(f)\big)(y)$. Since $f$ is $E$-invariant, it can be seen as a map of type $X/E\!\rightarrow\!\mathbb{R}$. Since $\R$ is dense in this space, there exists $\phi\in\R$ (approximating $f$) such that $\big(\Diamond_{\alpha}(\phi)\big)(x)\!>\!\big(\Diamond_{\alpha}(\phi)\big)(y)$. By definition, this means that $\sem{\Diamond \phi}_\alpha(x)\!>\! \sem{\Diamond \phi}_\alpha(y)$. A contradiction with the hypothesis that $(x,y)\in \textnormal{ker}(\R)$.
\end{proof}
\end{theorem}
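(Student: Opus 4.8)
The plan is to verify directly that $E = \textnormal{ker}(\R)$ satisfies the defining condition of a UE-bisimulation (Definition \ref{def_UE_bisimilarity}): for every $(x,y)\in E$ and every $E$-invariant $f:X\rightarrow\mathbb{R}$, one must have $ue_{\alpha(x)}(f) = ue_{\alpha(y)}(f)$, equivalently $\big(\Diamond_\alpha f\big)(x) = \big(\Diamond_\alpha f\big)(y)$ in the notation of Equation \ref{definition_diamond}. The central idea is a density argument: the denotations of $\R$-formulas form a point-separating Riesz subspace of $C(X/E)$, hence are dense by the Stone--Weierstrass theorem (Proposition \ref{stone_weierstrass}), and density will let me reduce the matching condition for an arbitrary $E$-invariant $f$ to the case of $\R$-formulas, where it holds immediately by the definition of $\textnormal{ker}(\R)$.

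First I would establish the density. By the soundness Proposition \ref{soundness_E}, every $\sem{\phi}_\alpha$ is $E$-invariant, so it descends to a function on the quotient, i.e.\ an element of $C(X/E)=(X/E\rightarrow\mathbb{R})$. By the very definition of $\textnormal{ker}(\R)$, whenever $(x,y)\notin E$ there is some $\phi$ with $\sem{\phi}_\alpha(x)\neq\sem{\phi}_\alpha(y)$, so the family $\{\sem{\phi}_\alpha \mid \phi\in\R\}$ separates the points of $X/E$. Since the connectives $+$, $q\,(-)$ and $\sqcup$ of $\R$ make this family closed under the Riesz-space operations, it is a Riesz subspace of $C(X/E)$, and Proposition \ref{stone_weierstrass} yields that it is dense in $C(X/E)$ for the sup norm.

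The hard part will be the transfer step, where I move from a real-valued target $f$ to a nearby formula while preserving a strict inequality. Fix $(x,y)\in E$ and an $E$-invariant $f$, viewed in $C(X/E)$, and suppose towards a contradiction that $\big(\Diamond_\alpha f\big)(x) > \big(\Diamond_\alpha f\big)(y)$. The key fact I must isolate is that $\Diamond_\alpha$ is continuous in the sup norm. This follows from the representation Theorem \ref{convex_representation}: each $ue_{\alpha(z)}$ is monotone and subadditive, and since $f-g\sqsubseteq \|f-g\|_\infty\,\underline{1}$ one gets $ue_{\alpha(z)}(f)\leq ue_{\alpha(z)}(g) + \|f-g\|_\infty$, whence $|ue_{\alpha(z)}(f)-ue_{\alpha(z)}(g)|\leq\|f-g\|_\infty$, so $\Diamond_\alpha$ is $1$-Lipschitz.

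With continuity in hand the contradiction closes quickly. Because $\R$ is dense in $C(X/E)$, I can choose $\phi\in\R$ with $\|\sem{\phi}_\alpha - f\|_\infty$ small enough (smaller than half the gap $\big(\Diamond_\alpha f\big)(x) - \big(\Diamond_\alpha f\big)(y)$, divided through by the Lipschitz constant) that the strict inequality is preserved, giving $\big(\Diamond_\alpha \sem{\phi}_\alpha\big)(x) > \big(\Diamond_\alpha \sem{\phi}_\alpha\big)(y)$, i.e.\ $\sem{\Diamond\phi}_\alpha(x) > \sem{\Diamond\phi}_\alpha(y)$. This contradicts $(x,y)\in\textnormal{ker}(\R)$. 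The reverse strict inequality is symmetric, so the equality holds and $E$ is a UE-bisimulation. I would remark that the same density computation simultaneously yields the strong completeness of $\R$ in the sense of Definition \ref{strong_completeness_def}, since it exhibits the $\R$-denotations as dense in the space of $E$-invariant functions.
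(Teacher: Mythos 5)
Your proof is correct and follows essentially the same route as the paper's: present the formula denotations as a point-separating Riesz subspace of $C(X/E)$, invoke Stone--Weierstrass (Proposition \ref{stone_weierstrass}) for density, and derive a contradiction by approximating an $E$-invariant $f$ by a formula; your explicit $1$-Lipschitz bound on $\Diamond_\alpha$ (via monotonicity, subadditivity and positive affine homogeneity of $ue_{\alpha(z)}$) in fact fills in the step the paper leaves implicit behind the parenthetical ``approximating $f$''. One small correction: do not cite soundness (Proposition \ref{soundness_E}) for the $E$-invariance of $\sem{\phi}_\alpha$ --- that would be circular, since $E=\textnormal{ker}(\R)$ is not yet known to be a UE-bisimulation --- but the needed fact is immediate from the definition of $\textnormal{ker}(\R)$ itself, which is exactly the justification the paper uses.
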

\begin{corollary}\label{completeness_th_2}
The logic $\R$ is strongly complete.
\end{corollary}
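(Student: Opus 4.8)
The plan is to read off the corollary directly from the proof of Theorem \ref{completeness_th_1}, since that proof already carries out the only nontrivial step, namely the density argument through the Stone--Weierstrass theorem for Riesz spaces (Proposition \ref{stone_weierstrass}). What remains is only to match the density statement proved there against the precise formulation of strong completeness in Definition \ref{strong_completeness_def}.

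First I would fix a PNTS $(X,\alpha)$ and set $E=\textnormal{ker}(\R)$. By Theorem \ref{completeness_th_1} this $E$ is a UE-bisimulation, and by Soundness (Proposition \ref{soundness_E}) every UE-bisimulation is contained in $E$; hence $E$ is the \emph{greatest} UE-bisimulation on $(X,\alpha)$. This maximality is what lets me pin down the target function space of Definition \ref{strong_completeness_def}: a function $f\!:\!X\!\rightarrow\!\mathbb{R}$ respects UE-bisimilarity, i.e.\ is invariant under \emph{every} UE-bisimulation, if and only if it is $E$-invariant. Indeed, invariance under a coarser equivalence is the stronger requirement, so being invariant under all UE-bisimulations is equivalent to being invariant under the greatest one, $E$; conversely $E$-invariance trivially implies invariance under each smaller UE-bisimulation.

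Next I would observe that the $E$-invariant functions are exactly those that factor as $g\circ q$ through the quotient map $q\!:\!X\!\rightarrow\! X/E$, so that this space is isometrically identified (with respect to the sup norm) with $C(X/E)=(X/E\!\rightarrow\!\mathbb{R})$. The proof of Theorem \ref{completeness_th_1} already establishes, via separation of points and Proposition \ref{stone_weierstrass}, that $\{\sem{\phi}_\alpha \ | \ \phi\in\R\}$ is a Riesz subspace of $C(X/E)$ that is dense in it. Transporting this along the identification above yields that the denotations of $\R$-formulas are dense in the space of functions respecting UE-bisimilarity, which is precisely the assertion of strong completeness.

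The main (and only genuine) obstacle is the bookkeeping in the second step: one must argue that ``invariant under every UE-bisimulation'' collapses to ``$E$-invariant'', which relies essentially on the soundness direction to guarantee that $\textnormal{ker}(\R)$ is the maximal UE-bisimulation rather than merely one of many. Everything else is a direct re-reading of the density already proved for Theorem \ref{completeness_th_1}.
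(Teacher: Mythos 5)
Your proposal is correct and matches the paper's own route: the corollary is read off directly from the proof of Theorem \ref{completeness_th_1}, where Stone--Weierstrass (Proposition \ref{stone_weierstrass}) establishes density of $\{\sem{\phi}_\alpha \mid \phi\in\R\}$ in $C(X/E)$ for $E=\textnormal{ker}(\R)$. Your explicit bookkeeping step --- that maximality of $\textnormal{ker}(\R)$ (from soundness plus Theorem \ref{completeness_th_1}) identifies the functions respecting UE-bisimilarity with the $E$-invariant ones, i.e.\ with $C(X/E)$ --- is exactly what the paper leaves implicit, and it is argued correctly.
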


We considered an $\mathbb{R}$-valued semantics for $\R$ but, in light of Proposition \ref{model_to_algebra} and Theorem \ref{algebra_to_model}, an equivalent algebraic semantics can be defined.
\begin{definition}
Let $(R,\sqsubseteq,\Diamond)$ be a modal Riesz space  with strong unit denoted by $u$. Denote by $\gsem{\_ }_R\!:\!\R\!\rightarrow\! R$ the expected homomorphism, from formulas to $R$,  specified by $\gsem{\underline{1}}=u$. The element $\gsem{\phi}_R$ is called the \emph{algebraic semantics} of $\phi$ interpreted in $R$.
\end{definition}

\begin{definition}[Equational Theory]
For a PNTS $(X,\alpha)$, we write $(X,\alpha)\models \phi=\psi$ if $\sem{\phi}_\alpha = \sem{\psi}_\alpha$. 
For a collection $\mathcal{M}$ of PNTS's we write $\mathcal{M}\models \phi=\psi$ if $(X,\alpha)\models \phi=\psi$ for all $(X,\alpha)\in \mathcal{M}$.
Similarly, for a modal Riesz space $R$, we write $R\Vdash \phi=\psi$ if $\gsem{\phi}_R = \gsem{\psi}_R$ and, for a collection $\mathcal{R}$ of modal Riesz space, we write $\mathcal{M}\models \phi=\psi$ if $R\Vdash \phi=\psi$ for all $R\in \mathcal{R}$.
\end{definition}
The following is a direct consequence of Proposition \ref{model_to_algebra} and Theorem \ref{algebra_to_model}.
\begin{corollary}\label{equational_equivalence}
Let $\mathcal{M}$ be the collection of all PNTS (resp. Markov processes, NTS's) $(X,\alpha)$ with $X$ finite and $\alpha(x)$ closed. Let $\mathcal{R}$ be the collection of finite dimensional Archimedean modal Riesz space of type-PNTS (resp. type-MP, type-NTS). Then for all $\phi,\psi\in \R$ it holds that $\mathcal{M}\models \phi\!=\!\psi$ if and only if  $\mathcal{R}\Vdash \phi\!=\!\psi$.
\end{corollary}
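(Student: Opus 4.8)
The plan is to read Proposition~\ref{model_to_algebra} and Theorem~\ref{algebra_to_model} as setting up a correspondence (up to isomorphism) between the models in $\mathcal{M}$ and the algebras in $\mathcal{R}$, and to observe that under this correspondence the model semantics $\sem{\cdot}_\alpha$ and the algebraic semantics $\gsem{\cdot}_R$ coincide. Once this is in place, each direction of the biconditional is obtained by quantifying over the appropriate side of the correspondence. So the argument splits into a semantic agreement lemma, the two implications, and the handling of one hypothesis mismatch.

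First I would isolate the \emph{semantic agreement lemma}: whenever $(X,\alpha)$ is a PNTS with $\alpha(x)$ convex closed for all $x$ and $R=(C(X),\sqsubseteq,\Diamond_\alpha)$ is the modal Riesz space given by Proposition~\ref{model_to_algebra}, then $\sem{\phi}_\alpha=\gsem{\phi}_R$ for every $\phi\in\R$. The point is that the formulas of $\R$ are precisely the terms over the signature of modal Riesz spaces built from the single constant $\underline{1}$, so there is a \emph{unique} structure-preserving map from formulas into $R$ that sends $\underline{1}$ to the strong unit of $R$. By Definition~\ref{model_semantics} the map $\sem{\cdot}_\alpha$ interprets $+$, $q(\cdot)$ and $\sqcup$ pointwise, $\Diamond$ as $\Diamond_\alpha$, and $\underline{1}$ as the constant function $1$, which is exactly the strong unit of $C(X)$; hence it is one such map, and $\gsem{\cdot}_R$ is the other by definition. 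They therefore agree, by a routine induction on the structure of $\phi$.

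Next I would prove the implication $\mathcal{M}\models\phi=\psi\Rightarrow\mathcal{R}\Vdash\phi=\psi$. Fix an arbitrary $R\in\mathcal{R}$. By Theorem~\ref{algebra_to_model} there is a PNTS $(X,\alpha)$ with $\alpha(x)$ convex closed (in particular closed, so $(X,\alpha)\in\mathcal{M}$) and a modal Riesz space isomorphism $R\cong(C(X),\sqsubseteq,\Diamond_\alpha)$. Since isomorphisms commute with the unique homomorphisms $\gsem{\cdot}$, the semantic agreement lemma gives $\gsem{\phi}_R=\sem{\phi}_\alpha$ and $\gsem{\psi}_R=\sem{\psi}_\alpha$; the hypothesis $\sem{\phi}_\alpha=\sem{\psi}_\alpha$ then yields $\gsem{\phi}_R=\gsem{\psi}_R$, and as $R$ was arbitrary we obtain $\mathcal{R}\Vdash\phi=\psi$. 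The type-MP and type-NTS cases use the corresponding clauses of Theorem~\ref{algebra_to_model} verbatim.

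The one point requiring care, and the step I expect to be the main (if minor) obstacle, is the mismatch between the closedness hypothesis defining $\mathcal{M}$ and the convex-closedness hypothesis of Proposition~\ref{model_to_algebra}; it surfaces in the reverse implication $\mathcal{R}\Vdash\phi=\psi\Rightarrow\mathcal{M}\models\phi=\psi$. Here I would fix $(X,\alpha)\in\mathcal{M}$ and pass to $\alpha'=\overline{\textnormal{H}}\circ\alpha$. Since $\overline{\textnormal{H}}(\alpha(x))$ and $\alpha(x)$ have the same closed convex hull, Theorem~\ref{theorem_UE=convex} gives $ue_{\alpha(x)}=ue_{\alpha'(x)}$ for every $x$, hence $\Diamond_\alpha=\Diamond_{\alpha'}$ (by Equation~\ref{definition_diamond}) and therefore $\sem{\phi}_\alpha=\sem{\phi}_{\alpha'}$ for all $\phi$. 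Now $\alpha'(x)$ is convex closed, so Proposition~\ref{model_to_algebra} gives $R=(C(X),\sqsubseteq,\Diamond_{\alpha'})\in\mathcal{R}$, and the semantic agreement lemma together with $\mathcal{R}\Vdash\phi=\psi$ yields $\sem{\phi}_\alpha=\gsem{\phi}_R=\gsem{\psi}_R=\sem{\psi}_\alpha$, so $(X,\alpha)\models\phi=\psi$; arbitrariness of $(X,\alpha)$ finishes the case. For Markov processes and NTS's the replacement is in fact unnecessary, since the sets produced by $\eta_X$ and $\gamma_X$ in Proposition~\ref{NTS_as_PNTS} are already convex closed; and where the replacement is used, type-MP and type-NTS are preserved because linearity and sup-preservation are properties of the functional $ue_{\alpha(x)}=ue_{\alpha'(x)}$ and hence insensitive to passing from $\alpha$ to $\alpha'$.
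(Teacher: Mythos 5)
Your proposal is correct and takes essentially the same route the paper intends: the paper records this corollary as a direct consequence of Proposition \ref{model_to_algebra} and Theorem \ref{algebra_to_model}, and your argument is precisely that correspondence made explicit. The two details you add --- the semantic-agreement induction identifying $\sem{\cdot}_\alpha$ with $\gsem{\cdot}_R$, and the resolution of the closed-versus-convex-closed hypothesis mismatch via Theorem \ref{theorem_UE=convex} --- are exactly the routine steps the paper leaves implicit.
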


As we anticipated in the introduction, we are not aware of any other $\mathbb{R}$-valued or Boolean (i.e., $\{0,1\}$-valued) logic for PNTS's equipped with an algebraic (i.e., axiomatic) semantics. In the context of Markov processes, on the other hand, the Boolean logic of Panangaden \emph{et.\ al.\ } \cite{StonePrakash} is completely axiomatized and a duality  between (generally infinite) models  and algebras is proved. It is going to be interesting to study the relation between their logic and $\R$. With respect to other logics for Markov processes, we mention that the $\mathbb{R}$-valued (in fact $[0,1]$-valued) logic of \cite{PrakashBook} can be easily seen as a fragment of $\R$.


Note that $\mathbb{R}$-valued logics are interesting also in the context of NTS's, despite the fact that ordinary Boolean ones (such as the modal logic $\K$) suffice to characterize Milner and Park's bisimilarity. See, e.g., \cite{FGK2010}, for a (fixed-point) $\mathbb{R}$-valued modal logic, for expressing quantitative properties of ordinary NTS's. 
Thus, in light of its algebraic presentation and completeness result, the logic $\R$ is also useful as a foundation for quantitative logics for NTS's. Note that the standard modal logic $\K$ can be seen as a Boolean fragment of $\R$ (cf. Definition \ref{boolean_elements}). The logic $\K$ can be axiomatized by the axioms of Boolean algebra plus $\phi\sqsupseteq \psi\Rightarrow\Diamond\phi\sqsupseteq \Diamond\psi$ (monotonicity), $\Diamond \underline{0}=\underline{0}$ (which follows easily since $\Diamond$ is a p.a.h. functional) and $\Diamond(\phi \sqcup \psi)=\Diamond(\phi)\sqcup \Diamond(\psi)$ (cf. Axiom of sup-preservation) \cite{BdRVModal}. We plan to study the relations between $\K$ and $\R$ in future works.


The correspondence between model and algebraic semantics, defined on three important classes of systems, 
is likely to be a valuable tool for future investigations.
 As an example, we suggest that
the algebraic axiomatization could be of help towards design of \emph{structural proof systems} (e.g., \emph{sequent calculi}) for the logic $\R$. Such systems proved to be  useful in the development of techniques for \emph{compositional verification} of nondeterministic and probabilisitic systems. See, e.g., \cite{Simpson04} for a paradigmatic (but restricted to NTS's) example on this line of research based on the standard modal logic $\K$ and, e.g., \cite{Mio2012c,CLM2011}, for recent developments.

\begin{remark}\label{remark_logic_terminology}
We freely adopted the term \emph{logic} for the real valued formalism of $\R$. This is well established terminology in computer-science literature (see, e.g., \cite{MM07,HM96,FGK2010}) as well as in  mathematical logic. For example, \L ukasiewicz logic \cite{MundiciBook} and, more generally, \emph{fuzzy logics}, are well known examples of real-valued formalisms. Furthermore, if a logic is understood as a language for expressing properties of models,  then $\R$ is rightfully a logic in that its formulas denote \emph{experiments} to be performed on models, and these can be regarded as \emph{quantitative predicates}. 
\end{remark}

\subsection{Probabilistic Modal $\mu$-Calculi}\label{logic_mu_calculi}
The choice of primitives of the logic $\R$ has been guided by the Representation theorems of Section \ref{sec_representation_theorems}. The language of Riesz spaces is a simple, yet mathematically rich \cite{JVR1977,Luxemburgh,Riesz1928}, formalism capable of formulating these powerful results from functional analysis. We see the value of the logic $\R$ mainly as a tool for theoretical investigations. However, as many other basic modal logics for NTS's including the modal logic $\K$, the logic $\R$ cannot express many interesting properties of systems such as: termination goals, safety and liveness constraints, \emph{etc}. Richer logics, capable of expressing these properties, can be obtained by enriching the basic modal systems with additional operators. The modal $\mu$-calculus of Dexter Kozen \cite{Kozen83} is a very expressive logic for NTS's obtained by extending $\K$ with least ($\mu$) and greatest ($\nu$) fixed point operators \cite{Stirling96,BS2001}. The modal $\mu$-calculus enjoys a remarkably rich theory \cite{Rudiments2001}. 

In the last decade several fixed point modal logics (henceforth, \emph{probabilistic $\mu$-calculi}) for expressing properties of PNTS's have been considered. We refer to \cite{MioThesis} as a reference to this area of research. Among the different approaches that have been followed to developing analogues of the modal $\mu$-calculus, the most relevant is based on enriching some basic quantitative $\mathbb{R}$-valued modal logic with least and greatest fixed point operators. 
\begin{definition}\label{modal_logics_def_general}
The formulas of the logics $\textnormal{qL}$, $\textnormal{qL}^\odot$, $\textnormal{qL}^\ominus$ and \L\ are defined as:
\begin{center}
\begin{tabular}{l l l l l l l l l }
$(\textnormal{qL})$ &  & $\phi::=  \underline{1}  \ |\ \phi \sqcup \phi \ | \ q\phi \ |\  \neg \phi \ | \ \Diamond \phi$\\
$(\textnormal{qL}^\odot)$ & & $\phi::=  \underline{1}  \ |\ \phi \sqcup \phi \ | \ q\phi \ | \ \neg \phi \ | \ \Diamond \phi\ | \ \phi \cdot \phi$ \\
$(\textnormal{qL}^\ominus)$ & & $\phi::=  \underline{1}  \ |\ \phi \sqcup \phi \ | \ q\phi \ | \ \neg \phi \ | \ \Diamond \phi \ | \ \phi \ominus q$\\
(\L) & $ \ \ \ $ & $\phi::=  \underline{1} \ |\  \phi \sqcup \phi \ | \ q\phi \ |\  \neg \phi \ | \ \Diamond \phi\ | \ \phi \oplus \phi$\\
\end{tabular}
\end{center}
where $q\!\in\!{[0,1]\cap\mathbb{Q}}$ is a rational number. The semantics of a formula $\phi$ interpreted in a PNTS $(X,\alpha)$ is defined as a map $\sem{\phi}_{\alpha}:X\rightarrow\mathbb{R}$ specified as (subscripts $\alpha$ are omitted): $\sem{\underline{1}}$, $\sem{\phi\sqcup\psi}$, $\sem{q\phi}$ and $\sem{\Diamond\phi}$  are defined as in $\R$; $\sem{\neg \phi}(x)=1-\sem{\phi}(x)$, $\sem{\phi\cdot \psi}(x)=\sem{\phi}(x)\sem{\psi}(x)$ (product on reals), $\sem{\phi\ominus q}(x)=\max\{Ê0, \sem{\phi}(x)-q\}$, and $\sem{\phi\oplus \psi}(x)=\min\{1, \sem{\phi}(x)+\sem{\psi}(x)\}$.
\end{definition}
Note that $\textnormal{qL}^\ominus$ is a fragment of $\L$ ($\phi\ominus q$ can be encoded as $\neg(\neg \phi \oplus q)$) and that $\L$ is a fragment of $\R$ ($\phi\oplus \psi$ can be encoded as $(\phi+\psi)\sqcap \underline{1}$). The operations of the logic $\L$ coincide with those available\footnote{\label{RMV_structure_of_L}Actually, the operation of scalar multiplication by $q\in[0,1]$, gives $\L$ a so-called \emph{Riesz MV-algebra} structure. See, e.g., \cite{RMV2011}.} in \L ukasiewicz logic \cite{MundiciBook} and this justifies the choice of notation. The logic $\R$ and $\textnormal{qL}^\odot$, on the other hand, appear to be incomparable. Furthermore, observe that $\sem{\phi}_\alpha$ ranges over $[0,1]$, i.e.,  has always type $\sem{\phi}_\alpha\!:\!X\!\rightarrow\![0,1]$, and that the interpretation of every connective (except $\neg$) is pointwise monotone. The latter two observations allow one to develop a theory of fixed points on top of these logics based on the order-theoretic Knaster-Tarski fixed-point theorem. The syntax of the logics is extended to include \emph{variables} $\texttt{Var}$, ranged over by the letter $v$, and least ($\mu v.\phi$) and greatest ($\nu v.\phi$) fixed point operators which, as expected, bind the variable $v$ in $\phi$. As customary in fixed-point logics \cite{Stirling96}, one restricts attention to \emph{positive formulas}, where
every occurrence of variable $v$ occurs in the scope of an even number of
negations. The semantics of formulas is defined as follows.
\begin{definition}
For a PNTS $(X,\alpha)$, an interpretation of the variables is an assignment $\rho\!:\!\texttt{Var}\!\rightarrow \!(X\!\rightarrow\![0,1])$. For every $f\!:\!X\rightarrow[0,1]$ we write $\rho[f/v]$ for the interpretation specified as $\rho[f/v](w)\!=\!f$ if $w\!=\!v$ and $\rho[f/v](w)\!=\!\rho(w)$ otherwise.  The semantics of a fixed point formula $\phi$ interpreted in $(X,\alpha)$ with $\rho$ is the map specified extending Definition \ref{modal_logics_def_general} with:   $\sem{v}_\rho\!=\! \rho(v)$, $\sem{\mu v.\phi}_\rho\!=\! \lfp( f\mapsto \sem{\phi}_{\rho[f/v]})$ and, similarly, $\sem{\nu v.\phi}_\rho\!=\! \gfp( f\mapsto \sem{\phi}_{\rho[f/v]})$, where $f$ ranges over $X\rightarrow[0,1]$.
\end{definition}
Since $X\rightarrow[0,1]$ (ordered pointwise) is a complete lattice, least and greatest fixed points of arbitrary monotone operators exists. We denote with $\textnormal{qL}\mu$, $\textnormal{qL}\mu^\odot$, $\textnormal{qL}\mu^\ominus$ and $\L\mu$ the $\mu$-calculi obtained in this way. Historically, qL$\mu$ has been the first logic of this family to be studied \cite{HM96,MM07}. 
The logic qL$\mu^\ominus$ has been considered in, e.g., \cite{AM04,deAlfaro2008} and is based on the quantitative modal logic (for Markov processes) of Panangaden \cite{PrakashBook}. It is possible to show (see, e.g., \cite{MIO2012b} that these logics are not sufficiently expressive to encode other important temporal logics for PNTS's such as the \emph{probabilistic CTL} (PCTL) of \cite{BA1995}. To address this limitation, the two logics qL$\mu^{\odot}$ and $\L \mu$ are introduced\footnote{The logic $\L\mu$ is called pL$\mu_{\oplus}$ in \cite{MioThesis}.} in \cite{MioThesis}. The author shows how qL$\mu^{\odot}$ and $\L\mu$ can encode the \emph{qualitative fragment} of PCTL and \emph{full} PCTL respectively.
Our first result of this section is the soundness, with respect to UE-bisimilarity, of all $\mu$-calculi discussed above.
\begin{proposition}[Soundness]
Let $(X,\alpha)$ be a PNTS, $E$ an UE-bisimulation and $\rho$ an interpretation such that $\rho(v)$ is $E$-invariant for all $v\in\texttt{Var}$.  If $(x,y)\!\in\! E$ then, for every closed formula $\phi$ of qL$\mu$ it holds that $\sem{\phi}_\rho(x)=\sem{\phi}_\rho(y)$.  Similarly for $qL\mu^\ominus$, qL$\mu^\odot$ and $\L\mu$ formulas.
\end{proposition}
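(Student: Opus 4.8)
The plan is to prove, by structural induction on $\phi$, the slightly stronger statement that also covers open formulas: for every formula $\phi$ (of qL$\mu$, and likewise for the three other calculi) and every interpretation $\rho$ assigning $E$-invariant functions to all variables, the map $\sem{\phi}_\rho$ is $E$-invariant. The proposition for closed formulas is then the special case in which $\phi$ has no free variables, so that $\rho$ is irrelevant. The strengthening is necessary because the inductive step for a fixed-point binder applies the hypothesis to the body under a \emph{modified} interpretation, and the case for a variable relies on the hypothesis on $\rho$.

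The base cases are immediate: $\sem{\underline{1}}_\rho$ is constant, and $\sem{v}_\rho=\rho(v)$ is $E$-invariant by the assumption on $\rho$. All the non-modal, non-fixed-point connectives ($\sqcup$, $q(\cdot)$, $\neg$, and the calculus-specific $\phi\cdot\psi$, $\phi\oplus\psi$, $\phi\ominus q$) are interpreted pointwise, so whenever $(z,w)\in E$ the values of the subformulas agree on $z$ and $w$ by the inductive hypothesis, and hence so do the values of the compound formula. The modal case $\Diamond\phi$ is exactly Proposition \ref{soundness_E}: by the inductive hypothesis $\sem{\phi}_\rho$ is $E$-invariant, and Definition \ref{def_UE_bisimilarity} then gives, for $(x,y)\in E$, that $\sem{\Diamond\phi}_\rho(x)=ue_{\alpha(x)}(\sem{\phi}_\rho)=ue_{\alpha(y)}(\sem{\phi}_\rho)=\sem{\Diamond\phi}_\rho(y)$.

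The crux is the fixed-point case, say $\mu v.\phi$ (the case $\nu v.\phi$ being dual). I would consider the monotone operator $F:(X\rightarrow[0,1])\rightarrow(X\rightarrow[0,1])$ defined by $F(f)=\sem{\phi}_{\rho[f/v]}$. If $f$ is $E$-invariant, then $\rho[f/v]$ still assigns $E$-invariant functions to every variable, so the inductive hypothesis applied to the subformula $\phi$ yields that $F(f)$ is again $E$-invariant; that is, $F$ maps the set $L_E$ of $E$-invariant functions into itself. I would then invoke the standard characterization of the least fixed point by (possibly transfinite) iteration from below, $\lfp(F)=\bigsqcup_{\gamma}F^{\gamma}(\underline{0})$, where $F^{0}(\underline{0})=\underline{0}$, $F^{\gamma+1}(\underline{0})=F(F^{\gamma}(\underline{0}))$, and $F^{\lambda}(\underline{0})=\bigsqcup_{\beta<\lambda}F^{\beta}(\underline{0})$ at limit stages. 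The constant function $\underline{0}$ is $E$-invariant; $F$ preserves $E$-invariance; and $L_E$ is closed under arbitrary pointwise suprema, since a pointwise supremum of functions constant on each $E$-class is again constant on each $E$-class. By transfinite induction every iterate $F^{\gamma}(\underline{0})$ lies in $L_E$, so the limit $\lfp(F)=\sem{\mu v.\phi}_\rho$ does too. The case $\nu v.\phi$ is identical, iterating downwards from $\underline{1}$ and using closure of $L_E$ under pointwise infima.

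The main obstacle is precisely this last step: one must ensure that the fixed point computed in the full complete lattice $X\rightarrow[0,1]$ agrees with the one obtained inside the sublattice $L_E$. The transfinite-iteration argument resolves this cleanly because $L_E$ is a \emph{complete} sublattice of $X\rightarrow[0,1]$ — its suprema and infima coincide with the ambient pointwise ones, and it contains both $\underline{0}$ and $\underline{1}$ — so the ascending (resp.\ descending) Knaster--Tarski chain never leaves $L_E$ yet converges to the ambient least (resp.\ greatest) fixed point. Since the proof uses nothing beyond monotonicity of each connective and the pointwise nature of every operation except $\Diamond$, the identical argument applies verbatim to qL$\mu^{\ominus}$, qL$\mu^{\odot}$ and $\L\mu$, whose extra connectives ($\ominus q$, product, $\oplus$) are all pointwise and monotone.
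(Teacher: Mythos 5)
Your proposal is correct and follows essentially the same route as the paper: structural induction on the formula (with the strengthening to open formulas under $E$-invariant interpretations), pointwise connectives handled trivially, the modal case exactly as in Proposition~\ref{soundness_E}, and the fixed-point case resolved by transfinite iteration, using the induction hypothesis at successor stages and closure of the $E$-invariant functions under pointwise suprema (resp.\ infima) at limit stages. Your write-up is in fact somewhat more explicit than the paper's (e.g.\ noting that the iteration starts at the $E$-invariant constant $\underline{0}$ and that the sublattice of $E$-invariant functions is complete), but the underlying argument is identical.
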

\begin{proof}
The proof is by induction of the complexity of $\phi$ as in Proposition \ref{soundness_E}. For the case $\phi=\mu v.\psi$, by Knaster-Tarski theorem, we have $\sem{\phi}_\rho\!=\!\sem{\phi}^\alpha_{\rho}$, where $\alpha$ ranges over the ordinals and $\sem{\psi}^{\alpha+1}_{\rho}\!=\!\sem{\psi}_{\rho[\sem{\psi}^\alpha_\rho/v]}$ and $\sem{\psi}^\beta_\rho\!=\!\bigsqcup_{\alpha<\beta}\sem{\psi}^\beta_\rho$, where $\beta$ is a limit ordinal. It is clear that the pointwise supremum of a familiy $E$-invariant functions is $E$-invariant. Furthermore, by induction hypothesis on $\psi$ and $\alpha$, also $\sem{\psi}^{\alpha+1}_\rho$ is $E$-invariant. Thus $\sem{\phi}$ is $E$-invariant as desired. The case for $\phi=\nu v.\psi$ is similar.
\end{proof}

It is shown in \cite{deAlfaro2008} that ${\textnormal{qL}\mu}^\ominus$ (or even its fixed-point free fragment $\textnormal{qL}^\ominus$) is weakly complete with respect to UE-bisimilarity: if $\sem{\phi}(x)\!=\! \sem{\phi}(y)$ for all qL$\mu^\ominus$ formulas $\phi$, then  $x$ and $y$ are UE-bisimilar.
However the logic ${\textnormal{qL}\mu}^\ominus$ is not strongly complete (cf. Section \ref{metrics_section} below) in the sense\footnote{Actually, Definition \ref{strong_completeness_def} requires (the denotation of) formulas to be dense in $X\rightarrow\mathbb{R}$. In the context of $[0,1]$-value logics, we instead require density in $X\rightarrow[0,1]$. Note how this adapted notion of strong completeness still implies weak completeness.} of Definition \ref{strong_completeness_def}. We now prove that, instead, the logics qL$\mu^\odot$ and $\L$, and thus also their fixed-point extensions, are strongly complete.
\begin{theorem}\label{completeness_L}
The logics \L\ and $\textnormal{qL}^\odot$ are strongly complete.
\end{theorem}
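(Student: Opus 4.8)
The plan is to follow the same route used for $\R$ in Theorem \ref{completeness_th_1}. Fix a PNTS $(X,\alpha)$ and one of the two logics --- call it $\mathcal{L}$ (either \L\ or $\textnormal{qL}^\odot$) --- and set $E=\textnormal{ker}(\mathcal{L})=\{(x,y)\mid \forall\phi\in\mathcal{L}.\ \sem{\phi}_\alpha(x)=\sem{\phi}_\alpha(y)\}$. By the soundness proposition every UE-bisimulation is contained in $E$, so it suffices to establish two things: (i) the set of denotations $\mathcal{F}=\{\sem{\phi}_\alpha\mid\phi\in\mathcal{L}\}$, viewed as functions on $X/E$, is dense in $X/E\rightarrow[0,1]$; and (ii) $E$ is itself a UE-bisimulation, whence $E$ coincides with the greatest UE-bisimulation $\sim$ and density in $X/E\rightarrow[0,1]$ is exactly the ($[0,1]$-adapted) strong completeness statement. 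Given (i), part (ii) would be obtained verbatim as in Theorem \ref{completeness_th_1}: to check $ue_{\alpha(x)}(f)=ue_{\alpha(y)}(f)$ for every $E$-invariant $f$ one uses positive affine homogeneity of $ue$ to reduce to $f$ with range in $[0,1]$, approximates such $f$ by some $\sem{\phi}_\alpha$ through (i), and transports a putative strict inequality $\big(\Diamond_\alpha f\big)(x)>\big(\Diamond_\alpha f\big)(y)$ to a separating formula $\Diamond\phi$, contradicting $(x,y)\in E$.

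The substance of the proof is therefore (i). First I observe that $\mathcal{F}$ is a \emph{sublattice} of $C(X/E)$: it is closed under $\sqcup$ directly, and under $\sqcap$ since $\phi\sqcap\psi=\neg(\neg\phi\sqcup\neg\psi)$; it contains every rational constant $q\in[0,1]\cap\mathbb{Q}$ as $\sem{q\underline{1}}_\alpha$; and, by the very definition of $E$, it \emph{separates the points} of $X/E$. Since $X/E$ is finite (hence compact), I would invoke the lattice form of the Stone--Weierstrass theorem (Kakutani--Krein): a sublattice $L\subseteq C(Y)$ with $Y$ compact is dense in the set of those $g$ for which, for all $p,q\in Y$ and $\epsilon>0$, some $f\in L$ satisfies $|f(p)-g(p)|<\epsilon$ and $|f(q)-g(q)|<\epsilon$. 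It then remains only to verify this \emph{two-point interpolation} property for $\mathcal{F}$: for distinct $x,y\in X/E$ and arbitrary targets $a,b\in[0,1]$, I must produce a formula whose denotation at $x$ and at $y$ approximates $a$ and $b$.

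For this I pick a formula $\psi$ separating $x$ and $y$, say $c=\sem{\psi}_\alpha(x)\neq\sem{\psi}_\alpha(y)=d$, and post-compose $\psi$ with a suitable \emph{unary} term. For \L\ this is immediate: the unary terms of \L\ (a Riesz MV-algebra, cf.\ the footnote on \L ukasiewicz logic and \cite{MundiciBook}) compute exactly the continuous piecewise-linear maps $[0,1]\rightarrow[0,1]$ with rational coefficients, so since $c\neq d$ one simply applies to $\psi$ the clamped rational affine map sending $c\mapsto a$ and $d\mapsto b$. For $\textnormal{qL}^\odot$ there are no unrestricted sums and no scalars $>1$, so instead I build an approximate switch: taking products $\phi\cdot\phi$ of an increasing rational-affine function yields monotone functions of arbitrarily large slope, hence a term $\chi$ with $\sem{\chi}_\alpha\approx\mathbf{1}[\,\sem{\psi}_\alpha\geq t\,]$ for a rational threshold $t$ strictly between $c$ and $d$; combining $\chi$ with the rational constants $a,b$ through $\neg$ and $\cdot$ (e.g.\ via the probabilistic sum $f,g\mapsto\neg(\neg f\cdot\neg g)$) gives a term evaluating to approximately $a$ at $x$ and $b$ at $y$. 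This establishes the two-point condition, hence density, for both logics; the fixed-point extensions then inherit strong completeness since they contain the respective fixed-point-free fragments.

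I expect the only genuinely delicate step to be the two-point interpolation for $\textnormal{qL}^\odot$: because this logic lacks addition and large scalars, the polynomial (Bernstein-type) approximation underlying the \emph{algebra} version of Stone--Weierstrass is unavailable, and one is forced to synthesize steep, $[0,1]$-valued monotone ``switch'' functions purely from products and the lattice operations, and then argue that the approximation error at the two relevant points stays controlled. For \L\ the corresponding step is routine via rational McNaughton functions.
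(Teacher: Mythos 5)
Your proof is correct, and while its outer skeleton is the same as the paper's (set $E=\textnormal{ker}(\mathcal{L})$, establish density of denotations over $X/E$, then rerun the argument of Theorem \ref{completeness_th_1} to conclude $E$ is itself a UE-bisimulation), the density step is carried out by a genuinely different route. The paper handles \L\ by citing an \emph{algebraic} Stone--Weierstrass theorem for RMV-algebras (a point-separating RMV-subalgebra of $X\rightarrow[0,1]$ is dense), and handles $\textnormal{qL}^\odot$ not directly but by reduction to \L, citing Lemma 3.3.16 of \cite{MioThesis} which says the truncated sum $f\oplus g$ is approximable by $\textnormal{qL}^\odot$ terms. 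You instead invoke the \emph{lattice} (Kakutani--Krein) form of Stone--Weierstrass and verify two-point interpolation separately in each logic: clamped rational piecewise-linear (McNaughton-type) terms for \L, and sharp threshold functions of the shape $t\mapsto 1-(1-t^{n})^{m}$, built from $\cdot$ and $\neg$, for $\textnormal{qL}^\odot$, glued to the constants via the probabilistic sum $\neg(\neg u\cdot\neg v)$. Both routes are sound; what each buys: the paper's is shorter given the cited results, and its reduction of $\textnormal{qL}^\odot$ to \L\ is of independent interest (it shows the two logics have the same expressive power in the limit); yours is self-contained and uniform---since $X/E$ is finite the lattice lemma is an elementary finite max/min argument, no external RMV or thesis results are needed, and the same two-point scheme covers both logics. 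Note also that your switch construction is essentially the same product trick that underlies the cited thesis lemma, so the two treatments of $\textnormal{qL}^\odot$ are close relatives. Two small points to tidy up in a full write-up: the values $c=\sem{\psi}_\alpha(x)$ and $d=\sem{\psi}_\alpha(y)$ need not be rational, so the interpolating terms only approximately hit the targets (this is harmless, as the lattice theorem only requires $\epsilon$-interpolation, but it should be said); and the rescaling of an $E$-invariant $f:X\rightarrow\mathbb{R}$ into $[0,1]$ via positive affine homogeneity of $ue$ deserves a sentence on the degenerate case $\alpha(x)=\emptyset$ (handled, e.g., by observing that $\Diamond\underline{1}$ already separates states with empty successor sets from the rest)---an edge case the paper's own proof also leaves implicit.
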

\begin{proof}
The proof for $\L$ is based on a Stone-Weierstrass Theorem for RMV-algebras (see, e.g., \cite{RMV2011}), stating that every RMV-subalgebra (cf. Footnote \ref{RMV_structure_of_L}) of $X\rightarrow[0,1]$ that separates points is dense in $X\rightarrow[0,1]$. The proof is then identical to that of Theorem \ref{completeness_th_1}. For what concerns $\textnormal{qL}^\odot$ it is shown in \cite{MioThesis} (Lemma 3.3.16) that for for every $f,g\!\in\!X\rightarrow[0,1]$, the function $f\oplus g$ can be approximated by a sequence of functions expressible in the language of $\textnormal{qL}^\odot $. 
\end{proof}
Thus the expressive logics $\textnormal{qL}\mu^\odot$ and $\L\mu$ can be used to denote (approximations of) all possible $[0,1]$-valued experiments (cf. Example \ref{example_formulas_as_experiments}). This, together with the fact that $\L\mu$ can encode the logic PCTL, provides a strong characterization of UE-bisimilarity in terms of expressive, and potentially useful in practice, temporal logics for verification.

\subsection{Logical Characterization of the Hausdorff Metric}\label{metrics_section}
As mentioned above, the logic qL$\mu^\ominus$ of \cite{deAlfaro2008} (and the similar $[0,1]$-valued logic of Panangaden \cite{PrakashBook}) is not strongly complete but just weakly complete. The connectives of qL$\mu^\ominus$ are carefully chosen so that the denotation of formulas are Lipschitz (i.e., not expansive) functions (see  \cite{PrakashBook}). This property is crucial in proving that the logically defined \emph{behavioral metric}\footnote{Technically, $d(x,y)$ is a pseudo-metric because if $x,y$ are UE-bisimilar then $d(x,y)=0$, even if $x\neq y$. The function $d(x,y)$ is an authentic metric on the 
 state space quotiented by UE-bisimilarity (cf. Remark \ref{convex_algo}). To ignore this pedantic distinction, in this section we just simple assume that the largest UE-bisimulation in $(X,\alpha)$ is the identity relation.} on  states of a PNTS $(X,\alpha)$
\begin{equation}\label{def_metric_equation}
d_L(x,y)= \displaystyle \bigsqcup_{\phi\in \textnormal{qL}\mu^\ominus} | \sem{\phi}_\alpha(x) - \sem{\phi}_\alpha(y) |
\end{equation}
coincides with the so-called \emph{Hausdorff} behavioral metric which is based on the following idea. 
For a given PNTS $(X,\alpha)$, and for any\footnote{Many natural metrics on $\mathcal{D}(X)$ exist. E.g., any norm on $\mathbb{R}^n\!\cong\!X\!\rightarrow\!\mathbb{R}$ (cf. Convention \ref{conv_1}) induces a metric on $\mathcal{D}(X)$. The sup-norm on $\mathbb{R}^n$ induces the so-called \emph{total variation} metric on $\mathcal{D}(X)$. See also Remark \ref{remark_unique_topology}.} metric $m$ on $\mathcal{D}(X)$,
one can consider the Hausdorff metric $d^m_H$ on the space of closed subsets of $\mathcal{D}(X)$ (see, e.g., \cite{Kechris}). Then it is natural to define a metric on states as $d(x,y)=d^m_H(\alpha(x),\alpha(y))$. When $m$ is the Kantorovich metric (see, e.g., \cite{Kechris}) on $\mathcal{D}(X)$, then $d^m_H$ and $d_L$, as above, coincide. This, as claimed earlier, implies that $\textnormal{qL}\mu^\ominus$ is not strongly complete: take $x,y$ such that $0\!<\!d(x,y)\!<\!1$ so that there is no $\phi\!\in\!\textnormal{qL}\mu^\ominus$ such that $\sem{\phi}(x)\!=\!1$ and $\sem{\phi}(y)\!=\!0$.

The introduction of behavioral metrics for probabilistic systems is strongly motivated by the need of \emph{approximation methods} (based on the idea that small changes of the probabilities in a PNTS corresponds to small changes in behavior), a line of research pioneered by Panangaden \cite{PrakashBook}.
The fact that qL$\mu^\ominus$ characterizes the Hausdorff behavioral metric is thus a remarkable property. Given the strong completeness of $\L$ (or $\R$, $\textnormal{qL}^\odot$), no interesting (i.e., not discrete) metric can be obtained as in (\ref{def_metric_equation}). We now show, however, that also the richer logic $\L$ (and $\R$, $\textnormal{qL}^\odot$) can be used to logically characterize the \emph{Hausdorff} behavioral metric in an interesting fashion. This is once again proven by applying results from linear algebra.
\begin{theorem}\label{metric_characterization}
Let $(X,\alpha)$ be a PNTS. Define $d_L$ as: $$d_L(x,y)=\displaystyle \bigsqcup_{\phi\in \textnormal{\L}} | \sem{\Diamond\phi}_\alpha(x) - \sem{\Diamond\phi}_\alpha(y) | $$
Then $d_L$ is (equivalent to) $d(x,y)\!=\!d^m_H(\alpha(x),\alpha(y))$, where $m$ is the total variation metric.
\end{theorem}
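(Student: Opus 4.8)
The plan is to reduce the statement to a purely geometric identity about support functions of convex bodies and then prove that identity by convex duality. Throughout I work under the standing convention of this section that $\alpha(x)$ is convex and closed (hence compact, since $\mathcal{D}(X)$ is compact); otherwise one replaces $\alpha(x)$ by $\overline{\textnormal{H}}(\alpha(x))$, which by Theorem \ref{theorem_UE=convex} leaves $ue_{\alpha(x)}$, and hence $\sem{\Diamond\phi}_\alpha$, unchanged. For a convex compact $A\subseteq\mathcal{D}(X)$ write $h_A(f)=ue_A(f)=\sup_{\mu\in A}E_\mu(f)$ for its support function, so that $\sem{\Diamond\phi}_\alpha(x)=h_{\alpha(x)}(\sem{\phi}_\alpha)$ by Equation \ref{definition_diamond}.

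The first step is a density reduction. Each $h_A$ is $1$-Lipschitz for the sup norm, since $|h_A(f)-h_A(g)|\leq \sup_{\mu\in A}|E_\mu(f-g)|\leq \|f-g\|_\infty$, so $f\mapsto |h_{\alpha(x)}(f)-h_{\alpha(y)}(f)|$ is continuous. By Theorem \ref{completeness_L} the denotations $\{\sem{\phi}_\alpha\mid\phi\in\textnormal{\L}\}$ are dense in $X\rightarrow[0,1]$ (using the convention of this section that the identity is the only UE-bisimulation). Hence the supremum over $\textnormal{\L}$-formulas equals the supremum over all $f\in[0,1]^X$, and it remains to establish the analytic identity
\[
\sup_{f\in[0,1]^X}\,|h_A(f)-h_B(f)| \;=\; d^m_H(A,B),
\]
where $A=\alpha(x)$, $B=\alpha(y)$, and $m$ is the total variation metric, which admits the elementary dual description $m(\mu,\nu)=\sup_{f\in[0,1]^X}\bigl(E_\mu(f)-E_\nu(f)\bigr)=\frac12\sum_x|\mu(x)-\nu(x)|$ (the optimal $f$ being the indicator of $\{x\mid\mu(x)>\nu(x)\}$).

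For the easy inequality $(\leq)$, fix $f\in[0,1]^X$ and let $\mu^\ast\in A$ attain $h_A(f)$ by compactness. For every $\nu\in B$ we have $h_A(f)-h_B(f)\leq E_{\mu^\ast}(f)-E_\nu(f)\leq m(\mu^\ast,\nu)$, so taking the infimum over $\nu\in B$ gives $h_A(f)-h_B(f)\leq \sup_{\mu\in A}\inf_{\nu\in B}m(\mu,\nu)\leq d^m_H(A,B)$; the symmetric estimate bounds $h_B(f)-h_A(f)$, and together they give $(\leq)$. For the hard inequality $(\geq)$, assume without loss of generality that $d^m_H(A,B)=\sup_{\mu\in A}\inf_{\nu\in B}m(\mu,\nu)$ and pick, by compactness of $A$, a point $\mu_0\in A$ with $\inf_{\nu\in B}m(\mu_0,\nu)=d^m_H(A,B)=:r$. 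Using the dual description of $m$,
\[
r=\inf_{\nu\in B}\ \sup_{f\in[0,1]^X}\bigl(E_{\mu_0}(f)-E_\nu(f)\bigr).
\]
The payoff $(\nu,f)\mapsto E_{\mu_0}(f)-E_\nu(f)$ is affine and continuous in each argument, and both $B$ and $[0,1]^X$ are convex and compact, so Sion's minimax theorem permits exchanging the two operators, yielding $r=\sup_{f\in[0,1]^X}\bigl(E_{\mu_0}(f)-h_B(f)\bigr)$. Choosing $f^\ast$ attaining this supremum and using $E_{\mu_0}(f^\ast)\leq h_A(f^\ast)$ gives $h_A(f^\ast)-h_B(f^\ast)\geq r$, which is the desired bound (the other case of the Hausdorff maximum is symmetric).

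The main obstacle is this final step: producing a single experiment $f\in[0,1]^X$ whose observable gap $\sem{\Diamond\phi}_\alpha(x)-\sem{\Diamond\phi}_\alpha(y)$ realizes the full Hausdorff distance. This is precisely where the convexity of $\alpha(y)$ (so that $\inf_\nu$ ranges over a convex set) and the minimax exchange are indispensable, whereas the $(\leq)$ direction needs only compactness. I note finally that with total variation normalised as above the identity holds exactly; the parenthetical ``(equivalent to)'' in the statement absorbs the harmless rescaling arising from other normalisations of total variation and from the pseudo-metric convention of this section.
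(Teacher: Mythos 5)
Your proof is correct for the case where both $\alpha(x)$ and $\alpha(y)$ are nonempty, and it follows the same two-stage skeleton as the paper's proof --- reduce via strong completeness of $\textnormal{\L}$ (Theorem \ref{completeness_L}) to an identity between $\sup_{f}|ue_A(f)-ue_B(f)|$ and a Hausdorff distance between convex closed sets --- but you carry out the second stage by a genuinely different argument. The paper simply cites the classical duality result \cite[\S 8.4]{LAX}, namely $d^m_H(A,B)=\bigsqcup_{\|f\|_\infty\leq 1}|ue_A(f)-ue_B(f)|$ with $f$ ranging over $[-1,1]$-valued functions, and then uses positive affine homogeneity to replace $f$ by $g=\frac{1}{2}f+\frac{1}{2}\in[0,1]^X$, which costs a factor $\frac{1}{2}$ and is why the conclusion is stated as equivalence of metrics. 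You instead prove the duality from scratch directly over $[0,1]^X$: the easy inequality via the dual description of total variation, and the hard inequality by applying Sion's minimax theorem to the payoff $(\nu,f)\mapsto E_{\mu_0}(f)-E_\nu(f)$ on $B\times[0,1]^X$, which is affine and continuous in each argument. This buys self-containedness, an exact identity (with total variation normalised as half the $L^1$ distance) rather than a factor-2 equivalence, and an explicit localisation of where convexity of $\alpha(y)$ is indispensable (the minimax exchange); the paper's route buys brevity by outsourcing precisely this point to a textbook theorem. Your density reduction is also more careful than the paper's: you justify exchanging the supremum over formulas with the supremum over all of $[0,1]^X$ by the $1$-Lipschitz continuity of $ue_A$, a point the paper leaves implicit. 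The one thing you omit is the degenerate case $\alpha(x)=\emptyset$ or $\alpha(y)=\emptyset$: your compactness and minimax steps presuppose nonemptiness, and the identity itself needs a convention there; the paper dispatches this case in one sentence, and your proof should do the same.
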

\begin{proof}
Assume without loss of generality (by Theorem \ref{theorem_UE=convex}) that $\alpha(x)$ is convex closed for all $x\!\in\! X$. It is known (see, e.g., \cite[\S 8.4]{LAX}) that for convex closed sets the equality $d^m_H(A,B)=\bigsqcup_{\| f\|_\infty \leq1} | ue_A(f)-ue_B(f)|$ holds, where $\| f\|_\infty\!\leq\!1$ means that $f$ has sup-norm $\leq\! 1$, i.e., $f\!:\!X\!\rightarrow\![-1,1]$. If $A\!=\!\emptyset$ then  $d_L(x,y)\!=\!d(x,y)$, as it is simple to verify.
Thus assume $A,B\!\neq\!\emptyset$. Since $ue_A$ and $ue_B$ are positive affinely homogeneous, for every $\| f\| \leq 1$ we have, for $g \!=\!  \frac{1}{2}f+\frac{1}{2}$, that $ue_A(g)\!=\! \frac{1}{2}ue_A(f)+\frac{1}{2}$ and similarly for $ue_B(g)$. Note that $g\!:\!X\rightarrow\![0,1]$. Thus $| ue_A(g)-ue_B(g)| \!= \!\frac{1}{2} | ue_A(f)-ue_B(f)|$. 
Hence the metric $d^\prime(A,B)\!=\!\bigsqcup_{g:X\rightarrow[0,1]} | ue_A(g)-ue_B(g)|$ is equivalent to $d^m_H(A,B)$, i.e., the two metrics induce the same topology. The desired result follows by density of $\L$ in $X\rightarrow[0,1]$ and the fact that $ue_{\alpha(x)}(\sem{\phi})\!=\!\sem{\Diamond\phi}_\alpha (x)$.
\end{proof}

\begin{remark}
Due to the lack of space, we just remark here that, given a finite PNTS $(X,\alpha)$, it is simple to calculate the value $d_L(x,y)$ as in Theorem \ref{metric_characterization} by means of Linear Programming.
\end{remark}

It is natural to interpret $d_L(x,y)$ as a value related to the probability of distinguishing between $x$ and $y$ in a \emph{one-shot} experiment $\phi$, modeled by the formula $\Diamond\phi$ (``push the button'' once and perform experiment $\phi$). We plan to investigate this viewpoint, and the possible relations with the (information-theoretic) one-shot attack models of \cite{KPP2008}, in future research.

\section{Congruence for PGSOS}\label{congruence_section}
A \emph{congruence} on an algebra $\mathcal{A}$ is an equivalence relation that respects the constructors of the algebraic structure: if $a_{i}\equiv b_{i}$, for $i\!\in\!\{1,\dots,n\}$, then $f(\vec{a})\equiv f(\vec{b})$ for any operation $f$ of arity $n$ in the signature of $\mathcal{A}$. When $\mathcal{A}$ is a collection of programs and each $f$ is a program constructor, a congruence relation captures a notion of behavioral equivalence and validates the principle that substituting behaviorally equal sub-programs $a_{i}$ of a compound program $f(\vec{a})$ with behaviorally equivalent sub-programs $b_{i}$, results in a program $f(\vec{b})$ that is equivalent to $f(\vec{a})$. We show in this section that UE-bisimilarity is a congruence with respect to the important program constructor of (communicating) parallel composition. The proof is valid for arbitrary systems (i.e., potentially infinite PNTS's) and can readily be adapted to show that UE-bisimilarity is a congruence with respect to the wide family of program constructors specified following the PGSOS rule format of \cite{BartelsThesis} which virtually includes all CCS-style operators of practical interest. To make our discussion interesting, we now consider \emph{labeled} PNTS's, i.e., structures $(X,\{\alpha_a\}_{a\in L})$ such that $(X,\alpha_a)$ is a PNTS, for every label $a\!\in\! L$.
We define UE-bisimulations on labeled PNTS's as expected.
\begin{definition}\label{UE_bisim_labeled}
Given a labeled PNTS $(X,\{\alpha_a\}_{a\in L})$, an equivalence relation $E\subseteq X\times X$ is a UE-bisimulation if it is a UE-bisimulation for the PNTS $(X,\alpha_a)$, for all $a\in L$.
\end{definition}

The operational semantics of the parallel operator operator ($||$) is specified as expected where, as customary, we consider a set of labels $L$ containing a distinct label $\tau$ and pairs a complementary labels $a,\overline{a}$ (with $\overline{\overline{a}}=a$). 
\begin{center}
\footnotesize{
\AxiomC{$x\freccia{a}\mu$}
\RightLabel{($||$ L) $\ \  \   $}
\UnaryInfC{$x|| y \freccia{a} \mu || y $}
\DisplayProof
\AxiomC{$y\freccia{a}\mu$}
\RightLabel{($||$ R) $ \ \ \ $}
\UnaryInfC{$x|| y \freccia{a} x || \mu $}
\DisplayProof
\AxiomC{$y\freccia{a}\mu$}
\AxiomC{$y\freccia{\overline{a}} \nu $}
\RightLabel{($||$ Comm)}
\BinaryInfC{$x|| y \freccia{\tau} \mu || \nu $}
\DisplayProof
}
\end{center}
Following standard approaches (see, e.g., \cite{Simpson04}) this leads to the corresponding notion of process algebra for $||$. 
\begin{definition}
A \emph{process algebra} for the parallel operator $||$ is a labeled PNTS $(X,\{\alpha_{a}\}_{a\in L})$ together with an interpretation $\iota$ of $||$, i.e., a function $\iota\!:\!X\times X\!\rightarrow\! X$, such that $\iota(x,y)\freccia{a}\mu$ holds iff:
\begin{enumerate}
\item ($||$ L): $x \freccia{a} \mu_{1}$ and $\mu= i(\mu_{1} \times \delta_{y}) $, or
\item ($||$ R): $y \freccia{a} \mu_{1}$ and $\mu= i(\delta_{x} \times \mu_{1}) $, or
\item ($||$ Comm): $a\!=\!\tau$, $x \freccia{b}\mu_{1}$, $y \freccia{\overline{b}} \mu_{2}$ and  $\mu= i(\mu_1 \times \mu_2)$
\end{enumerate}
hold, where $x\freccia{a}\mu$ iff $\mu\!\in\!\alpha_a(x)$, $\mu\!\times\! \nu$ denotes the product probability distribution ($(x,y)\mapsto \mu(x)\nu(y)$) and  $i(\mu \times \nu)$ is defined as 
$i(\mu \times \nu)(x) = \sum_{z \in \iota^{-1}(x)}(\mu \times \nu)(z)$, with $z\in X\times X$.
\end{definition}

\begin{theorem}
For a process algebra $\langle (X,\{\alpha_a\}_{a\in L}),\iota\rangle$,  the UE-bisimilarity relation is a congruence for $||$.
\end{theorem}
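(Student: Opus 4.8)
The plan is to exhibit, for any states with $x_1\sim x_2$ and $y_1\sim y_2$ (writing $\sim$ for UE-bisimilarity, the largest UE-bisimulation, cf.\ Theorem \ref{completeness_th_1}), a UE-bisimulation relating $\iota(x_1,y_1)$ and $\iota(x_2,y_2)$. Set $R=\{(\iota(x_1,y_1),\iota(x_2,y_2))\ | \ x_1\sim x_2,\ y_1\sim y_2\}$ and let $E$ be the equivalence closure of $\sim\cup R$. I would argue throughout with the functional reformulation of UE-bisimilarity (Definitions \ref{def_UE_bisimilarity} and \ref{UE_bisim_labeled}): $E$ is a UE-bisimulation iff for every $(s,t)\in E$, every label $a$, and every $E$-invariant $f\!:\!X\!\rightarrow\!\mathbb{R}$ one has $ue_{\alpha_a(s)}(f)=ue_{\alpha_a(t)}(f)$. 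The first observation is that, for the fixed family of $E$-invariant test functions, this transfer property is an equality of reals and hence is automatically preserved by reflexive, symmetric and transitive closure; so it suffices to verify it for pairs in $\sim$ and in $R$. For $\sim$-pairs it is immediate since an $E$-invariant $f$ is in particular $\sim$-invariant. The whole content is the verification on $R$.

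The structural fact I would establish first is that, for $E$-invariant $f$, all the auxiliary functions arising below are merely $\sim$-invariant, which is exactly what is needed to apply $x_1\sim x_2$ and $y_1\sim y_2$: if $u_1\sim u_2$ then $(\iota(u_1,y),\iota(u_2,y))\in R\subseteq E$, so $E$-invariance of $f$ gives $f(\iota(u_1,y))=f(\iota(u_2,y))$ — this is what spares us from having to prove $E$ itself is a congruence. I would then unfold $\alpha_a(\iota(x,y))$ into the three families coming from $(||\,\mathrm{L})$, $(||\,\mathrm{R})$, $(||\,\mathrm{Comm})$ and use that $ue$ of a union is the maximum of the $ue$'s of the parts. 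For $(||\,\mathrm{L})$ a direct computation gives $E_{i(\mu_1\times\delta_y)}(f)=E_{\mu_1}(g^{L}_{y})$ with $g^{L}_{y}(u)=f(\iota(u,y))$, so the $\mathrm{L}$-part contributes $ue_{\alpha_a(x)}(g^{L}_{y})$. Since $g^{L}_{y_1}$ is $\sim$-invariant, $x_1\sim x_2$ gives $ue_{\alpha_a(x_1)}(g^{L}_{y_1})=ue_{\alpha_a(x_2)}(g^{L}_{y_1})$, while $y_1\sim y_2$ gives $g^{L}_{y_1}=g^{L}_{y_2}$ as functions (via $E$-invariance of $f$ on the $R$-pairs $(\iota(u,y_1),\iota(u,y_2))$); hence the two $\mathrm{L}$-contributions coincide. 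The $(||\,\mathrm{R})$ family is symmetric.

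The main obstacle is the $(||\,\mathrm{Comm})$ family (relevant only for $a=\tau$), where $E_{i(\mu_1\times\mu_2)}(f)=\sum_{u,v}\mu_1(u)\mu_2(v)f(\iota(u,v))$ is \emph{bilinear}, so the contribution is the joint supremum $\sup_b\sup_{\mu_1\in\alpha_b(x),\,\mu_2\in\alpha_{\bar b}(y)}$ of this quantity. I would resolve it in two stages. Fixing $b$ and $\mu_2$, the inner supremum over $\mu_1$ is $ue_{\alpha_b(x)}(h_{\mu_2})$ with $h_{\mu_2}(u)=\sum_v\mu_2(v)f(\iota(u,v))$; as $h_{\mu_2}$ is $\sim$-invariant in $u$, the relation $x_1\sim x_2$ lets me replace $\alpha_b(x_1)$ by $\alpha_b(x_2)$ pointwise in $\mu_2$. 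It remains to change $y$ in $\sup_{\mu_2\in\alpha_{\bar b}(y)}\Phi(\mu_2)$ where $\Phi(\mu_2)=ue_{\alpha_b(x_2)}(h_{\mu_2})$. The delicate point is that $\Phi$ is not linear but only \emph{convex} (a supremum of expectations precomposed with the linear map $\mu_2\mapsto h_{\mu_2}$) and continuous, so the UE-bisimilarity of $y_1,y_2$, which only equates suprema of linear functionals, does not apply directly. I would overcome this by observing (i) that $\Phi$ factors through the quotient $q\!:\!\mathcal{D}(X)\!\rightarrow\!\mathcal{D}(X/{\sim})$, because $h_{\mu_2}$ depends on $\mu_2$ only through its mass on $\sim$-classes, and (ii) that for any convex continuous $\Phi$ one has $\sup_{S}\Phi=\sup_{\overline{\textnormal{H}}(S)}\Phi$ (convex combinations do not exceed $\sup_S\Phi$ by convexity, and passing to the closure changes nothing by continuity). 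Combining (i) and (ii) with the characterisation of UE-bisimilarity from Theorem \ref{def_convexclosed_bisim}, namely $\overline{\textnormal{H}}(\alpha_{\bar b}(y_1))/{\sim}=\overline{\textnormal{H}}(\alpha_{\bar b}(y_2))/{\sim}$, yields $\sup_{\mu_2\in\alpha_{\bar b}(y_1)}\Phi=\sup_{\mu_2\in\alpha_{\bar b}(y_2)}\Phi$. Thus all three contributions agree, so $ue_{\alpha_\tau(\iota(x_1,y_1))}(f)=ue_{\alpha_\tau(\iota(x_2,y_2))}(f)$ and $E$ is a UE-bisimulation. Since $\sim$ is the largest, $R\subseteq E\subseteq\sim$, which is precisely the congruence property. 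The argument uses the transition rules only through their shape, and the convex-supremum step (ii) does not rely on finiteness, so it adapts verbatim to every PGSOS operator.
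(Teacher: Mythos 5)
Your proof is correct, but it takes a genuinely different route from the paper's at the one point where the difficulty lies. The paper never varies both arguments of $||$ simultaneously: its relation consists only of one-sided pairs $(x||y,\,x'||y)$ and $(y||x,\,y||x')$ with $(x,x')$ UE-bisimilar and the other component literally fixed, full congruence then following by transitivity of UE-bisimilarity. This makes the ($||$ Comm) case essentially trivial: since $\mu_2$ is the \emph{same} distribution on both sides, $g(u)=\sum_v \mu_2(v)f(u||v)$ is an admissible invariant test function and a single application of the UE-transfer property for $(x,x')$ closes the case --- no convexity, no topology, which is why the paper can assert that the proof works for arbitrary (infinite) PNTS's. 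You instead put the two-sided pairs $(\iota(x_1,y_1),\iota(x_2,y_2))$ into $R$ and consequently must confront the bilinearity of the Comm contribution; your resolution --- $\Phi$ convex and continuous, sups of such functions invariant under passing to closed convex hulls, factorization through the quotient, and the characterization of Theorem \ref{def_convexclosed_bisim} giving $\overline{\textnormal{H}}(\alpha_{\bar b}(y_1))/{\sim}=\overline{\textnormal{H}}(\alpha_{\bar b}(y_2))/{\sim}$ --- is sound but heavier, and it silently ties the argument to Convention \ref{conv_1}, since Theorem \ref{def_convexclosed_bisim} is proved in the paper using finite-dimensionality; your closing claim that the argument does not rely on finiteness is therefore overstated (your step (ii) does not, but the invoked characterization does). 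Note also that the detour is avoidable inside your own framework: having observed that the transfer property is preserved under equivalence closure, you could chain $\iota(x_1,y_1)\mathrel{E}\iota(x_2,y_1)\mathrel{E}\iota(x_2,y_2)$ and verify only one-sided pairs, which recovers the paper's elementary argument. On the credit side, your organization --- take the equivalence closure of the generating pairs and check the transfer condition only on generators, against the fixed family of $E$-invariant tests --- is cleaner than the paper's, whose relation $R$ is asserted to be an equivalence relation but, as literally defined, is not transitive (a left-variation composed with a right-variation leaves it); and your Comm case is a nice illustration of the convex-geometric view of UE-bisimilarity that the paper develops but does not use here.
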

\begin{proof}
Let $E$ be the relation of UE-bisimilarity, i.e., the greatest UE-bisimulation on $X$ (cf. Theorem \ref{completeness_th_1}). 
In what follows we just write $x||y$ in place of $i(x,y)$. Define the relation $R\!=\!\big\{ (x  || y, x^{\prime}|| y), (y  || x, y || x^{\prime}) \ | \ (x,x^{\prime})\!\in\! E,\ y\!\in\!X \big\}$. Since $E$ is an equivalence relations so is $R$.  We prove the theorem by showing that $R$ is a UE-bisimulation and, as such, contained in $E$. We need to prove that for every $(z,z^{\prime})\!\in\! R$ and every $R$-invariant function $f\!:\!X\!\rightarrow\! \mathbb{R}$ the following two implications hold:
\begin{enumerate}[i]
\item if $z \freccia{a} \mu$ then for all $\epsilon >0$, there is $z^{\prime} \freccia{a}\nu$ such that $E_\mu(f) \leq E_\nu(f) + \epsilon$, 
\item if $z^{\prime} \freccia{a} \nu$ then for all $\epsilon >0$ there is $z \freccia{a}\mu$ such that  $E_\nu(f) \leq E_\mu(f) + \epsilon$, 
\end{enumerate}
for all $a\in L$. Note how we reformulated  $ue_{\alpha_{a}(z)}(f)\! =\! ue_{\alpha_{a}(z^\prime)}(f)$ as the equivalent conjunction of $(i)$ and $(ii)$.

We just consider the case $z\!=\!x||y$ and $z^{\prime}\!=\!x^{\prime}|| y$, with $(x,x^{\prime})\!\in\! E$, and show ($i$) above. The other cases are handled in a similar way. If $a\neq \tau$, we need to distinguish two cases:
\vspace{-1mm}
\begin{enumerate}
\item ($||$ R):  $y\freccia{a}\mu_1$ and $\mu= x || \mu_1$, and
\item ($||$ L): $x\freccia{a}\mu_1$ and $\mu= \mu_1 || y$
\end{enumerate}
\vspace{-1mm}
where we simply denoted with $x|| \mu_{1}$ and $\mu_{1}|| y$ the probability measures $\iota(\delta_{x}\times \mu_{1})$ and $\iota(\mu_{1}\times \delta_{y})$, respectively.

In the case  ($||$ R), choose $\nu$ as $\nu= x^{\prime}|| \mu_1$. It is then the case that $\mu( x|| y^{\prime})= \nu( x^{\prime}|| y^{\prime})$, for all $y^{\prime}\!\in\! X$. Since $(x,x^\prime)\!\in\! E$ and $f$ is $E$-invariant, the equality $f(x||y)\!=\!f(x^\prime||y)$ holds, for all $y\!\in\!X$. It then  follows that $E_\mu(f) = E_\nu (f)$.

 Consider now the case ($||$ L). Define $g_{y}\!:\!X\!\rightarrow\![0,1]$ as $g_{y}(z)\!=\!f(z||y)$. Note that $g_y$ is $E$-invariant because $f$ is $R$-invariant.  Since $(x,x^{\prime})\!\in\!E$, there exists some $x^{\prime}\freccia{a}\nu_1$ such that $E_{\mu_1} (g_{y})\! \leq\! E_{\nu_1}( g_{y})+\epsilon$. Take $\nu=\nu_1 \!\times\! \delta_y$ so that $x^\prime||y\rightarrow\nu$ holds. It is straightforward to verify that $E_{\mu_1} (g_y) \!= \! E_\mu( f)$ and, similarly, $E_{\nu_1} (g_y) \!= \! E_\nu( f)$. Thus the desired result follows.

If $a=\tau$, we need to consider, in addition to the cases ($||$ R) and  ($||$ L), the third case triggered by the rule ($||$ Comm). Thus assume 
\vspace{-1mm}
\begin{enumerate}
\item[3.] ($||$ Comm):  $x\freccia{a}\mu_{1}$, $y\freccia{\overline{a}}\mu_{2}$ and $\mu= \mu_{1} || \mu_{2}$.
\end{enumerate}
\vspace{-1mm}
Define $g\!:\!X\rightarrow[0,1]$ as $g(x)\!=\!\sum_y \mu_2(y)f(x||y)$. Note, as above, that $g$ is $E$-invariant because $f$ is $R$-invariant. Since $(x,x^{\prime})\!\in\!E$, there exists some $x^{\prime}\freccia{a}\nu_{1}$ such that $E_{\mu_1}( g) \!\leq\! E_{\nu_1}(g)+\epsilon$. As $\nu$ take the probability measure $\nu\!=\!\nu_{1} || \mu_{2}$. From definitions we have $E_{\mu_1}(g)\! = \!E_{\mu}( f)$ and $E_{\nu_1} (g) \!=\!E_\nu( f)$, as desired.
\end{proof}

\section{Generalization to Infinite Systems}\label{sec_infinity}
The important assumption of Convention \ref{conv_1}\,Êallowed a significant simplification of the results presented in this work. The theory can be generalized to cover infinite systems at the cost of topological complications. The generalized setting, however, allow one to fully appreciate the powerful mathematical set-up, based on linear algebra and functional analysis, on which the results of this work are developed. Here, we just sketch the main ideas.

One can define PNTS's on the category of compact Hausdorff topological spaces by replacing $\mathcal{P}$
with the functor $\mathcal{V}$ (mapping a space $X$ to the space $K(X)$ of its compact closed subsets, endowed with the Vietoris topology \cite{Kechris}) and $\mathcal{D}$ by $\mathcal{M}_{=1}$ (mapping a space $X$ to the space of probability measures on $X$, endowed with the $\textnormal{weak}^*\!$-topology \cite{Kechris}). Experiments on $X$ are now modeled by the space $C(X)$ of continuous functions $X\!\rightarrow\!\mathbb{R}$. This is a Riesz (Banach) space when endowed with the sup-norm \cite{LAX} and ordered pointwise. Both Theorem \ref{theorem_UE=convex} and the representation theorems of Section \ref{sec_representation_theorems} can be generalized to the new setting (see, e.g., \cite{Walley1991}). Similarly, counterparts of Theorem \ref{yosida_theorem} and Theorem \ref{metric_characterization} hold for general compact Hausdorff spaces (see, e.g., \cite{Luxemburgh,JVR1977} and \cite{LAX}). The results of Section \ref{real_valued_section} can then be proved \emph{mutatis mutandis}.

\bibliographystyle{abbrv}
\bibliography{biblio}

\end{document}